\newif\ifFull
\newif\ifCameraReady
\newcommand{\removelatexerror}{\let\@latex@error\@gobble}
\definecolor{ao}{rgb}{0.0, 0.5, 0.0}
\newtheorem{counter}{Theorem}[section]
\newtheorem{definition}[counter]{Definition}
\theoremstyle{plain}
\newtheorem{theorem}{Theorem}[section]
\newtheorem{definition}[theorem]{Definition}
\newtheorem{lemma}[theorem]{Lemma}
\newtheorem{corollary}[theorem]{Corollary}
\newcommand{\mysubsection}[1]{\smallskip\noindent {\bf #1.}}
\newcommand{\defn}[1]{\textbf{\emph{#1}}}
\newcommand{\op}{\omega_p}
\newcommand{\pmmdepth}[1]{\log{#1}}
\newcommand{\pmmspace}[1]{#1^{\op}}
\newcommand{\floor}[1]{\lfloor{#1\rfloor}}
\titlespacing{\section}{3pt}{2pt}{2pt}
\titlespacing{\subsection}{2pt}{1pt}{1pt}
\titlespacing{\subsubsection}{2pt}{*0}{*0}
\newcommand{\batch}{\Delta}
\newcommand{\high}{\mathcal{H}}
\newcommand{\low}{\mathcal{L}}
\newcommand{\hl}{\mathcal{HL}}
\newcommand{\hh}{\mathcal{HH}}
\newcommand{\lh}{\mathcal{LH}}
\newcommand{\lowl}{\mathcal{LL}}
\newcommand{\degarray}{\mathcal{D}}
\newcommand{\cT}{\mathcal{T}}
\newcommand{\id}{\mathsf{ID}}
\newcommand{\myparagraph}[1]{\smallskip\noindent {\bf #1.}}
\newcommand{\tridep}{\log^*(\batch + m)}
\newcommand{\scanpram}[0]{\emph{scan} \ensuremath{\mathsf{PRAM}}\renewcommand{\scanpram}[0]{scan \ensuremath{\mathsf{PRAM}}}}
\newcommand{\arbpram}[0]{\emph{arbitrary} CRCW \ensuremath{\mathsf{PRAM}}\renewcommand{\arbpram}[0]{arbitrary \ensuremath{\mathsf{PRAM}}}}
\newcommand{\pripram}[0]{\emph{priority} CRCW \ensuremath{\mathsf{PRAM}}\renewcommand{\pripram}[0]{priority CRCW \ensuremath{\mathsf{PRAM}}}}
\newcommand{\combpram}[0]{\emph{combining} CRCW \ensuremath{\mathsf{PRAM}}\renewcommand{\combpram}[0]{combining CRCW \ensuremath{\mathsf{PRAM}}}}
\newcommand{\multipram}[0]{\emph{multiprefix} CRCW \ensuremath{\mathsf{PRAM}}\renewcommand{\multipram}[0]{multiprefix CRCW \ensuremath{\mathsf{PRAM}}}}
\newcommand{\batchdynamic}[1]{batch-dynamic}
\definecolor{mygreen}{RGB}{20,140,80}
\definecolor{linkcolor}{RGB}{0,0,230}
\definecolor{mylightgray}{RGB}{230,230,230}
\definecolor{verylightgray}{RGB}{245,245,245}
\newcounter{myalgctr}
\newtcolorbox{OuterBox}[1][]{

    breakable,
    enhanced,
    frame hidden,
    interior hidden,
    left=-5pt,
    right=-5pt,
    top=-5pt,
    float=p,
    boxsep=0pt,
    arc=0pt
#1}
\newtcolorbox{InnerBox}[1][]{

    enforce breakable,
    enhanced,
    colback=gray,
    colframe=white,
#1}
\newenvironment{tbox}{
\vspace{0.2cm}
\begin{tcolorbox}[width=\textwidth,
                  enhanced,
                  boxsep=2pt,
                  left=1pt,
                  right=1pt,
                  top=4pt,
                  boxrule=1pt,
                  arc=0pt,
                  colback=white,
                  colframe=black,
	              breakable
                  ]

}{
\end{tcolorbox}
}
\newcommand{\tboxhrule}[0]{\vspace{0.1cm} {\color{black} \hrule} \vspace{0.2cm}}
\newenvironment{titledtbox}[1]{\begin{tbox}#1 \tboxhrule}{\end{tbox}}
\newenvironment{mdframedalg}[1]{\begin{algorithm}[htpb]\caption{#1}}{\end{algorithm}}
\newenvironment{breakablealgorithm}
  {

   \begin{center}
     \refstepcounter{algorithm}

     \hrule height.8pt depth0pt \kern2pt

     \renewcommand{\caption}[2][\relax]{

       {\raggedright\textbf{\fname@algorithm~\thealgorithm} ##2\par}

       \ifx\relax##1\relax 

         \addcontentsline{loa}{algorithm}{\protect\numberline{\thealgorithm}##2}

       \else 

         \addcontentsline{loa}{algorithm}{\protect\numberline{\thealgorithm}##1}

       \fi
       \kern2pt\hrule\kern2pt
     }
  }{

     \kern2pt\hrule\relax

   \end{center}
  }
\newcommand{\algorithmfootnote}[2][\footnotesize]{

  \let\old@algocf@finish\@algocf@finish

  \def\@algocf@finish{\old@algocf@finish

    \leavevmode\rlap{\begin{minipage}{\linewidth}
    #1#2
    \end{minipage}}

  }

}
\begin{document}
\sloppy
\ifFull
\pagestyle{plain}
\fi
\date{}

\title{Parallel Batch-Dynamic $k$-Clique Counting}

\author{Laxman Dhulipala \\ MIT CSAIL \\ laxman@mit.edu \and
    Quanquan C. Liu \\ MIT CSAIL \\ quanquan@mit.edu \and
    Julian Shun \\ MIT CSAIL \\ jshun@mit.edu \and
    Shangdi Yu \\  MIT CSAIL \\ shangdiy@mit.edu}

\maketitle

\ifCameraReady
\fancyfoot[R]{\scriptsize{Copyright \textcopyright\ 2021 by SIAM\\
Copyright for this paper is retained by authors.}}
\fi

\begin{abstract}
In this paper, we study new batch-dynamic algorithms for the
$k$-clique counting problem, which are dynamic algorithms where the
updates are \emph{batches} of edge insertions and deletions. We study
this problem in the parallel setting, where the goal is to obtain
algorithms with low (polylogarithmic) depth. Our first result is a new
parallel batch-dynamic triangle counting algorithm with
$O(\Delta\sqrt{\Delta+m})$ amortized work and $O(\log^* (\Delta+m))$
depth with high probability, and $O(\Delta+m)$ space for a batch of
$\Delta$ edge insertions or deletions.  Our second result is an
algebraic algorithm based on parallel fast matrix multiplication.
Assuming that a parallel fast matrix multiplication algorithm exists
with parallel matrix multiplication constant $\omega_p$, the same
algorithm solves dynamic $k$-clique counting with
$O\left(\min\left(\batch m^{\frac{(2k - 1)\op}{3(\op + 1)}},
(\batch+m)^{\frac{2(k + 1)\op}{3(\op + 1)}}\right)\right)$ amortized
work and $O(\log (\Delta+m))$ depth with high probability, and
$O\left((\batch+m)^{\frac{2(k + 1)\op}{3(\op + 1)}}\right)$ space.
Using a recently developed parallel $k$-clique counting algorithm, we
also obtain a simple batch-dynamic algorithm for $k$-clique counting
on graphs with arboricity $\alpha$ running in
$O(\Delta(m+\Delta)\alpha^{k-4})$ expected work and $O(\log^{k-2} n)$
depth with high probability, and $O(m + \Delta)$ space.  Finally, we
present a multicore CPU implementation of our parallel batch-dynamic
triangle counting algorithm. On a 72-core machine with two-way
hyper-threading, our implementation achieves 36.54--74.73x parallel speedup,
and in certain cases achieves significant speedups over existing parallel algorithms for the problem, which are not theoretically-efficient.

\end{abstract}

\section{Introduction}

Subgraph counting algorithms are fundamental graph analysis tools,
with numerous applications in network classification in domains
including social network analysis and bioinformatics. A particularly
important type of subgraph for these applications is the triangle, or
$3$-clique---three vertices that are all mutually
connected~\cite{newman2003structure}. Counting the number of
triangles is a basic and fundamental task that is used in numerous
social and network science measurements~\cite{granovetter1977strength,
watts1998collective}.

In this paper, we study the triangle counting problem and its
generalization to higher cliques from the perspective of dynamic
algorithms. A $k$-clique consists of $k$ vertices and all $k
\choose{2}$ possible edges among them (for applications of
$k$-cliques, see, e.g.,~\cite{hanneman05-introduction}). As many
real-world graphs change rapidly in real-time, it is crucial to design
dynamic algorithms that efficiently maintain $k$-cliques upon updates,
since the cost of re-computation from scratch can be prohibitive.
Furthermore, due to the fact that dynamic updates can occur at a rapid
rate in practice, it is increasingly important to design
\defn{\batchdynamic{}} algorithms which can take arbitrarily large
batches of updates (edge insertions or deletions) as their input.
Finally, since the batches, and corresponding update complexity can be
large, it is also desirable to use parallelism to speed-up maintenance
and design algorithms that map to modern parallel architectures.

Due to the broad applicability of $k$-clique counting in practice and
the fact that $k$-clique counting is a fundamental theoretical problem
of its own right, there has been a large body of prior work on the
problem. Theoretically, the fastest static algorithm for arbitrary
graphs uses fast matrix multiplication, and counts $3\ell$ cliques in
$O(n^{\ell\omega})$ time where $\omega$ is the matrix multiplication
exponent~\cite{nevsetvril1985complexity}. Considerable effort has also
been devoted to efficient combinatorial algorithms.  Chiba and
Nishizeki~\cite{Chiba1985} show how to compute $k$-cliques in
$O(\alpha^{k-2}m)$ work, where $m$ is the number of edges in the graph
and $\alpha$ is the arboricity of the graph. This algorithm was
recently parallelized by Danisch et al.~\cite{Danisch2018} (although
not in polylogarithmic depth). Worst-case optimal join algorithms can
perform $k$-clique counting in $O(m^{k/2})$ work as a special
case~\cite{Ngo2018,AbergerLTNOR17}. Alon, Yuster, and Zwick~\cite{AYZ97}
design an algorithm for triangle counting in the sequential model,
based on fast matrix multiplication.
Eisenbrand and Grandoni~\cite{EG04} then extend this result to
$k$-clique counting based on fast matrix multiplication.
Vassilevska designs
a space-efficient combinatorial algorithm for $k$-clique
counting~\cite{vassilevska2009efficient}.  Finocchi et al.\  give
clique counting algorithms for MapReduce~\cite{Finocchi2015}.  Jain
and Seshadri provide probabilistic algorithms for estimating clique
counts~\cite{Jain2017}. The $k$-clique problem is also a classical
problem in parameterized-complexity, and is known to be
$W[1]$-complete~\cite{downey1995fixed}.

The problem of maintaining $k$-cliques under dynamic updates began
more recently. Eppstein et al.~\cite{Eppstein2009,Eppstein2012} design
sequential dynamic algorithms for maintaining size-3 subgraphs in
$O(h)$ amortized time and $O(mh)$ space and size-4 subgraphs in
$O(h^2)$ amortized time and $O(mh^2)$ space, where $h$ is the
$h$-index of the graph ($h=O(\sqrt{m})$). Ammar et al. extend the
worst-case optimal join algorithms to the parallel and dynamic
setting~\cite{Ammar2018}. However, their update time is not better
than the static worst-case optimal join algorithm.  Recently, Kara et
al.~\cite{KNNOZ19} present a sequential dynamic algorithm for
maintaining triangles in $O(\sqrt{m})$ amortized time and $O(m)$
space. Dvorak and Tuma~\cite{Dvorak2013} present a dynamic algorithm
that maintains $k$-cliques as a special case in $O(\alpha^{k-2} \log
n)$ amortized time and $O(\alpha^{k-2} m)$ space by using
low out-degree orientations for graphs with arboricity $\alpha$.

\myparagraph{Designing Parallel Batch-Dynamic Algorithms}
Traditional dynamic algorithms receive and apply updates one at a
time. However, in the \defn{parallel \batchdynamic{}} setting, the
algorithm receives \emph{batches of updates} one after the other,
where each batch contains a mix of edge insertions and deletions.
Unlike traditional dynamic algorithms, a parallel batch-dynamic
algorithm can apply \emph{all} of the updates together, and also take
advantage of parallelism while processing the batch.
We note that the edges inside of a batch may also be ordered (e.g., by
a timestamp). If there are duplicate edge insertions within a batch,
or an insertion of an edge followed by its deletion, a batch-dynamic
algorithm can easily remove such redundant or nullifying updates.

The key challenge is to design the algorithm so that updates can be
processed in parallel while ensuring low work and depth bounds.
The only existing parallel batch-dynamic algorithms for $k$-clique counting are triangle counting algorithms by
Ediger et al.~\cite{Ediger2010} and Makkar et
al.~\cite{Makkar2017}, which take linear work per update in the worst case.
The
algorithms in this paper make use of efficient data structures such as
parallel hash tables, which let us perform parallel batches of edge insertions
and deletions  with better work and (polylogarithmic) depth bounds.
To the best of our knowledge, no prior work has designed dynamic algorithms for the problem that support parallel batch updates with non-trivial theoretical guarantees.

Theoretically-efficient parallel dynamic (and batch-dynamic)
algorithms have been designed for a variety of other graph problems,
including minimum spanning
tree~\cite{Kopelowitz2018,Ferragina1994,Das1994}, Euler tour
trees~\cite{TsengDB19},
connectivity~\cite{Simsiri2018,Acar2019,Ferragina1994}, tree
contraction~\cite{Reif1994,Acar2017}, and depth-first
search~\cite{Khan2017}. Very recently, parallel dynamic algorithms
were also designed for the Massively Parallel Computation (MPC)
setting~\cite{italiano2019dynamic, dhulipala2020parallel}.

\ifFull
\myparagraph{Other Related Work}
There has been significant amount of
work on practical parallel algorithms for the case of static 3-clique
counting, also known as triangle counting.
(e.g.,~\cite{Suri2011,Arifuzzaman2013,Park2013,Park14, ShunT2015},
among many others).
Due to the importance of the problem, there is
even an annual competition for parallel triangle counting
solutions~\cite{GraphChallenge}.
Practical static counting algorithms for the special cases of
$k=4$ and $k=5$ have also been
developed~\cite{Hocevar2014,Elenberg2016,Pinar2017,AhmedNRDW17,Dave2017}.

Dynamic algorithms have been studied in
distributed models of computation under the framework of
\emph{self-stabilization}~\cite{schneider1993self}. In this setting, the system
undergoes various changes, for example topology changes, and must
quickly converge to a stable state. Most of the existing work in this
setting focuses on a single change per round~\cite{censor2016optimal,
bonne19distributedclique, assadi2019fully}, although
algorithms studying multiple changes per round have been considered
very recently~\cite{bamberger2019local, censor2019fast}.
Understanding
how these algorithms relate to parallel \batchdynamic{} algorithms is
an interesting question for future work.
\fi

\myparagraph{Summary of Our Contributions}
In this paper, we design parallel algorithms in the \batchdynamic{}
setting, where the algorithm receives a batch of $\Delta \geq 1$ edge
updates that can be processed in parallel.
Our focus is on parallel \batchdynamic{}
algorithms that admit strong theoretical bounds on their work and
have polylogarithmic depth with high probability. Note that although
our work bounds may be amortized, our depth will be polylogarithmic
with high probability, leading to efficient $\mathsf{RNC}$ algorithms.
As a special case of our results, we obtain algorithms for
parallelizing single updates ($\Delta=1$). We first design a parallel
\batchdynamic{} triangle counting algorithm based on the sequential
algorithm of Kara et al.~\cite{KNNOZ19}. For triangle counting, we
obtain an algorithm that takes $O(\Delta\sqrt{\Delta+m})$ amortized
work and $O(\log^* (\Delta+m))$ depth w.h.p.\footnote{We use ``with high probability''
  (w.h.p.) to mean with probability at least $1-1/n^c$ for any constant
  $c>0$.}
assuming a fetch-and-add instruction that runs in $O(1)$ work and
depth, and runs in $O(\Delta+m)$ space.
The work of our parallel algorithm matches that of the sequential algorithm of performing one update at a time (i.e., it is work-efficient), and we can perform all updates in parallel with low depth.

We then present a new parallel \batchdynamic{} algorithm based on
fast matrix multiplication. Using the best currently known parallel
matrix multiplication~\cite{Williams12,LeGall14}, our algorithm
dynamically maintains the number of $k$-cliques in
$O\left(\min\left(\batch m^{0.469k - 0.235}, (\batch+m)^{0.469k +
0.469}\right)\right)$ amortized work w.h.p.\ per batch of $\batch$ updates
where $m$ is defined
as the maximum number of edges in the graph before and after
all updates in the batch are applied. Our approach
is based on the algorithm of~\cite{AYZ97,EG04,nevsetvril1985complexity},
and maintains triples of
$k/3$-cliques that together form $k$-cliques. The depth is $O(\log
(\Delta+m))$ w.h.p.\ and the space is $O\left((\batch+m)^{0.469k +
0.469}\right)$. Our results also imply an amortized time bound of
$O\left(m^{0.469k - 0.235}\right)$ per update for dense graphs in the
sequential setting.
Of potential independent interest, we present the first proof of logarithmic
depth in the parallelization of any tensor-based fast matrix multiplication
algorithms.
We also give a simple
batch-dynamic $k$-clique listing algorithm, based on enumerating
smaller cliques and intersecting them with edges in the batch. The
algorithm runs in $O(\Delta(m+\Delta)\alpha^{k-4})$ expected work,
$O(\log^{k-2}n)$ depth w.h.p., and $O(m + \batch)$ space.

Finally, we implement our new parallel batch-dynamic triangle counting
algorithm for multicore CPUs, and present some experimental results on large graphs and
with varying batch sizes using a 72-core machine with two-way hyper-threading.
We found our parallel implementation to be much faster than the
multicore implementation of Ediger et al.~\cite{Ediger2010}. We also
developed an optimized multicore implementation of the GPU algorithm
by Makkar et al.~\cite{Makkar2017}. We found that our new algorithm is
up to an order of magnitude faster than our CPU implementation of the
Makkar et al.\ algorithm, and our new algorithm achieves 36.54--74.73x
parallel speedup on 72 cores with hyper-threading.  Our code is
publicly available at {\small \url{https://github.com/ParAlg/gbbs}}.

\section{Preliminaries}\label{sec:prelims}
Given an undirected graph $G = (V, E)$ with $n$ vertices and $m$
edges, and an integer $k$, a \defn{$k$-clique} is defined as a set
of $k$ vertices $v_1,\ldots,v_k$ such that for all $i\neq j$, $(v_i,
v_j) \in E$. The \defn{$k$-clique count} is the total number of
$k$-cliques in the graph. The \defn{dynamic $k$-clique problem}
maintains the number of $k$-cliques in the graph upon edge insertions
and deletions, given individually or in a batch.  The
\defn{arboricity} $\alpha$ of a graph is the minimum number of
forests that the edges can be partitioned into and its value is
between $\Omega(1)$ and $O(\sqrt{m})$~\cite{Chiba1985}.

In this paper, we analyze algorithms in the work-depth model, where
the \defn{work} of an algorithm is defined to be the total number of
operations done, and the \defn{depth} is defined to be the longest
sequential dependence in the computation (or the computation time
given an infinite number of processors)~\cite{JaJa92}.
Our algorithms
can run in the PRAM model or the fork-join model with arbitrary forking.
We use the
concurrent-read concurrent-write (CRCW) model, where reads and writes
to a memory location can happen concurrently.  We assume either that
concurrent writes are resolved arbitrarily, or are reduced together
(i.e., fetch-and-add PRAM).

We use the following primitives throughout the paper.
\defn{Approximate compaction} takes a set of $m$ objects in the
range $[1, n]$ and allocates them unique IDs in the range $[1, O(m)]$.
The primitive is useful for filtering (i.e., removing) out a set of
obsolete elements from an array of size $n$, and mapping the remaining
$m$ elements to a sparse array of size $O(m)$. Approximate
compaction can be implemented in $O(n)$ work and $O(\log^* n)$
depth w.h.p.~\cite{Gil91a}.  We also use a \defn{parallel hash table} which
supports $n$ operations (insertions, deletions) in $O(n)$ work and
$O(\log^* n)$ depth w.h.p., and $n$ lookup operations
in $O(n)$ work and $O(1)$ depth~\cite{Gil91a}.

Our algorithms in this paper make use of the widely used
\defn{atomic-add} instruction. An atomic-add instruction takes a
memory location and atomically increments the value stored at the
location. In this paper, we assume that the atomic-add instruction can
be implemented in $O(1)$ work and depth.  Our algorithms can also be
implemented in a model without atomic-add in the same work,
a multiplicative $O(\log n)$ factor increase in the depth, and space
proportional to the number of atomic-adds done in parallel.

\ifFull
\section{Technical Overview}

In this section, we present a high-level technical overview of our
approach in this paper.

\ifCameraReady
\subsection{Parallel Batch-Dynamic Triangle Counting}~
\fi
\ifFull
    \subsection{Parallel Batch-Dynamic Triangle Counting}
\fi

Our parallel \batchdynamic{} triangle counting algorithm is based on a
recently proposed sequential dynamic algorithm due to Kara et
al.~\cite{KNNOZ19}. They describe their algorithm in the database
setting, in the context of dynamically maintaining the result of a
database join. We provide a self-contained description of their
sequential algorithm in Appendix~\ref{app:triangle}.

\myparagraph{High-Level Approach} The basic idea of the algorithm
from~\cite{KNNOZ19} is to partition the vertex set using degree-based
thresholding. Roughly, they specify a threshold $t=\Theta(\sqrt{m})$,
and classify all vertices with degree less than $t$ to be low-degree,
and all vertices with degree larger than $t$ to be high-degree. This
thresholding technique is widely used in the design of fast static
triangle counting and $k$-clique counting algorithms,
(e.g.,~\cite{nevsetvril1985complexity, AYZ97}). Observe that if we
insert an edge $(u,v)$ incident to a low-degree vertex, $u$, we can
enumerate all vertices $w$ in $N(u)$ in $O(\sqrt{m})$ expected time and check
if $(u,v,w)$ forms a triangle (checking if the $(v,w)$ edge is present
in $G$ can be done by storing all edges in a hash table). In this way,
edge updates incident to low-degree vertices are handled relatively
simply. The more interesting case is how to handle edge updates
between high-degree vertices. The main problem is that a single edge
insertion $(u,v)$ between two high-degree vertices can cause up to
$O(n)$ triangles to appear in $G$, and enumerating all of these would
require $O(n)$ work---potentially much more than
$O(\sqrt{m})$. Therefore, the algorithm maintains an auxiliary data
structure, $\cT$, over wedges ($2$-paths). $\cT$ stores for every pair
of high-degree vertices $(v,w)$, the number of low-degree vertices $u$
that are connected to both $v$ and $w$ (i.e., $(u,v)$ and $(u,w)$ are
both in $E$). Given this structure, the number of triangles formed by
the insertion of the edge $(v,w)$ going between two high-degree
vertices can be found in $O(1)$ time by checking the count for $(v,w)$
in $\cT$. Updates to $\cT$ can be handled in $O(\sqrt{m})$ time, since
$\cT$ need only be updated when a low-degree vertex inserts/deletes a
neighbor, and the number of entries in $\cT$ that are affected is at
most $t$. Some additional care needs to be taken when
specifying the threshold $t$ to handle re-classifying vertices (going
from low-degree to high-degree, or vice versa), and also to handle
rebuilding the data structures, which leads to a bound of
$O(\sqrt{m})$ amortized work per update for the algorithm.

\myparagraph{Incorporating Batching and Parallelism} The input to the
parallel \batchdynamic{} algorithm is a batch containing (possibly) a
mix of edge insertions and deletions (vertex insertions and deletions
can be handled by inserting or deleting its incident edges). For
simplicity, and without any loss in our asymptotic bounds, our
algorithm handles insertions and deletions separately. The algorithm
first removes all \emph{nullifying} updates, which are updates that
have no effect after applying the entire batch (i.e., an insertion which
is subsequently deleted within the same batch, an insertion of an edge
that already exists or a deletion of an edge that doesn't exist). This
can easily be done within the bounds using basic parallel primitives.
The algorithm then updates tables representing the adjacency
information of both low-degree and high-degree vertices in parallel.
To obtain strong parallel bounds, we represent these sets using
parallel hash tables. For each insertion (deletion), we then determine
the number of new triangles that are created (deleted). Since a given triangle
could incorporate multiple edges within the same batch of insertions
(deletions), our algorithm must carefully ensure that the triangle is
counted only once, assigning each new inserted (deleted) triangle
uniquely to one of the updates forming it. We then update the overall
triangle count with the number of distinct triangles inserted
(deleted) into the graph by the current batch of insertions
(deletions). The remaining work of the algorithm cleans up mutable
state in the
hash tables, and also migrates vertices between low-degree and
high-degree states.

\myparagraph{Worst-Case Optimality}
Our work bounds match the combinatorial lower
bound obtained via a fine-grained reduction from triangle detection which is
conjectured to take $m^{3/2 - o(1)}$ work (by the
\emph{Strong Triangle conjecture} of~\cite{AW14} for combinatorial algorithms). The combinatorial
lower bound for the Strong Triangle conjecture is based on the standard
lower bound conjecture for combinatorial algorithms that solve Boolean Matrix Multiplication
(BMM).
Our reduction proceeds as follows.
Given any input graph to the triangle detection problem, we divide the edges
into batches of edge insertions arbitrarily without knowledge of the existence
of (any) triangles.
Then, the batches of updates are
applied one after the other. Suppose the amortized work per update
for this procedure is $O(X)$. Then, the total work for applying all the batches of
updates is $O(Xm)$. The algorithm returns the count of the number of triangles
in the graph after applying all batches of updates.
In this case, the algorithm when run over all the batches
solves the static problem of triangle detection in the original input graph. If
the number of triangles counted by the algorithm after the last batch is $0$,
then there does not exist a triangle in the original input graph; otherwise,
there exists a triangle in the original input graph.
If $X = m^{1/2 - \Omega(1)}$, then we violate the Strong Triangle conjecture.
Thus, our work bound is conditionally optimal up to sub-polynomial factors
by the Strong Triangle conjecture.

It is an interesting open question to consider whether one can
obtain $O(1)$ depth bounds
on the CRCW PRAM.

\ifCameraReady
\subsection{Dynamic $k$-Clique Counting via Fast Static Parallel
  Algorithms}~
\fi
\ifFull
\subsection{Dynamic $k$-Clique Counting via Fast Static Parallel
  Algorithms}
\fi

Next, we present a very simple, and potentially practical
algorithm for dynamically maintaining the number of $k$-cliques based
on statically enumerating smaller cliques in the graph, and
intersecting the enumerated cliques with the edge updates in the input
batch. The algorithm is space-efficient, and is asymptotically more
efficient than other methods for sparse graphs. Our algorithm is based
on a recent and concurrent work proposing a work-efficient parallel
algorithm for counting $k$-cliques in  $O(m\alpha^{k-2})$ expected work and
polylogarithmic depth w.h.p.~\cite{shi2020parallel}. Using this algorithm, we
show that updating the $k$-clique count for a batch of $\Delta$
updates can be done in $O(\Delta(m+\Delta)\alpha^{k-4})$ expected work, and
  $O(\log^{k-2} n)$ depth w.h.p., using $O(m + \batch)$
  space.
We do this by using the static
algorithm to (i) enumerate all $(k-2)$-cliques, and (ii) checking
whether each $(k-2)$-clique forms a $k$-clique with an edge in the
batch.

\ifCameraReady
\subsection{Dynamic $k$-Clique via Fast Matrix Multiplication}~
\fi
\ifFull
\subsection{Dynamic $k$-Clique via Fast Matrix Multiplication}
\fi

We then present a parallel batch-dynamic $k$-clique counting algorithm
using parallel fast matrix multiplication (MM).
Our algorithm is inspired
by the static triangle counting algorithm of Alon, Yuster, and Zwick
(AYZ)~\cite{AYZ97} and the static $k$-clique counting algorithm
of~\cite{EG04} that uses MM-based triangle counting.  We present a new
batch-dynamic algorithm that obtains better bounds than the simple algorithm
based on static smaller-clique enumeration above (and also presented in
Section~\ref{sec:arboricityclique}) for $k > 9$. To the best of our
knowledge, this is also the best bound for
dynamic triangle counting on dense graphs in the sequential model.
Specifically, assuming a parallel matrix multiplication exponent of
$\omega_p$, our algorithm handles batches of $\Delta$ edge
insertions/deletions using $O\left(\min\left(\Delta m^{\frac{(2k -
    3)\op}{3(1+\op)}}, (m +
\batch)^{\frac{2k\op}{3(1+\op)}}\right)\right)$ work and $O(\log m)$
depth w.h.p., in $O\left((m + \batch)^{\frac{2k\op}{3(1+\op)}}\right)$
space, where $m$ is the number of edges in the graph before
applying the batch of updates.
To the best of our knowledge, the sequential (batch-dynamic)
version of our algorithm also provides the best bounds for dynamic
triangle counting in the sequential model for dense graphs for such
values of $k$ (assuming that we use the best currently known matrix
multiplication algorithm)~\cite{Dvorak2013}.

\myparagraph{High-Level Approach and Techniques} For a given graph $G
= (V, E)$, we create an auxiliary graph $G' = (V', E')$ with vertices
and edges representing cliques of various sizes in $G$. For a given
$k$-clique problem, vertices in $V'$ represent cliques of size $k/3$
in $G$ and edges $(u, v)$ between vertices $u, v \in V'$ represent
cliques of size $2k/3$ in $G$. Thus, a triangle in $G'$ represents a
$k$-clique in $G$. Specifically, there exists exactly ${k \choose
  k/3}{2k/3 \choose k/3}$ different triangles in $G'$ for each clique
in $G$.

Given a batch of edge insertions and deletions to $G$, we create a set
of edge insertions and deletions to $G'$. An edge is inserted in $G'$
when a new $2k/3$-clique is created in $G$ and an edge is deleted in
$G'$ when a $2k/3$-clique is destroyed in $G$.  Suppose, for now,
that we have a dynamic algorithm for processing the edge
insertions/deletions into $G'$.  Counting the number of triangles in
$G'$ after processing all edge insertions/deletions and
dividing by ${k \choose k/3}{2k/3 \choose k}$ provides us with the
exact number of cliques in $G$.

There are a number of challenges that we must deal with when
formulating our dynamic triangle counting algorithm for counting the
triangles in $G'$:
\begin{enumerate}
\item We cannot simply count all the triangles in $G'$ after
  inserting/deleting the new edges as this does not perform better
  than a trivial static algorithm.
\item Any trivial dynamization of the AYZ algorithm will not be able
  to detect all new triangles in $G'$. Specifically, because the AYZ
  algorithm counts all triangles containing a low-degree vertex
  separately from all triangles containing only high-degree vertices,
  if an edge update only occurs between high-degree vertices, a
  trivial dynamization of the algorithm will not be able to detect any
  triangle that the two high-degree endpoints make with low-degree
  vertices.
\end{enumerate}

To solve the first challenge, we dynamically count \emph{low-degree}
and \emph{high-degree} vertices in different ways. Let $\ell=k/3$ and
$M = 2m + 1$. For some value of $0<t<1$, we define \emph{low-degree}
vertices to be vertices that have degree less than $M^{t\ell}/2$ and
\emph{high-degree} vertices to have degree greater than
$3M^{t\ell}/2$.  Vertices with degrees in the range $[M^{t\ell}/2,
  3M^{t\ell}/2]$ can be classified as either low-degree or
high-degree.  We determine the specific value for $t$ in
Lemma~\ref{lem:t-value}. We perform rebalancing of the data structures
as needed as they handle more updates. For low-degree vertices, we
only count the triangles that include at least one newly
inserted/deleted edge, at least one of whose endpoints is low-degree.
This means that we do not need to count any pre-existing triangles
that contain at least one low-degree vertex.  For the high-degree
vertices, because there is an upper bound on the maximum number of
such vertices in the graph, we update an adjacency matrix $A$
containing only edges between high-degree vertices.  At the end of all
of the edge updates, computing $A^3$ gives us a count of all of the
triangles that contain three high-degree vertices.

This procedure immediately then leads to our second challenge. To
solve this second challenge, we make the observation (proven in
Lemma~\ref{lem:one-low-high}) that if there exists an edge update
between two high-degree vertices that creates or destroys a triangle
that contains a low-degree vertex in $G'$, then there \emph{must}
exist at least one new edge insertion/deletion \emph{that creates or
  destroys a triangle representing the same clique} to that low-degree
vertex in the same batch of updates to $G'$. Thus, we can use one of
those edge insertions/deletions to determine the new clique that was
created and, through this method, find all triangles containing at
least one low-degree vertex and at least one new edge update. Some
care must be observed in implementing this procedure in order to not
increase the runtime or space usage; such details can be found in
Section~\ref{sec:alg-overview}.

\myparagraph{Incorporating Batching and Parallelism} When
dealing with a batch of updates containing both edge insertions and
deletions, we must be careful when vertices switch from being
high-degree to being low-degree, and vice versa. If we intersperse the
edge insertions with the edge deletions, there is the possibility that
a vertex switches between low and high-degree multiple times in a
single batch.  Thus, we batch all edge deletions together and perform
these updates first before handling the edge insertions.  After
processing the batch of edge deletions, we must subsequently move any
high-degree vertices that become low-degree to their correct data
structures. After dealing with the edge insertions, we must similarly
move any low-degree vertices that become high-degree to the correct
data structures.  Finally, for triangles that contain more than one
edge update, we must account for potential double counting by
different updates happening in parallel.  Such challenges are
described and dealt with in Section~\ref{sec:alg-overview} and
Algorithm~\ref{alg:mmclique}.

\subsection{Implementation and Experimental Evaluation}
We present an optimized implementation of our new parallel
batch-dynamic triangle counting algorithm using parallel primitives
from the Graph Based Benchmark Suite
(GBBS)~\cite{dhulipala2018theoretically}, and concurrent hash tables~\cite{shun2014phase}
to represent our data structures.  We ran experiments on varying batch
sizes for both insertions and deletions for several large graphs (the
Orkut and Twitter graphs, as well as rMAT graphs of varying densities)
using a 72-core machine with two-way hyper-threading, and obtained
parallel speedups of between 36.54--74.73x.  We also compared our
performance to the algorithms by Ediger et al.~\cite{Ediger2010} and
Makkar et al.~\cite{Makkar2017} (we note that Makkar et al.\ provide a
GPU implementation, and we implemented a multicore CPU version of
their algorithm), which take linear work per update in the worst
case. We found that our Makkar et al.\ implementation outperformed the
multicore implementation by Ediger et al.  Furthermore, our new
algorithm achieves significant speedups (up to an order of magnitude)
over the Makkar et al.\ implementation on graphs with high-degree
vertices (the Twitter graph and dense rMAT graphs), as well as on
smaller batch sizes. In contrast, the Makkar et al.\ implementation
outperforms our new algorithm for the smaller Orkut graph, which does
not contain vertices with very high degree. These results are
consistent with the theoretical bounds of the algorithms---the work
per update of our algorithm is $O(\sqrt{m})$, whereas the work per
update of the Makkar et al.\ algorithm is linear in the degrees of the
affected vertices.

\fi

\newcommand{\ins}{\mathtt{insert}}
\newcommand{\del}{\mathtt{delete}}
\newcommand{\ctriang}[3]{\mathtt{count\_triangles(#1, #2, #3)}}
\newcommand{\minreb}[1]{\mathtt{minor\_rebalance(#1)}}
\newcommand{\remove}[1]{\mathtt{remove\_useless\_updates}(#1)}
\newcommand{\batchset}{\mathcal{B}}
\newcommand{\markins}[1]{\mathtt{mark\_inserted\_edges}(#1)}
\newcommand{\markdels}[1]{\mathtt{mark\_deleted\_edges}(#1)}
\newcommand{\tins}[1]{\mathtt{update\_table\_insertions}(#1)}
\newcommand{\tdel}[1]{\mathtt{update\_table\_deletions}(#1)}

\newcommand{\tinsnoarg}{\mathtt{update\_table\_insertions}}
\newcommand{\tdelnoarg}{\mathtt{update\_table\_deletions}}
\newcommand{\tup}[1]{t^{(u, v)}_{#1}}

\algblock{ParFor}{EndParFor}
\algnewcommand\algorithmicparfor{\textbf{parfor}}
\algnewcommand\algorithmicpardo{\textbf{do}}
\algnewcommand\algorithmicendparfor{}
\algrenewtext{ParFor}[1]{\algorithmicparfor\ #1\ \algorithmicpardo}
\algrenewtext{EndParFor}{\algorithmicendparfor}
\algtext*{EndParFor}{}

\section{Parallel Batch-Dynamic Triangle Counting}\label{sec:batch-triangle}
\ifCameraReady
    In this section, we present our
\fi
\ifFull
    We now present our
\fi
parallel \batchdynamic{} triangle counting
algorithm, which is based on the $O(m)$ space and $O(\sqrt{m})$
amortized update, sequential, dynamic algorithm of Kara et
al.~\cite{KNNOZ19}. Theorem~\ref{thm:linear-space-update}
summarizes the guarantees of our algorithm.

\begin{theorem}\label{thm:linear-space-update}
  There exists a parallel \batchdynamic{} triangle
  counting algorithm that requires $O(\Delta(\sqrt{\Delta+m}))$ amortized
   work and $O(\tridep)$ depth with high probability, and
   $O(\Delta+m)$ space for a batch of $\Delta$ edge updates.

\end{theorem}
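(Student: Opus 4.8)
The plan is to parallelize and batch the sequential $O(\sqrt{m})$-per-update algorithm of Kara et al.~\cite{KNNOZ19}, keeping its work while driving the depth down to $O(\tridep)$. First I would fix a degree threshold $t = \Th{\sqrt{\batch + m}}$ and classify every vertex as \emph{low-degree} (degree below $t$) or \emph{high-degree} (degree at least $t$), where $m$ is the number of edges before the batch and $\batch$ the batch size. Two structural facts drive every bound: since each high-degree vertex contributes at least $t$ to the degree sum $\le 2(m+\batch)$, there are only \O{\sqrt{\batch+m}} high-degree vertices; and every low-degree vertex has at most \O{\sqrt{\batch+m}} neighbors. All adjacency information, together with a global edge-membership table, would live in parallel hash tables, giving $O(1)$-depth lookups and $O(\tridep)$-depth batched insertions and deletions. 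The core auxiliary structure is the wedge table $\cT$, which for each pair of high-degree vertices records the number of common low-degree neighbors; because there are only \O{\sqrt{\batch+m}} high-degree vertices and we store only nonzero entries, $\cT$ fits in \O{\batch+m} space, matching the claimed space bound.

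Next I would process a batch in parallel. After using approximate compaction and semisorting to strip out nullifying updates (insert-then-delete, redundant inserts, deletions of absent edges) within the stated bounds, I would handle all deletions before all insertions so that no vertex crosses the threshold more than once per batch. For a single edge update $(u,v)$ I count the affected triangles by cases: if an endpoint, say $u$, is low-degree, I enumerate its \O{\sqrt{\batch+m}} neighbors $w$ in parallel and test membership of $(v,w)$ with $O(1)$-depth lookups; if both endpoints are high-degree, I read the common-low-neighbor count directly from $\cT[(u,v)]$ in $O(1)$ work, and separately enumerate the \O{\sqrt{\batch+m}} high-degree vertices to catch all-high triangles. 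Maintaining $\cT$ costs \O{\sqrt{\batch+m}} per edge incident to a low-degree vertex, since only that vertex's (at most $t$) high-degree neighbor pairs change. A parallel reduction (or atomic-adds) over the per-triangle contributions then updates the global count, so each update costs \O{\sqrt{\batch+m}} work and $O(\tridep)$ depth, for \O{\batch\sqrt{\batch+m}} work over the batch.

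The step I expect to be the main obstacle is \emph{exact counting in the presence of multiple batch edges per triangle}: a newly created triangle may have two or three of its edges inside the same insertion batch, so naively summing the per-update counts over-counts it. To resolve this I would fix a canonical rule that charges each created (or destroyed) triangle to exactly one of its batch edges --- for instance, the lexicographically smallest batch edge it contains --- and have the routine processing $(u,v)$ count a triangle $(u,v,w)$ only when $(u,v)$ is the canonical edge among those of $\{(u,v),(v,w),(u,w)\}$ that lie in the batch. Deciding, for each candidate triangle, which of its edges are batch edges is an $O(1)$-depth set of hash-table lookups, so enforcing the rule preserves the asymptotic bounds; the delicate part is getting the bookkeeping right for both passes, since the notion of ``canonical'' edge must be consistent across the deletion and insertion phases and must interact correctly with the $\cT$ queries, whose counts can themselves be perturbed by batch edges incident to low-degree vertices.

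Finally I would close the amortized analysis by handling re-classification and rebuilds. As the batch changes degrees, some vertices cross the threshold; I would migrate each such vertex's adjacency data and its contributions to $\cT$ in work proportional to its degree, and trigger a full rebuild of all structures, recomputing $t$ against the new edge count, every \Th{m} updates. A standard potential/charging argument then amortizes both migration and rebuild to \O{\sqrt{\batch+m}} per update, while each rebuild and migration is itself a parallel hash-table construction of $O(\tridep)$ depth, preserving the depth bound and the \O{\batch+m} space bound and yielding the theorem.
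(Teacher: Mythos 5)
Your proposal follows the same skeleton as the paper's proof---a parallelization of Kara et al.\ with the four degree-partitioned edge tables, a wedge table $\cT$ keyed by high-degree pairs, per-update triangle counting by cases, and minor/major rebalancing---but it has two genuine gaps, both at exactly the points where the paper's proof does its real work.

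First, you use a \emph{single} degree threshold $t = \Th{\sqrt{\batch+m}}$. With one hard threshold, the amortized bound for reclassification fails: an adversary can hold a vertex $v$ at degree straddling $t$ and alternate an insertion and a deletion incident to $v$ across batches, so $v$ changes class once per $O(1)$ updates, and each change costs $\Th{\deg(v)} = \Th{\sqrt{m}}$ work to migrate its table entries plus up to $\O{\deg(v)\sqrt{m}} = \O{m}$ work to rebuild the $\cT$ entries for pairs involving $v$. No potential function can charge this to $O(1)$ updates. The paper (following Kara et al.) avoids this with hysteresis: two thresholds $t_1 = \sqrt{M}/2$ and $t_2 = 3\sqrt{M}/2$ (with $M = 2m+1$), vertices in between classified either way, so that $\Omega(\sqrt{m})$ updates must touch a vertex between consecutive reclassifications; your ``standard charging argument'' does not exist without this gap. (Relatedly, defining $t$ in terms of the current batch size $\batch$ is incoherent across batches of different sizes; the paper keys its thresholds to $m$ alone and handles $\batch > m$ by static recomputation.)

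Second, your lexicographic canonical-edge deduplication rule cannot be enforced within the work bound in precisely the case that makes the problem hard: an update $(u,v)$ with both endpoints high-degree and the third triangle vertex $w$ low-degree. There the algorithm never enumerates the candidate triangles---it reads a single count from $\cT$---so ``for each candidate triangle, look up which of its edges are batch edges'' is unavailable, and making it available would mean enumerating up to $\Th{n}$ low-degree witnesses $w$, destroying the $\O{\sqrt{\batch+m}}$ per-update work bound. You flag this interaction as ``the delicate part'' but leave it open, and closing it is the main new content of the paper's proof: each entry of $\cT$ is refined into a $5$-tuple counting wedges by how many newly inserted or newly deleted edges they contain (maintained by the $\tinsnoarg$ and $\tdelnoarg$ passes over batch edges incident to low-degree vertices), and deduplication is done by \emph{fractional weighting}---triangles containing $1$, $2$, or $3$ updated edges are accumulated into separate counters $c_1,\dots,c_6$ and combined as $C + c_1 + \tfrac{1}{2}c_2 + \tfrac{1}{3}c_3 - c_4 - \tfrac{1}{2}c_5 - \tfrac{1}{3}c_6$. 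This needs only counts, never edge identities, so it is compatible with the $O(1)$ lookup in $\cT$. A canonical rule could be salvaged (e.g., always charging a mixed triangle to its unique high-high edge), but it would still require categorizing $\cT$'s counts by batch-edge content, which your proposal lacks. One last minor point: summing per-update contributions with a parallel reduction gives $O(\log(\batch+m))$ depth, not $O(\tridep)$; the stated depth bound requires the constant-depth atomic-add assumption the paper makes explicit.
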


Our algorithm is work-efficient and achieves a significantly lower depth for a batch of updates than applying the updates one at a time using the sequential algorithm of~\cite{KNNOZ19}.
We provide a detailed description of the fully dynamic sequential
algorithm of~\cite{KNNOZ19}
\ifFull
in Appendix~\ref{app:triangle}
\fi
\ifCameraReady
in the full version of our paper~\cite{fullversion}
\fi
for
reference,\footnote{Kara et al.~\cite{KNNOZ19} described their algorithm for counting directed 3-cycles in relational databases, where each triangle edge is drawn from a different relation, and we simplified it for the case of undirected graphs.} and a brief high-level overview of that algorithm in
this section.

\ifCameraReady
\subsection{Sequential Algorithm Overview}\label{sec:seq}~
\fi
\ifFull
\subsection{Sequential Algorithm Overview}\label{sec:seq}
\fi

Given a graph $G= (V, E)$ with $n = |V|$ vertices and $m = |E|$ edges,
let $M = 2m+1$, $t_1 = \sqrt{M}/2$, and $t_2 = 3\sqrt{M}/2$. We classify
a vertex as \defn{low-degree} if its degree is at most $t_1$ and
\defn{high-degree} if its degree is at least $t_2$. Vertices with
degree in between $t_1$ and $t_2$ can be classified either way.

\myparagraph{Data Structures} The algorithm partitions the edges into
four edge-stores $\hh$, $\hl$, $\lh$, and $\lowl$ based on a
degree-based partitioning of the vertices.  $\hh$ stores all of the
edges $(u, v)$, where both $u$ and $v$ are high-degree. $\hl$ stores
edges $(u,v)$, where $u$ is high-degree and $v$ is low-degree. $\lh$
stores the edges $(u, v)$, where $u$ is low-degree and $v$ is
high-degree. Finally, $\lowl$ stores edges $(u,v)$, where both $u$ and $v$ are
low-degree.

The algorithm also maintains a wedge-store $\cT$ (a wedge is a triple
of distinct vertices $(x,y,z)$ where both $(x, y)$ and $(y, z)$ are edges
in $E$). For each pair of high-degree vertices $u$ and $v$, the
wedge-store $\cT$ stores the number of wedges $(u, w, v)$, where $w$
is a low-degree vertex. $\cT$ has the property that given an edge
insertion (resp.\ deletion) $(u, v)$ where both $u$ and $v$ are
high-degree vertices, it returns the number of wedges $(u, w, v)$,
where $w$ is low-degree, that $u$ and $v$ are part of in $O(1)$
expected time. $\cT$ is implemented via a hash table indexed by pairs
of high-degree vertices that stores the number of wedges for each
pair.

Finally, we have an array containing the degrees of each vertex,
$\degarray$.

\myparagraph{Initialization}
Given a graph with $m$ edges, the algorithm first initializes the
triangle count $C$ using a static triangle counting algorithm in
$O(\alpha m)=O(m^{3/2})$ work and $O(m)$ space~\cite{Latapy2008}. The $\hh$,
$\hl$, $\lh$, and $\lowl$ tables are created by scanning all edges
in the input graph and
inserting them into the appropriate hash tables. $\cT$ can be
initialized by iterating over edges $(u,w)$ in $\hl$ and
for each $w$, iterating over all edges $(w,v)$ in $\lh$ to find pairs
of high-degree vertices $u$ and $v$, and then incrementing $\cT(u,v)$.

\myparagraph{The Kara et al. Algorithm~\cite{KNNOZ19}}
Given an edge insertion $(u, v)$ (deletions are handled similarly, and for
simplicity assume that the edge does not already exist in $G$), the
update algorithm must identify all tuples $(u,w,v)$ where $(u,w)$ and
$(v,w)$ already exist in $G$, since such triples correspond to new
triangles formed by the edge insertion. The algorithm proceeds by
considering how a triangle's edges can reside in the data structures.
For example, if all of $u$, $v$, and $w$ are high-degree, then the algorithm
will enumerate these triangles by checking $\hh$ and finding all
neighbors $w$ of $u$ that are also high-degree (there are at most
$O(\sqrt{m})$ such neighbors), checking if the $(v,w)$ edge exists in
constant time.  On the other hand, if $u$ is low-degree, then checking
its $O(\sqrt{m})$ many neighbors suffices to enumerate all new
triangles.  The interesting case is if both $u$ and $v$ are
high-degree, but $w$ is low-degree, since there can be much more than
$O(\sqrt{m})$ such $w$'s. This case is handled using $\cT$, which
stores for a given $(u,v)$ edge in $\hh$ all $w$ such that $(w,u)$ and
$(w,v)$ both exist in $\lh$.

Finally, the algorithm updates the data structures, first inserting
the new edge into the appropriate edge-store. The algorithm updates
$\cT$ as follows. If $u$ and $v$ are both low-degree or both
high-degree, then no update is needed to $\cT$. Otherwise, without
loss of generality suppose $u$ is low-degree and $v$ is high-degree.
Then, the algorithm enumerates all high-degree vertices $w$ that are neighbors
of $u$ and increments the value of $(v,w)$ in $\cT$.

\ifCameraReady
\subsection{Parallel Batch-Dynamic Update Algorithm}\label{sec:update-alg}~
\fi
\ifFull
\subsection{Parallel Batch-Dynamic Update Algorithm}\label{sec:update-alg}
\fi

We present a high-level overview of our parallel algorithm in this section, and a more detailed description in Section~\ref{sec:triangle-full-alg}.
We consider batches of $\Delta$ edge insertions and/or deletions.  Let
$\ins(u, v)$ represent the update corresponding to inserting an edge
between vertices $u$ and $v$, and $\del(u, v)$ represent deleting the
edge between $u$ and $v$. We first preprocess the batch to account for
updates that \emph{nullify} each other. For example, an $\ins(u, v)$ update
followed chronologically by
a $\del(u, v)$ update nullify each other because the $(u, v)$ edge
that is inserted is immediately deleted, resulting in no change to the
graph. To process the batch consisting of nullifying updates, we claim
that the only update that is not nullifying for any pair of vertices
is the chronologically last update in the batch for that edge.
Since all updates contain a timestamp, to account for nullifying updates
we first find all updates on the same edge by hashing the updates
by the edge that it is being performed on. Then, we run the parallel
maximum-finding algorithm given in~\cite{Vishkin08}
on each set of updates for each edge in parallel.
This maximum-finding algorithm then
returns the update with the largest timestamp (the most recent update)
from the set of updates for each edge. This set of returned
updates then composes a batch of non-nullifying updates.

Before we go into the details of our parallel batch-dynamic triangle
counting algorithm, we first describe some challenges that must be
solved in using Kara et al.~\cite{KNNOZ19}
for the parallel batch-dynamic setting.

\myparagraph{Challenges}
Because Kara et al.~\cite{KNNOZ19} only considers one update
at a time in their algorithm, they do not deal with cases where
a set of two or more updates creates a new triangle. Since, in
our setting, we must account for batches of multiple updates,
we encounter the following set of challenges:

\begin{enumerate}[label=(\textbf{\arabic*}),topsep=1pt,itemsep=0pt,parsep=0pt,leftmargin=15pt]
    \item We must be able to efficient find
        new triangles that are created via
        two or more edge insertions.
    \item We must be able to handle insertions and deletions simultaneously
        meaning that a triangle with one inserted edge and one deleted
        edge should not be counted as a new triangle.
    \item We must account for over-counting of triangles due
        to multiple updates occurring simultaneously.
\end{enumerate}

For the rest of this section, we assume that $\Delta \leq m$, as
otherwise we can re-initialize our data structure using the static
parallel triangle-counting algorithm~\cite{ShunT2015}\footnote{The hashing-based
version of the algorithm given in~\cite{ShunT2015} can be modified to obtain the stated bounds
if it does not do ranking and
when using the $O(\log^* n)$ depth w.h.p.\ parallel hash table and uses atomic-add.} to get the count in
$O(\Delta^{3/2})$ work, $O(\log^* \Delta)$ depth, and $O(\Delta)$ space (assuming atomic-add),
which is within the bounds of Theorem~\ref{thm:linear-space-update}.

\myparagraph{Parallel Initialization}
Given a graph with $m$ edges, we initialize the triangle count $C$
using a static parallel triangle counting algorithm in $O(\alpha m)=O(m^{3/2})$
work, $O(\log^* m)$ depth, and $O(m)$ space~\cite{ShunT2015}, using atomic-add.
We initialize $\hh$, $\hl$, $\lh$, and $\lowl$ by scanning the edges
in parallel and inserting them into the appropriate parallel hash
tables. We initialize the degree array $\degarray$ by scanning the vertices.
Both steps take $O(m)$ work and $O(\log^*m)$ depth w.h.p.  $\cT$ can be
initialized by iterating over edges $(u,w)$ in $\hl$ in parallel and
for each $w$, iterating over all edges $(w,v)$ in $\lh$ in parallel to
find pairs of high-degree vertices $u$ and $v$, and then incrementing
$\cT(u,v)$.  The number of entries in $\hl$ is $O(m)$ and each $w$ has
$O(\sqrt{m})$ neighbors in $\lh$, giving a total of $O(m^{3/2})$ work
and $O(\log^* m)$ depth w.h.p.\ for the hash table insertions. The
amortized work per edge update is $O(\sqrt{m})$.

\myparagraph{Data Structure Modifications}
We now describe additional information that is stored in $\hh$, $\hl$,
$\lh$, $\lowl$, and $\cT$, which is used by the \batchdynamic{} update
algorithm:

\begin{enumerate}[label=(\textbf{\arabic*}),topsep=0pt,itemsep=0pt,parsep=0pt,leftmargin=15pt]
\item Every edge stored in $\hh$, $\hl$, $\lh$, and $\lowl$ stores an
  associated state, indicating whether it is an \defn{old edge}, a
    \defn{new insertion} or a \defn{new deletion}, which correspond to
    the values of 0, 1, and 2, respectively.
\item $\cT(u, v)$ stores a tuple with 5 values instead of a single value for each index $(u, v)$.
  Specifically, a $5$-tuple entry of $\cT(u, v) = (\tup{1},
    \tup{2}, \tup{3}, \tup{4}, \tup{5})$ represents the following:
  \begin{itemize}[topsep=0pt,itemsep=0pt,parsep=0pt,leftmargin=15pt]
      \item $\tup{1}$ represents the number of wedges with endpoints
        $u$ and $v$ that include only old edges.
      \item $\tup{2}$ and $\tup{3}$ represent the number of wedges
        with endpoints $u$ and $v$ containing one or two newly
        inserted edges, respectively.
      \item $\tup{4}$ and $\tup{5}$ represent the number of wedges
        with endpoints $u$ and $v$ containing one or two newly deleted
        edges, respectively. In other words, they are wedges that do
        not exist anymore due to one or two edge deletions.
  \end{itemize}
\end{enumerate}

\myparagraph{Algorithm Overview}
We first remove updates in the batch that either insert edges already
in the graph or delete edges not in the graph by using approximate compaction to filter.
Next, we update the tables $\hh$, $\hl$, $\lh$, and $\lowl$ with the new edge
insertions. Recall that we must update the tables with both $(u, v)$
and $(v, u)$ (and similarly when we update these tables with edge
deletions). We also mark these edges as newly inserted. Next, we
update $\degarray$ with the new degrees of all vertices due to edge
insertions. Since the degrees of some vertices have now increased, for low-degree vertices whose degree exceeds $t_2$, in
parallel, we promote them to high-degree vertices, which involves updating the tables $\hh$, $\hl$, $\lh$, $\lowl$, and $\cT$.
Next, we update the tables $\hh$, $\hl$, $\lh$, and $\lowl$ with new
edge deletions, and mark these edges as newly deleted. We then call
the procedures $\tinsnoarg$ and $\tdelnoarg$, which update the
wedge-table $\cT$ based on all new insertions and all new
deletions, respectively. At this point, our auxiliary data structures contain both
new triangles formed by edge insertions, and triangles deleted due to
edge deletions.

For each update in the batch, we then determine the number of new
triangles that are created by counting different types of triangles
that the edge appears in (based on the number of other updates forming
the triangle). We then aggregate these per-update counts to update the
overall triangle count.

Now that the count is updated, the remaining steps of the algorithm
handle unmarking the edges and restoring the data structures so that
they can be used by the next batch. We unmark all newly inserted edges
from the tables, and delete all edges marked as deletes in this batch.
Finally, we handle updating $\cT$, the wedge-table for all insertions
and deletions of edges incident to low-degree vertices.  The last
steps in our algorithm are to update the degrees in response to the newly inserted edges and the
now truly deleted edges. Then, since the degrees of some high-degree
vertices may drop below $t_1$ (and vice versa), we convert them to low-degree vertices
and update the tables $\hh$, $\hl$, $\lh$, $\lowl$, and
$\cT$ (and vice versa). This step is called \defn{minor rebalancing}. Finally, if the number of edges in the graph becomes less
than $M/4$ or greater than $M$ we reset the values of $M$, $t_1$, and
$t_2$, and re-initialize all of the data structures. This step is called \defn{major rebalancing}.

\myparagraph{Algorithm Description}
A simplified version of our algorithm is shown below.
The following $\textsc{Count-Triangle}$ procedure
takes as input a batch of $\Delta$ updates
$\mathcal{B}$ and returns the count of the updated number of triangles in the graph (assuming
the initialization process has already been run on the input graph and all associated data
structures are up-to-date).

\begin{mdframedalg}{Simplified parallel batch-dynamic triangle counting algorithm.}
    \label{alg:count-cliques}
\begin{algorithmic}[1]
    \Function{Count-Triangles}{$\batchset$}
    \ParFor{$\ins(u, v) \in \batchset$}
        \State Update and label edges $(u, v)$ and $(v, u)$ in $\hh$,
        \Statex \ \ \ \ \ \ \ \ \ \ \ \ $\hl$, $\lh$, and $\lowl$ as inserted edges.
    \EndParFor
    \ParFor{$\del(u, v) \in \batchset$}
        \State Update and label edges $(u, v)$ and $(v, u)$ in $\hh$,
        \Statex \ \ \ \ \ \ \ \ \ \ \ \ $\hl$, $\lh$, and $\lowl$ as deleted edges.
    \EndParFor
    \ParFor{$\ins(u, v) \in \batchset$ or $\del(u, v) \in \batchset$}
        \State Update $\cT$ with $(u, v)$. $\cT$ records the number of
        \Statex \ \ \ \ \ \ \ \ \ \ \ \ wedges that have $0$, $1$, or
        $2$ edge updates.
    \EndParFor
    \ParFor{$\ins(u, v) \in \batchset$ or $\del(u, v) \in \batchset$}
    \State Count the number of new triangles and deleted
    \Statex \ \ \ \ \ \ \ \ \ \ \ \ triangles incident to edge $(u, v)$, and account for
    \Statex \ \ \ \ \ \ \ \ \ \ \ \ duplicates.
    \EndParFor
     \State Rebalance data structures if necessary.
    \EndFunction
\end{algorithmic}
\end{mdframedalg}

\myparagraph{Small Example Batch Updates} Here we provide a small example of processing a batch of updates. We assume that no rebalancing occurs.
Suppose we have a
batch of updates containing an edge insertion $(u, v)$ with timestamp $3$,
an edge deletion $(w, x)$ with timestamp $1$, and an edge deletion $(u, v)$ with
timestamp $2$. Since the edge insertion $(u, v)$ has the later timestamp, it is
the update that remains. After removing nullifying updates, the two updates that
remain are insertion of $(u, v)$ and deletion of $(w, x)$. The algorithm first
looks in $\degarray$ to find the degrees of $u$, $v$, $w$, and $x$ in parallel. Suppose
$u$, $v$, and $w$ are high-degree and $x$ is low-degree.
We need to first update our data structures with the new edge updates.
To update the data structure, we first
update the edge table $\hh$ with $(u, v)$ marked as an edge insertion.
Then, we update the edge tables $\hl$ and $\lh$ with $(w, x)$ as an edge
deletion. Finally, we update the counts of
wedges in $\cT$ with $(w, x)$'s deletion. Specifically,
for each of $x$'s neighbors $y$ in $\lh$, we update $\cT(w, y)$ by incrementing
$t^{(w, y)}_4$ (since $(x, y)$ is not a new update).

After updating the data structures, we can count the changes to the total number
of triangles in the graph. All of the following actions can be performed in
parallel.
Suppose that $u$ comes lexicographically before $v$. We count the number of
neighbors of $u$ in $\hh$ and this will be the number of new triangles
containing three high-degree vertices. To avoid overcounting, we do not count
the number of high-degree neighbors of $v$. Since we are counting the number of
triangles containing updates, we also do not count the number of high-degree
neighbors of $w$ since $(w, x)$ cannot be part of any new triangles containing three
high-degree vertices.
Then, in parallel, we count the number of
neighbors of $x$ in $\lowl$ and $\lh$; this is the number of
deleted triangles containing
one and two high-degree vertices, respectively.
We use $\cT$ to count the number of
triangles containing one low-degree vertex and $(u, v)$. To count the number of
inserted triangles containing $(u, v)$ and a
low-degree vertex, we look up $t_1^{(u, v)}$
in $\cT$ and add it to our final triangle count; all
other stored count values for $(u, v)$ in $\cT$ are $0$
since there are no other new
updates incident to $u$ or $v$.

\ifCameraReady
\subsection{Parallel Batch-Dynamic Triangle Counting Detailed Algorithm}\label{sec:triangle-full-alg}~
\fi
\ifFull
\subsection{Parallel Batch-Dynamic Triangle Counting Detailed Algorithm}\label{sec:triangle-full-alg}
\fi

The detailed pseudocode of our parallel \batchdynamic{} triangle
counting algorithm are shown below. Recall that the update procedure for a set of
$\Delta\leq m$ non-nullifying updates is as follows (the subroutines
used in the following steps are described afterward).

\begin{breakablealgorithm}{}\label{alg:batchdynamictriangle}
\caption{Detailed parallel batch-dynamic triangle counting procedure.}
\begin{enumerate}[label=(\textbf{\arabic*}),topsep=0pt,itemsep=0pt,parsep=0pt,leftmargin=20pt]
\item Remove updates that insert edges already in
  the graph or delete edges not in the graph as well as nullifying updates using approximate compaction.\label{bt:removebadupdates}

\item Update tables $\hh$, $\hl$, $\lh$, and $\lowl$ with the new
    edge insertions using $\ins(u, v)$ and $\ins(v, u)$.
  Mark these edges as newly inserted by running
  $\markins{\batchset}$ on the batch of updates
  $\batchset$.\label{bt:updateedgetablesinsert}

\item Update tables $\hh$, $\hl$, $\lh$, and $\lowl$ with new
    edge deletions using $\del(u,v)$ and $\del(v, u)$.
  Mark these edges as newly
  deleted using $\markdels{\batchset}$ on
  $\batchset$.\label{bt:updateedgetablesdelete}

\item Call $\tins{\batchset}$ for the set $\batchset$ of all edge
  insertions $\ins(u, w)$, where either $u$ or $w$ is
  low-degree and the other is high-degree.
  \label{bt:updatetableinsertions}

\item Call $\tdel{\batchset}$ for the set $\batchset$ of all edge
  deletions $\del(u, w)$ where either $u$ or $w$ is
  low-degree and the other is high-degree.
  \label{bt:updatetabledeletions}

\item For each update in the batch, determine the number of new
  triangles that are created by counting 6 values.
  Count the values using a 6-tuple, $(c_1, c_2, c_3, c_4, c_5, c_6)$
  based on the number of other updates contained in a
  triangle:\label{bt:countnewtrianglesperupdate}

\begin{enumerate}[topsep=0pt,itemsep=0pt,parsep=0pt,leftmargin=20pt]
    \item For each edge insertion $\ins(u, v)$ resulting in a triangle containing only one newly inserted edge (and no deleted edges), increment $c_1$ by $\ctriang{1}{0}{\ins(u, v)}$.
    \item For each edge insertion $\ins(u, v)$ resulting in a triangle containing two newly inserted edges (and no deleted edges), increment $c_2$ by $\ctriang{2}{0}{\ins(u, v)}$.
\item For each edge insertion $\ins(u, v)$ resulting in a triangle containing three newly inserted edges, increment $c_3$ by $\ctriang{3}{0}{\ins(u, v)}$.
\item For each edge deletion $\del(u, v)$ resulting in a deleted triangle with one newly deleted edge, increment $c_4$ by $\ctriang{0}{1}{\del(u, v)}$.
\item For each edge deletion $\del(u, v)$ resulting in a deleted triangle with two newly deleted edges, increment $c_5$ by $\ctriang{0}{2}{\del(u, v)}$.
\item For each edge deletion $\del(u, v)$ resulting in a deleted triangle with three newly deleted edges, increment $c_6$ by $\ctriang{0}{3}{\del(u, v)}$.
\end{enumerate}

Let $C$ be the previous count of the number of triangles.
Update $C$ to be $C + c_1 + (1/2)c_2 + (1/3)c_3 - c_4 - (1/2)c_5 -
(1/3)c_6$, which becomes the new count.

\item Scan through updates again. For each
  update, if the value stored in $\hh$, $\hl$, $\lh$,
  and/or $\lowl$ is $2$ (a deleted edge), remove this edge. If
  stored value is $1$ (an inserted edge), change the value to $0$. For all
  updates where the endpoints are both high-degree or both
  low-degree, we are done. For each update $(u, w)$ where
  either $u$ or $w$ is low-degree (assume without loss of generality that $w$ is)
  and the other is high-degree, look for all
  high-degree neighbors $v$ of $w$ and update $\cT(u, v)$ by summing
  all $c_1$, $c_2,$, and $c_3$ of the tuple and subtracting $c_4$ and $c_5$.

  \label{bt:scanthroughupdatesagain}

  \item Update $\degarray$ with the new degrees.\label{bt:updatedegree}

  \item Perform minor rebalancing for all vertices
  $v$ that exceed $t_2$ in degree or fall under $t_1$
  in parallel using
  $\minreb{v}$. This makes a formerly low-degree vertex
  high-degree (and vice versa) and updates relevant structures.\label{bt:performminorrebalancing}

\item Perform major rebalancing if necessary (i.e., the total number
  of edges in the graph is less than $M/4$ or greater than $M$).
  Major rebalancing re-initializes all
    structures.\label{bt:majorrebalancing}
\end{enumerate}
\end{breakablealgorithm}

\mysubsection{Procedure $\markins{\batchset}$} We scan through each of
the $\ins(u, v)$ updates in $\batchset$ and mark $(u, v)$ and $(v, u)$
as newly inserted edges in $\hh$, $\hl$, $\lh$, and/or $\lowl$ by
storing a value of $1$ associated with the edge.

\mysubsection{Procedure $\markdels{\batchset}$} Because we removed all
nullifying updates before $\batchset$ is passed into the procedure,
none of the deletion updates in $\batchset$ should delete newly inserted
edges. For all edge deletions $\del(u, v)$, we change the values
stored under $(u, v)$ and $(v, u)$ from $0$ to $2$ in the tables
$\hh$, $\hl$, $\lh$, and/or $\lowl$.

\mysubsection{Procedure $\tins{\batchset}$} For each $(u,w) \in
\batchset$, assume without loss of generality that $w$ is the
low-degree vertex and do the following.  We first find all of $w$'s
neighbors, $v$, in $\lh$ in parallel. Then, we determine for each
neighbor $v$ if $(w, v)$ is new (marked as $1$). If the edge $(w, v)$
is not new, then increment the second value stored in the tuple with
index $\cT(u, v)$. If $(w, v)$ is newly inserted, then increment the third value
stored in $\cT(u, v)$. The first, fourth, and fifth values stored in
$\cT(u, v)$ do not change in this step. The first, second, and third values
count the number of edge insertions contained in the wedge keyed by $(u, v)$.
The first value counts all wedges with endpoints $u$ and $v$ that do not
contain any edge update, the second count the number of wedges containing one
edge insertion, and the third counts the number of wedges containing two edge
insertions.
Then, intuitively, the first,
second, and third values will tell us later for edge insertion $(u, v)$
between two high-degree vertices
whether newly created
triangles containing $(u, v)$
have one (the only update being $(u, v)$), two, or three,
respectively, new edge
insertions from the batch update.
We do not update the edge insertion counts of
wedges which contain a mix of edge insertion updates
and edge deletion updates.

\mysubsection{Procedure $\tdel{\batchset}$}  For each $(u,w) \in
\batchset$, assume without loss of generality that $w$ is the
low-degree vertex and do the following. We first find all of $w$'s
neighbors, $v$, in $\lh$ in parallel. Then, we determine for each
neighbor $v$ if $(w, v)$ is a newly deleted edge (marked as $2$). If
$(w, v)$ is not a newly deleted edge, increment the fourth value in
the tuple stored in $\cT(u, v)$ and decrement the first value.
Otherwise, if $(w, v)$ is a newly
deleted edge, increment the fifth value of $\cT(u, v)$ and decrement
the first value. The
second and third values in $\cT(u, v)$ do not change in this step.
For any key $(u, v)$, the first, fourth, and fifth values gives the number of
wedges with endpoints $u$ and $v$ that contain zero, one, or two edge deletions,
respectively.
Intuitively, the first, fourth, and fifth values tell us later
whether newly deleted triangles  have one (where the only edge deletion is
$(u, v)$), two, or three,
respectively, new edge deletions from the batch update.

\mysubsection{Procedure $\ctriang{i}{d}{update}$} This procedure
returns the number of triangles containing the update $\ins(u, v)$ or
$\del(u, v)$ and exactly $i$ newly inserted edges or exactly $d$ newly
deleted edges (the update itself counts as one newly inserted edge or
one newly deleted edge). If at least one of $u$ or $v$ is low-degree,
we search in the tables, $\lh$, and $\lowl$ for neighbors
of the low-degree vertex and the number of
marked edges per triangle: edges marked as $1$ for insertion updates and edges
marked as $2$ for deletion updates.
If both $u$ and $v$ are
high-degree, we first look through all high-degree
vertices using $\hh$ to see if any
form a triangle with both high-degree endpoints $u$ and $v$ of the update.
This allows us to find all newly updated triangles containing only
high-degree vertices.
Then, we confirm the existence
of a triangle for each neighbor found in the tables by checking for the third edge
in $\hh$, $\hl$, $\lh$, or $\lowl$. We return only the counts containing the
correct number of updates of the correct type.
To avoid double counting for each update we
do the following. Suppose all vertices are ordered lexicographically in some
order.
For any edge which contains two high-degree or two low-degree
vertices, we search in $\lowl$, $\hh$, and $\lh$ for exactly one of the two
endpoints, the one that is lexicographically smaller.

Then, we return a tuple in $\cT(u, v)$ based on the values of $i$ and $d$
to
determine the count of triangles containing $u$ and $v$ and one low-degree vertex:
\begin{itemize}[topsep=0pt,itemsep=0pt,parsep=0pt,leftmargin=20pt]
\item Return the first value $\tup{1}$ if either $i = 1$ or $d = 1$.
\item Return the second value $\tup{2}$ if $i = 2$.
\item Return the third value $\tup{3}$ if $i = 3$.
\item Return the fourth value $\tup{4}$ if $d = 2$.
\item Return the fifth value $\tup{5}$ if $d = 3$.
\end{itemize}

Note that we ignore all triangles that include more than one
insertion update \emph{and} more than one deletion update.

\mysubsection{Procedure $\minreb{u}$} This procedure performs a minor rebalance when either the degree of $u$ decreases below $t_1$ or increases above $t_2$. We move all edges in $\hh$ and $\hl$ to $\lh$ and $\lowl$ and vice versa.
We also update $\cT$ with
new pairs of vertices that became high-degree and delete pairs
that are no longer both high-degree.

\ifFull
    \subsection{Analysis}
\fi
\ifCameraReady
\subsection{Analysis}~
\fi

We prove the correctness of our algorithm in the following
theorem. The proof is based on accounting for the
contributions of an edge to each triangle that it participates in based on
the number of other updated edges found in the triangle.

\begin{restatable}{theorem}{batchtriangcorrect}\label{thm:batch-triang-correct}
Our parallel \batchdynamic{} algorithm maintains the number of triangles in the graph.
\end{restatable}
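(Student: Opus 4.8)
The plan is to prove correctness by induction over the sequence of batches, showing that each invocation of $\textsc{Count-Triangles}$ turns a correct count $C$ into a correct count for the updated graph. The base case is the initialization, which computes $C$ exactly via a static triangle-counting algorithm; I would also observe that both minor and major rebalancing preserve $C$, since they only reorganize (or recompute from scratch) the auxiliary structures $\hh,\hl,\lh,\lowl,\cT$ without altering the underlying graph. It therefore suffices to show that the net update $c_1 + (1/2)c_2 + (1/3)c_3 - c_4 - (1/2)c_5 - (1/3)c_6$ equals the exact change in the triangle count caused by one batch of non-nullifying updates.

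The first step is to classify every triangle relative to the batch. After nullifying updates are removed, each edge is exactly one of \emph{old} (present before and after), \emph{newly inserted} (absent before, present after), or \emph{newly deleted} (present before, absent after). A triangle is newly created iff all three of its edges are present after and at least one is absent before, which forces it to consist of old and inserted edges with exactly $j$ inserted edges for some $1 \le j \le 3$ and no deleted edge; symmetrically a destroyed triangle consists of old and deleted edges with exactly $j$ deleted edges and no inserted edge. Crucially, any triangle containing both an inserted and a deleted edge existed neither before (its inserted edge was absent) nor after (its deleted edge is absent), so it contributes nothing to the net change, which matches the procedure's explicit choice to ignore such triangles. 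Hence the exact net change is $\#\{\text{created}\} - \#\{\text{destroyed}\}$, partitioned by $j$.

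The heart of the argument is the double-counting identity $c_j = j \cdot N_j$ for $j \in \{1,2,3\}$, where $N_j$ is the number of created triangles with exactly $j$ inserted edges (and analogously $c_{j+3} = j \cdot N_j'$ for destroyed triangles). For this I would argue that each of the $j$ inserted edges of such a triangle $T$, when processed as an update, discovers $T$ exactly once and correctly classifies it as having $j$ inserted and $0$ deleted edges via $\ctriang{j}{0}{\cdot}$; summing over updates produces the factor $j$, and dividing by $j$ in the formula recovers $N_j$ (and likewise for deletions), so $c_1 + (1/2)c_2 + (1/3)c_3 = \#\{\text{created}\}$ and $c_4 + (1/2)c_5 + (1/3)c_6 = \#\{\text{destroyed}\}$. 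Establishing that $\ctriang{i}{d}{\cdot}$ returns the correct per-update count requires a case analysis on the degree profile of $T$'s three vertices: when an endpoint of the update is low-degree, $T$ is found by scanning that vertex's $O(\sqrt{m})$-length neighbor lists in $\lowl$ and $\lh$ and tallying the edge marks; when all three vertices are high-degree, $T$ is found through $\hh$; and the delicate case of a high--high update edge with a low-degree apex $w$ is read off from the $5$-tuple $\cT(u,v)$ rather than enumerated. Here I would verify that $\tinsnoarg$ and $\tdelnoarg$ maintain the invariant that $\tup{1},\dots,\tup{5}$ partition the wedges $(u,w,v)$ with $w$ low-degree according to their number of inserted/deleted wedge-edges, and that combining these counts with the status of $(u,v)$ itself reproduces exactly the triangle categories, with mixed-update wedges deliberately excluded.

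I expect this wedge-table accounting to be the main obstacle, since it is where most of the bookkeeping lives and where a miscount would silently corrupt $C$: one must track that each wedge is incremented into exactly one tuple slot, that the first-slot decrements during deletions keep the pure-old count consistent, and that no wedge mixing an insertion and a deletion is ever counted. A secondary but necessary obstacle is ruling out within-update double counting: when both endpoints of the update lie in the same degree class, $T$ could be discovered from either endpoint's neighbor list, so I would invoke the lexicographic tie-breaking rule (searching only from the lexicographically smaller endpoint) to guarantee each such triangle is counted exactly once per update, preserving the clean identity $c_j = j \cdot N_j$. Combining the classification, the double-counting identity, and the per-update correctness of $\ctriang{i}{d}{\cdot}$ shows the formula equals $\#\{\text{created}\} - \#\{\text{destroyed}\}$, completing the inductive step.
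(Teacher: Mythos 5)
Your proposal is correct and follows essentially the same route as the paper's proof: account for each triangle through the updated edges it contains, classify by the number of inserted (resp.\ deleted) edges, and recover the exact net change via the weights $1, 1/2, 1/3$ that divide out the multiplicity, with mixed insertion/deletion triangles ignored because they exist neither before nor after the batch. The paper's own proof states this argument in a few sentences; your version simply makes explicit the supporting details (the induction framing, the identity $c_j = j \cdot N_j$, the wedge-table invariants for $\cT$, and the lexicographic tie-breaking) that the paper leaves implicit.
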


\begin{proof}
All triangles containing at least one low-degree vertex can be found
either in $\cT$ or by searching through $\lh$ and $\lowl$. All
triangles containing all high-degree vertices can be found by
searching $\hh$.  Suppose that an edge update $\ins(u, v)$
(resp.\ $\del(u,v)$) is part of $I_{(u,v)}$ (resp.\ $D_{(u,v)}$) triangles. We need
to add or subtract from the total count of triangles $I_{(u,v)}$ or
$D_{(u,v)}$, respectively. However, some of the triangles will be
counted twice or three times if they contain more than one edge
update. By dividing each triangle count by the number of updated edges
they contain, each triangle is counted exactly once for the total
count $C$.
\end{proof}

\myparagraph{Overall Bound}
We now prove that our parallel \batchdynamic{} algorithm runs in $O(\batch
\sqrt{\batch+m})$ work, $O(\log^*(\batch + m))$ depth, and uses
$O(\batch+m)$ space. Henceforth, we assume that our algorithm
uses the atomic-add instruction (see Section~\ref{sec:prelims}).
Removing nullifying updates takes $O(\batch)$
total work, $O(\log^*{\batch})$ depth w.h.p., and
$O(\batch)$ space for hashing and the find-maximum procedure outlined
in Section~\ref{sec:update-alg}.  In
step~\ref{bt:removebadupdates}, we perform table lookups for the
updates into $\degarray$ and
in $\hh$, $\hl$, $\lh$, or $\lowl$,
followed by approximate compaction to filter. The hash table lookups take $O(\batch)$
work and $O(\log^*m)$ depth with high probability and $O(m)$ space.
Approximate compaction~\cite{Gil91a} takes $O(\batch)$ work, $O(\log^*\Delta)$ depth, and
$O(\batch)$ space.
Steps~\ref{bt:updateedgetablesinsert},~\ref{bt:updateedgetablesdelete}, and~\ref{bt:updatedegree} perform hash table insertions and updates on the batch of
$O(\batch)$ edges, which takes
$O(\batch)$ amortized work and $O(\log^*m)$ depth with high
probability.

The next lemma shows that updating the tables based on the edges in
the update (steps~\ref{bt:updatetableinsertions} and
\ref{bt:updatetabledeletions}) can be done in $O(\batch\sqrt{m})$ work
and $O(\log^*m)$ depth w.h.p., and $O(m)$ space.

\begin{restatable}{lemma}{updatetableins}\label{lem:update-table-ins}
    $\tins{\batchset}$ and $\tdel{\batchset}$ on a batch $\batchset$ of size $\batch$ takes $O(\batch\sqrt{m})$ work and $O(\log^* (\batch + m))$ depth w.h.p., and $O(\batch+m)$ space.
\end{restatable}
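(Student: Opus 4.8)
The plan is to charge the cost of both procedures to the bounded neighborhoods of low-degree vertices, and to push all parallelism onto atomic updates of $\cT$. First I would establish the crucial degree invariant: at the point where $\tinsnoarg$ and $\tdelnoarg$ are invoked, every vertex still classified as low-degree has current degree $O(\sqrt{m})$. This holds because the algorithm updates $\degarray$ and promotes every low-degree vertex whose post-insertion degree exceeds $t_2 = 3\sqrt{M}/2$ to high-degree \emph{before} these procedures run, and because marked deletions do not grow adjacency lists. Consequently, for each update $(u,w) \in \batchset$ whose low-degree endpoint is $w$, enumerating the neighbors $v$ of $w$ in $\lh$ touches at most $\deg(w) = O(\sqrt{m})$ entries.

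Given this invariant, the work bound is a direct charging argument: the batch contains at most $\batch$ updates, and each update $(u,w)$ performs $O(\sqrt{m})$ iterations, where in each iteration we do an $O(1)$-work table lookup to test whether $(w,v)$ is marked (as $1$ in $\tinsnoarg$, or as $2$ in $\tdelnoarg$) together with an $O(1)$-work atomic increment/decrement of the appropriate coordinate of the $5$-tuple $\cT(u,v)$. Summing over all updates yields $O(\batch\sqrt{m})$ total work.

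For depth, I would process all $\batch$ updates in parallel, and within each update process all $O(\sqrt{m})$ neighbors in parallel, so the depth equals that of a single enumerate--lookup--update chain. The lookups into the edge tables cost $O(1)$ depth, and each coordinate update of $\cT(u,v)$ is either an atomic-add into an existing entry ($O(1)$ depth, regardless of how many concurrent updates contend for the same entry) or an insertion of a fresh entry into the concurrent hash table for $\cT$ ($O(\log^*(\batch+m))$ depth w.h.p.); enumerating the neighbor list of $w$ in $\lh$ likewise costs $O(\log^*(\batch+m))$ depth w.h.p. This gives the claimed $O(\log^*(\batch+m))$ depth w.h.p.

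Finally, for space, the edge tables occupy $O(m+\batch)$, and $\cT$ has at most one entry per pair of high-degree vertices; since there are $O(\sqrt{m})$ high-degree vertices (each of degree $\Omega(\sqrt{m})$ out of total degree $2m$), $\cT$ has $O(m)$ entries, for $O(m+\batch)$ in total. The main obstacle to watch is that we must \emph{not} materialize the $O(\batch\sqrt{m})$ (update, neighbor) pairs as an explicit array, which would inflate the space beyond $O(m+\batch)$; instead we apply each increment to $\cT$ in place via atomic-add, which both keeps the space at $O(m+\batch)$ and avoids a reduction step that would otherwise add an $O(\log n)$ factor to the depth. The most delicate point is precisely the degree invariant above, since the edges of the batch have already been inserted into the tables when these procedures run, so one must appeal to the prior promotion of over-threshold vertices to guarantee the $O(\sqrt{m})$ per-update bound rather than a weaker $O(\batch)$ bound.
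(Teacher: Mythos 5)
Your proof is correct and reaches the same bounds, but its crux is genuinely different from the paper's proof. You obtain the per-update $O(\sqrt{m})$ enumeration cost from a degree invariant---every vertex still classified as low-degree when $\tinsnoarg$ and $\tdelnoarg$ run has \emph{current} degree $O(\sqrt{m})$---justified by assuming that degree updates and promotion of over-threshold vertices happen \emph{before} these procedures. That ordering matches the prose overview in Section~\ref{sec:update-alg}, but not the detailed pseudocode in Algorithm~\ref{alg:batchdynamictriangle}, where $\degarray$ is updated and minor rebalancing is performed in steps~\ref{bt:updatedegree} and~\ref{bt:performminorrebalancing}, i.e., after steps~\ref{bt:updatetableinsertions} and~\ref{bt:updatetabledeletions}; accordingly, the paper's own proof explicitly notes that vertices are still classified by their \emph{original} degrees at this point, bounds each enumeration by the original degree $O(\sqrt{m})$ plus newly gained neighbors, and charges the newly gained neighbors in aggregate as $O(\batch)$ over the whole batch. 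The trade-off is instructive: the paper's route needs no assumption on step ordering, but its aggregate charge is loose---a single low-degree vertex that is the low endpoint of many updates has its enlarged $\lh$ list rescanned once per update, so the relevant sum over updates can be $\Theta(\batch^2)$ rather than $O(\batch)$---whereas your invariant eliminates exactly this case, since such a vertex would already have been promoted and its updates handled as high--high updates, with the promotion cost accounted for separately in the minor-rebalancing lemma. So your argument is valid and arguably tighter, but you should state explicitly that it relies on the promotion-first ordering of the overview rather than the step order of Algorithm~\ref{alg:batchdynamictriangle}. Your depth and space accounting (hash-table operations at $O(\log^*(\batch+m))$ depth w.h.p., $O(1)$-depth atomic-adds, $O(m)$ possible entries in $\cT$ coming from the $O(\sqrt{m})$ bound on the number of high-degree vertices, and not materializing the update--neighbor pairs) matches the paper's.
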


\begin{proof}
  For each $w$, we find all of its high-degree neighbors in $\lh$ and perform the increment or decrement in the corresponding entry in $\cT$ in parallel (at this point, the vertices are still classified based on their original degrees). The total number of new neighbors gained across all  vertices is $O(\batch)$ since there are $\batch$ updates.
  Therefore,  across all updates, this takes $O(\batch\sqrt{m}+\batch)$ work and $O(\log^*{(\batch+m)})$ depth w.h.p. due to hash table lookup and updates.
  Then, for all high-degree neighbors found, we perform the increments or decrements on the corresponding entries in $\cT$ in parallel, taking the same bounds.
All vertices can be processed in parallel, giving a total of $O(\batch\sqrt{m}+\batch)$  work and $O(\log^*(\batch+m))$ depth w.h.p.
\end{proof}

The next lemma bounds the complexity of updating the triangle count
in step~\ref{bt:countnewtrianglesperupdate}.

\begin{restatable}{lemma}{updatingtriangles}\label{lem:updating-triangles}
Updating the triangle count takes $O(\batch\sqrt{m})$ work and
$O(\log^* (\batch+m))$ depth w.h.p., and $O(\batch+m)$ space.
\end{restatable}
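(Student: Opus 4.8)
The plan is to bound the cost of a single call to the six-way counting routine $\ctriang{i}{d}{update}$ and then observe that all $\batch$ updates run independently in parallel. Two structural facts drive the $\sqrt m$ factor, and I would establish them first: by definition a low-degree vertex has at most $t_1 = O(\sqrt m)$ neighbors, and the number of high-degree vertices is at most $2m/t_2 = O(\sqrt m)$, since each has degree at least $t_2 = \Theta(\sqrt m)$ while the total degree is $2m$.

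With these in hand I would case on the degree classes of the endpoints $u,v$ of the update. If at least one endpoint is low-degree, the routine scans that endpoint's neighbors in $\lh$ and $\lowl$---at most $O(\sqrt m)$ of them---and for each candidate third vertex $w$ checks in $O(1)$ via a hash lookup whether the closing edge is present and reads its mark (old, inserted, or deleted), giving $O(\sqrt m)$ work. If both endpoints are high-degree, the third vertex is either high-degree, handled by scanning the $O(\sqrt m)$ high-degree vertices and querying $\hh$ in $O(1)$ each, or low-degree, handled by a single $O(1)$ lookup into $\cT(u,v)$: by construction of $\tinsnoarg$ and $\tdelnoarg$, the slots of $\cT(u,v)$ already record how many wedges through a low-degree vertex carry exactly $0,1,2$ inserted or deleted edges, so returning $\tup{1},\ldots,\tup{5}$ as dictated by $(i,d)$ suffices. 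Every one of the six $(i,d)$ combinations therefore costs $O(\sqrt m)$, and summing over the $\batch$ updates gives $O(\batch\sqrt m)$ work. The lexicographic tie-break used when both endpoints lie in the same degree class ensures we expand neighbors from only one endpoint, so no triangle is enumerated twice inside a single update's count and the bound is not inflated.

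For depth I would note that the $\batch$ updates, and within each update the $O(\sqrt m)$ candidate third vertices, are all processed in parallel; the only primitives are hash-table operations on $\hh,\hl,\lh,\lowl,\cT$, costing $O(1)$ depth for lookups and $O(\log^*(\batch+m))$ depth w.h.p.\ when an update is involved. The one step that would naively cost $O(\log m)$ depth is reducing the $O(\sqrt m)$ per-candidate indicators into the counters $c_1,\dots,c_6$; I would instead accumulate them with the atomic-add primitive assumed in Section~\ref{sec:prelims}, which performs each increment in $O(1)$ depth, so the overall depth stays $O(\log^*(\batch+m))$ w.h.p. The space is $O(\batch)$ for the per-update tuples on top of the $O(m)$ already occupied by the data structures, for $O(\batch+m)$ total.

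The main obstacle will be pinning down the two $O(\sqrt m)$ structural bounds at the precise moment counting occurs: within a single batch a low-degree endpoint may have absorbed new insertions, so I must argue that its neighbor count is still $O(\sqrt m)$---relying on the fact that vertices whose degree crosses $t_2$ are promoted to high-degree before the count is taken, and that the thresholds $t_1,t_2$ remain $\Theta(\sqrt m)$ throughout the batch because major rebalancing resets $M$ only when the edge count leaves $[M/4,M]$. I would also explicitly flag that the $O(\log^*)$ (rather than $O(\log)$) depth hinges entirely on the $O(1)$-depth atomic-add assumption, and verify that the $\cT$ slots read when both endpoints are high-degree exclude wedges mixing insertions and deletions, matching the semantics fixed by $\tinsnoarg$ and $\tdelnoarg$.
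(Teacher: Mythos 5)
Your proposal is correct and follows essentially the same route as the paper's proof: the same case split (at least one low-degree endpoint handled by an $O(\sqrt{m})$ neighbor scan with $O(1)$ hash lookups; both endpoints high-degree handled by scanning the $O(\sqrt{m})$ high-degree vertices via $\hh$ plus a single $\cT(u,v)$ lookup), the same use of atomic-add to accumulate $c_1,\ldots,c_6$ in $O(1)$ depth, and the same space accounting. Your version is somewhat more careful than the paper's—in particular, flagging that the degree thresholds and classifications must still be $\Theta(\sqrt{m})$-valid at counting time, and that low-degree is really "degree at most $t_2$" rather than $t_1$—but these refinements do not change the argument.
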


\begin{proof} We initialize $c_1,\ldots,c_6$ to $0$.
  For each edge update in $\batchset$ where both endpoints are
  high-degree, we perform lookups in $\cT$ and $\hh$ for the relevant
  values in parallel and increment the appropriate $c_i$. Finding all
  triangles containing the edge update and containing only high-degree
  vertices takes $O(\batch \sqrt{m})$ work and $O(\log^* (\batch+m))$
  depth w.h.p. This is because there are $O(\sqrt{m})$ high-degree vertices in
  total, and for each we check whether it appears in the $\hh$ table for both
  endpoints of each update.
  Performing lookups in $\cT$ takes $O(\batch)$ work and
  $O(\log^* (\batch+m))$ depth w.h.p.

  For each update containing at
  least one endpoint with low-degree, we perform lookups in the tables
  $\hl$, $\lh$, and $\lowl$ to find all triangles containing the
  update and increment the appropriate $c_i$. This takes $O(\batch
  \sqrt{m}+\batch)$ work and $O(\log^*(\batch+m))$ depth
  w.h.p. Incrementing all $c_i$'s for all newly updated triangles
  takes $O(\batch)$ work and $O(1)$ depth.  We then apply the equation
  in step~\ref{bt:countnewtrianglesperupdate} to update $C$, which
  takes $O(1)$ work and depth.
\end{proof}

The following lemma
bounds the cost for minor rebalancing, where a
low-degree vertex becomes high-degree or vice versa
(step~\ref{bt:performminorrebalancing}).

\begin{restatable}{lemma}{minorrebalance}\label{lem:minor-rebalancing}
Minor rebalancing for edge updates
  takes $O(\batch \sqrt{m})$ amortized work and $O(\log^*(\batch+ m))$ depth w.h.p., and $O(\batch+m)$ space.
\end{restatable}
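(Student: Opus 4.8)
The plan is to bound the cost of minor rebalancing by separating two quantities: an amortized bound on the \emph{number} of vertices that change classification in a batch, and a (near) worst-case bound on the cost of repairing the data structures for a single reclassification. The batch-dependent part of each reclassification I will charge directly to the updates in the batch, while the ``baseline'' part I will pay for using the buffer between $t_1$ and $t_2$.

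First I would establish the amortization of the number of reclassifications. The key observation is that the switching thresholds are separated by a buffer of width $t_2 - t_1 = \sqrt{M} = \Theta(\sqrt{m})$: a low-degree vertex becomes high-degree only once its degree exceeds $t_2$, and a high-degree vertex becomes low-degree only once its degree drops below $t_1$. Hence between two consecutive reclassifications of a fixed vertex $u$, at least $\sqrt{M}$ edge updates incident to $u$ must occur. If at each (re)initialization we break ties inside the buffer at its midpoint (so every freshly classified vertex starts $\Omega(\sqrt m)$ away from its switching threshold), the same holds before $u$'s first reclassification. Since the per-vertex intervals between consecutive reclassifications are disjoint and each update touches two endpoints, over any sequence of $U$ total updates the total number of reclassifications is $O(U/\sqrt{m})$, where $m = \Theta(M)$ is fixed between major rebalances. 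Thus a batch of $\batch$ updates triggers $O(\batch/\sqrt{m})$ reclassifications amortized.

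Next I would bound the cost of a single reclassification of $u$. Moving $u$'s incident edges among $\hh$, $\hl$, $\lh$, and $\lowl$ costs $O(\deg(u))$. Repairing $\cT$ has two parts: recreating (or deleting) $u$'s \emph{endpoint} entries $\cT(u,\cdot)$, and adjusting the \emph{middle-vertex} pairs, i.e.\ incrementing or decrementing $\cT(x,y)$ over all pairs $x,y$ of high-degree neighbors of $u$. Because the degree sum is $2m$ and each high-degree vertex has degree $\geq t_2 = \Theta(\sqrt m)$, there are only $O(\sqrt{m})$ high-degree vertices; hence the number of endpoint entries and the number of high-degree neighbor pairs are both $O(m)$, and recreating the endpoint counts (for each low-degree neighbor $w$ of $u$, scanning $w$'s high-degree neighbors) costs $O(\deg(u)\cdot\sqrt{m})$. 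At the moment of switching, $\deg(u) = O(\sqrt{m} + a_u)$, where $a_u$ is the number of updates incident to $u$ in the current batch (a low-degree vertex begins the batch with degree at most $t_2 = O(\sqrt m)$). So the per-switch cost is $O(m + a_u\sqrt{m})$. Summing over the batch, the $O(m)$ baseline contributes (number of switches)$\times O(m)$, which amortizes to $O(\batch/\sqrt{m})\cdot O(m) = O(\batch\sqrt{m})$ by the previous paragraph, while the $O(a_u\sqrt{m})$ terms sum to $O(\sqrt{m})\sum_u a_u = O(\batch\sqrt{m})$ directly since $\sum_u a_u = O(\batch)$. For depth and space, I would perform all reclassifications and all table and $\cT$ repairs in parallel using the parallel hash table, resolving concurrent increments to a common $\cT(x,y)$ with atomic-add; the depth is then governed by parallel hash-table operations and is $O(\log^*(\batch+m))$ w.h.p., and since $\cT$ has $O(m)$ entries and all temporary arrays are $O(\batch+m)$, the space is $O(\batch+m)$.

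The main obstacle I expect is the interaction between the single-batch degree blow-up and the cross-batch amortization: a vertex may switch in one batch while having received far more than $\sqrt{m}$ incident updates, so I must cleanly isolate the buffer-amortized baseline $O(m)$ cost from the batch-attributable $O(a_u\sqrt m)$ cost rather than charging everything to the buffer. I would also need to handle the first-reclassification and major-rebalance boundary carefully so that the $\Omega(\sqrt m)$-updates-per-switch invariant applies to the very first switch of each vertex, charging the (infrequent, once per $\Omega(m)$ updates) major rebalance separately. A secondary subtlety is correctness under parallelism: when $u$ and some of its neighbors are reclassified within the same batch, the $\cT$ repairs must be atomic and consistently ordered so that each affected wedge is counted exactly once; processing deletions before insertions and using atomic-add resolves this without changing the asymptotic bounds.
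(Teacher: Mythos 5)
Your proposal is correct and takes essentially the same approach as the paper's proof: bound the hash-table edge moves by $O(\deg(v))$ per reclassified vertex, bound the $\cT$ repair by $O(\deg(v)\cdot\sqrt{m})$ per reclassified vertex (scanning low-degree neighbors against high-degree pairs, with atomic-add and parallel hash tables giving $O(\log^*(\batch+m))$ depth), and amortize against the $\Omega(\sqrt{m})$ incident updates forced by the buffer $t_2 - t_1 = \Theta(\sqrt{m})$ before any vertex can switch classification. Your explicit midpoint tie-breaking at (re)initialization and the clean separation of the $O(m)$ baseline cost from the $O(a_u\sqrt{m})$ batch-attributable cost are somewhat more careful bookkeeping than the paper provides, but the underlying argument is the same.
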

\begin{proof}
We describe the case of edge insertions, and the case for edge
deletions is similar.  Using approximate compaction to perform the
filtering, we first find the set $S$ of low-degree vertices exceeding
$t_2$ in degree. This step takes $O(\batch)$ work and $O(\log^*
\batch)$ depth w.h.p. For vertices in $S$, we then delete the edges from
their old hash tables and move the edges to their new hash tables. The
work for each vertex is proportional to its current degree, giving a
total work of $O(\sum_{v\in S}\deg(v)) = O(\batch\sqrt{m}+\batch)$
w.h.p. since the original degree of low-degree vertices is
$O(\sqrt{m})$ and each edge in the batch could have caused at most 2
such vertices to have their degree increase by 1 (the w.h.p.\ is for
parallel hash table operations).

In addition to moving the edges into new hash tables, we also have to
update $\cT$ with new pairs of vertices that became high-degree and
delete pairs of vertices that are no longer both high-degree.
To
update these tables, we need to find all new pairs of high-degree
vertices. There are at most $O(\Delta \sqrt{m+\batch})$ such new pairs, which
can be found by filtering neighbors using approximate compaction of vertices in $S$ in
$O(\Delta\sqrt{m+\batch})$ work and $O(\log^*(\batch + m))$ depth w.h.p.
For each pair $(u,v)$, we
check all neighbors of an endpoint
that just became high-degree and
increment the entry $\cT(u, v)$ for each low-degree neighbor $w$ found
that has edges $(u, w)$ and $(w, v)$.
Low-degree
neighbors have degree $O(\sqrt{m+\batch})$, and so the total work is
$O(\batch (m+\batch))$ and depth is $O(\log^*(\batch + m))$ w.h.p.
using atomic-add.  There must have been
$\Omega(\sqrt{m})$ updates on a vertex before minor rebalancing is
triggered, and so the amortized work per update is $O(\Delta
\sqrt{m})$ and the depth is $O(\log^* m)$ w.h.p. The space for filtering is
$O(m+\batch)$.
\end{proof}

We now finish showing Theorem~\ref{thm:linear-space-update}.
Lemma~\ref{thm:batch-triang-correct} shows that our algorithm
maintains the correct count of triangles.
Lemmas~\ref{lem:update-table-ins},~\ref{lem:updating-triangles}, and~\ref{lem:minor-rebalancing}  show that the cost of
updating tables to reflect the batch, updating the triangle counts, and minor rebalancing
is  $O(\batch\sqrt{m}+\batch)$ amortized work and $O(\log^*(\batch+m))$ depth w.h.p., and
$O(\batch+m)$ space.

Step~\ref{bt:scanthroughupdatesagain} can be done in
$O(\batch\sqrt{m})$ work and $O(\log^* m)$ depth as follows. We scan
through the batch $\batchset$ in parallel and update the hash tables
$\hh$, $\hl$, $\lh$, and $\lowl$ in $O(\batch)$ work and $O(\log^* (\batch+m))$
depth w.h.p.  For all updates in $\batchset$ containing one
high-degree vertex and one low-degree vertex, we update the table
$\cT$ in parallel by scanning the  neighbors in $\lh$ of
the low-degree vertex. This step takes $O(\batch \sqrt{m}+\batch)$ work and
$O(\log^* (\batch+m))$ depth w.h.p. Major rebalancing
(step~\ref{bt:majorrebalancing}) takes $O((\batch+m)^{3/2})$ work and $O(\log^*(\batch+
m))$ depth by re-initializing the data structures. The rebalancing
happens every $\Omega(m)$ updates, and so the amortized work per
update is $O(\sqrt{\batch+m})$ and depth is $O(\log^* (\batch+m))$
w.h.p.

Therefore, our update algorithm takes $O(\Delta\sqrt{\Delta+m})$ amortized work
and $O(\log^* (\batch+m))$ depth w.h.p., and $O(\batch+m)$ space overall
using atomic-add as stated in
Theorem~\ref{thm:linear-space-update}.

\myparagraph{Bounds without Atomic-Add}
Without the atomic-add
instruction, we can use a parallel reduction~\cite{JaJa92} to sum over
values when needed.
This is work-efficient and takes logarithmic depth, but uses space proportional to the number of values summed over in parallel.
For updates, this is
bounded by $O(\Delta\sqrt{m}+\Delta)$, and for initialization and major rebalancing, this is bounded by $O(\alpha m)$~\cite{ShunT2015}.
This would give
an overall bound of $O(\Delta(\sqrt{\Delta+m}))$ work and  $O(\log
(\Delta+m))$ depth w.h.p., and $O(\alpha m+\Delta\sqrt{m})$ space.

\ifCameraReady
    \section{Dynamic $k$-Clique Counting via Fast Static Parallel
  Algorithms}\label{sec:arboricityclique}

We present a very simple
algorithm for dynamically maintaining the number of $k$-cliques based
on statically enumerating smaller cliques in the graph, and
intersecting the enumerated cliques with the edge updates in the input
batch. The algorithm is space-efficient.

Our algorithm is based
on a work-efficient parallel
algorithm for counting $k$-cliques in  $O(m\alpha^{k-2})$ expected work and
$O(\log^{k-2}n)$ depth w.h.p.\ by Shi et al.~\cite{shi2020parallel}. Using this algorithm, we
show that updating the $k$-clique count for a batch of $\Delta$
updates can be done in $O(\Delta(m+\Delta)\alpha^{k-4})$ expected work,
$O(\log^{k-2}n)$ depth w.h.p., and $O(m + \Delta)$ space.
For $\Delta \ge m$ we simply call the static algorithm, and for $\Delta <m$ we use the static
algorithm to (i) enumerate all $(k-2)$-cliques, and (ii) check
whether each $(k-2)$-clique forms a $k$-clique with an edge in the
batch. This procedure outperforms re-computation using the
static parallel $k$-clique counting algorithm for $\Delta = o(\alpha^{2})$.
The full details of
our algorithm can be found in the full version~\cite{fullversion} of this
paper.

\fi
\ifFull
\section{Dynamic $k$-Clique Counting via Fast Static Parallel Algorithms}
\label{sec:arboricityclique}

In this section, we present a very simple algorithm for dynamically
maintaining the number of $k$-cliques for $k > 3$ based on statically
enumerating a number of smaller cliques in the graph, and intersecting
the enumerated cliques with the edge updates in the input batch.
Importantly, the algorithm is space-efficient, and only relies on
simple primitives such as clique enumeration of cliques of size smaller
than $k$, for which there are highly efficient algorithms both in
theory and practice.

\myparagraph{Fast Static Parallel $k$-Clique Enumeration}
The main tool used by algorithm is the following theorem, which is
presented in concurrent and independent work~\cite{shi2020parallel}:
\begin{theorem}[Theorem 4.2 of~\cite{shi2020parallel}]\label{thm:static_parallel_enumerate}
There is a parallel algorithm that given a graph $G$ can enumerate all
$k$-cliques in $G$ in $O(m\alpha^{k-2})$ expected work and
$O(\log^{k-2} n)$ depth w.h.p., using $O(m)$ space.
\end{theorem}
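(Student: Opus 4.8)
The plan is to parallelize the classical clique-listing algorithm of Chiba and Nishizeki~\cite{Chiba1985}, which already attains the target work bound sequentially, and whose combinatorial structure was later parallelized (albeit not in polylogarithmic depth) by Danisch et al.~\cite{Danisch2018}. The first step is to compute a \emph{low out-degree orientation} of $G$: an acyclic orientation in which every vertex has out-degree $\O{\alpha}$. Such an orientation exists because the degeneracy of a graph is within a constant factor of its arboricity, and it can be produced in polylogarithmic depth by a parallel peeling subroutine that repeatedly removes, in parallel, all vertices whose current degree is below a constant multiple of $\alpha$. With this orientation fixed, I would enumerate $k$-cliques by a recursive DAG traversal: in parallel over all vertices $v$, form the subproblem on $G[N^+(v)]$ (the subgraph induced by the out-neighbors of $v$), recursively enumerate all $(k-1)$-cliques there using the \emph{same} orientation, and prepend $v$ to each; the base case $k=2$ simply outputs the directed edges. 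Because the orientation is acyclic, every $k$-clique is generated exactly once, namely from its source vertex in the induced ordering.

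For the work bound I would set up the natural recurrence on clique size. Each out-neighborhood has size $\O{\alpha}$, so one additional level of recursion multiplies the per-subproblem cost by $\O{\alpha}$, while the top level ranges over the $m$ edges; for $k=3$, intersecting $N^+(u)$ with $N^+(v)$ over all edges costs $\sum_{(u,v)\in E}\min(d^+(u),d^+(v)) = \O{\alpha m}$ by the standard arboricity bound. Unrolling the recursion $k-2$ times yields total expected work $\O{m\alpha^{k-2}}$, the expectation arising only from using hashing to implement the neighborhood intersections; the parallel algorithm performs exactly the same set of operations as the sequential one, so no work is lost to parallelization.

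The depth analysis is the more delicate part. The recursion has depth $k-2$, and at each level I would extract the induced subgraph $G[N^+(v)]$ by intersecting out-neighborhoods and then relabeling the surviving vertices via approximate compaction~\cite{Gil91a}, each of which runs in polylogarithmic depth w.h.p. Composing these nested parallel operations along the critical path of the recursion gives the claimed $O(\log^{k-2} n)$ depth; the key technical care is to ensure that the per-level intersection-and-relabel step is implemented so that the depths compose into this bound rather than a larger polylogarithmic factor.

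The main obstacle, and the step I expect to require the most care, is simultaneously meeting the $\O{m}$ space bound while exposing full parallelism. Since a graph of arboricity $\alpha$ can contain $\Omega(m\alpha^{k-2})$ cliques, the algorithm cannot materialize all cliques, nor can it afford to materialize all of the $\Omega(m)$-many induced subproblems at a single recursion level at once. The resolution I would pursue is to \emph{enumerate} rather than store---processing each clique through a constant-space callback---and to share a single global workspace of size $\O{m}$ across the parallel recursive branches, so that at any instant the live data is bounded by the original graph plus a recursion stack of depth $\O{k}$. Verifying that this space-sharing does not secretly serialize the computation, and hence preserves the depth bound, is where the interaction between the space and depth requirements must be argued carefully.
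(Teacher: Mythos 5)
Your proposal takes essentially the same route as the paper: the paper does not prove this theorem itself, but imports it from~\cite{shi2020parallel}, noting only that it is obtained by parallelizing the Chiba--Nishizeki algorithm and combining it with parallel low-outdegree orientation algorithms~\cite{barenboim2010, Goodrich11} --- which is exactly your peeling-based $O(\alpha)$-outdegree orientation followed by recursive enumeration over out-neighborhoods with the work recurrence you describe. The points you flag as delicate (composing the per-level polylogarithmic depth across the $k-2$ recursion levels, and keeping the space at $O(m)$ by enumerating through callbacks rather than materializing all induced subproblems) are precisely the details handled in the cited work, so your blind reconstruction matches the intended proof.
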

Theorem~\ref{thm:static_parallel_enumerate} is proven by modifying the
Chiba-Nishizeki (CN) algorithm in the parallel setting, and combining
the CN algorithm with parallel low-outdegree orientation
algorithms~\cite{barenboim2010, Goodrich11}.

\myparagraph{A Dynamic $k$-Clique Counting Algorithm}
Given Theorem~\ref{thm:static_parallel_enumerate}, one approach to
maintain the number of $k$-cliques in $G$ upon receiving a batch of
insertions or deletions $\mathcal{B}$ is to have each edge $e$ in the batch
simply enumerate all $(k-2)$-cliques, check whether $e$ forms a
$k$-clique with any of these $(k-2)$-cliques, and update the clique
counts based on the newly discovered (or deleted) cliques.

Algorithm~\ref{alg:arboricity_dynamic_count} presents a formalized
version of this idea. The algorithm first removes all nullifying
updates from $\mathcal{B}$. It then checks whether the batch is
large ($\Delta \geq m$), and if so simply recomputes the overall
$k$-clique count by re-running the static enumeration algorithm.
Otherwise, the algorithm inserts the edge insertions in the batch into
$G$, and stores them in a static parallel hash table $\mathcal{H}$
that maps each edge in the batch to a value indicating whether the
edge is an insertion or deletion in $\mathcal{B}$.

\begin{mdframedalg}{Dynamic $k$-Clique Counting}\label{alg:arboricity_dynamic_count}

  \begin{algorithmic}[1]
  \Function{$k$-Clique-Count}{$G=(V,E), \mathcal{B}$}
  \State Let $N$ be the current count of cliques before processing the current batch.
  \State Remove nullifying updates from $\mathcal{B}$.
  \If{$\Delta \geq m$}
    \State Rerun the static $k$-clique counting algorithm.
  \Else
    \State Insert all updates that are edge insertions in $\mathcal{B}$ into $G$.
    \State Let $\mathcal{H}$ be a static parallel hash table
    representing $\mathcal{B}$.
    \ParFor {$e=\{u,v\} \in \mathcal{B}$}
    \State Enumerate all $(k-2)$-cliques in $G$ in parallel using the Algorithm
    from Theorem~\ref{thm:static_parallel_enumerate}.
    \ParFor{each enumerated $(k-2)$-clique, $C$}
      \If {$C$ forms a newly inserted or newly deleted $k$-clique with
      $e$\label{line:checknewclique}}
        \If{$e=(u,v)$ is the lexicographically-first edge in $C$ in the
        batch}
         \State Atomically update the $k$-clique count with $C \cup \{u,v\}$: $N \leftarrow N + 1$.

        \EndIf
      \EndIf
    \EndParFor
  \EndParFor
  \State Delete all updates that are edge deletions in $\mathcal{B}$ from $G$.
  \EndIf
\EndFunction
\end{algorithmic}
\end{mdframedalg}

Then, in parallel, for each edge $e=(u,v)$ in the batch, it enumerates
all $(k-2)$-cliques in the graph. For each $(k-2)$-clique, $C$, the
algorithm checks whether this clique forms a newly inserted or newly
deleted $k$-clique with $e$. A newly inserted $k$-clique is one where
at least one edge is an edge insertion in $\mathcal{B}$ and all other
edges are not deleted in $\mathcal{B}$.  Similarly a newly deleted
$k$-clique is one where at least one edge is an edge deletion in
$\mathcal{B}$ and all other edges are not edge insertions in
$\mathcal{B}$.  This step is done by querying the static parallel hash
table $\mathcal{H}$ for each edge in the clique to check whether it is
an insertion or deletion in $\mathcal{B}$. Cliques consisting of a mix
of edge insertions and deletions are cliques that are not previously
present before the batch, and will not be present after the batch, and
are thus ignored.

For a newly inserted or deleted clique, the algorithm then
checks whether $e$ is the {\em lexicographically-first edge in the
batch} inside of this clique formed by $C \cup \{u,v\}$ (otherwise, a
different edge update from the batch will find and handle the
processing of this clique).\footnote{An edge $e=(u,v)$ is the lexicographically
        first edge in the batch in a clique $C$ if, $\forall e' =
        (u',v') \in C$ such that $(u',v') \in \mathcal{B}$, $e$ is
        lexicographically smaller than $e'$. Note that we are
        working over an undirected graph without self-loops. By
        convention, when discussing lexicographic comparison, we have
        that for any $e=(u,v)$ that $u < v$;
        in other words, the order in the tuple representing
        the edge is based on the lexicographical order of the two endpoints.}  Checking whether $e$ is the
lexicographically-first edge in a clique $C$ is done by querying the
static parallel hash table $\mathcal{H}$. For each clique where $e$ is
the lexicographically-first edge in the batch in the clique, we either
atomically increment, or decrement the count, based on whether this
clique is newly inserted or newly deleted. After the clique count has
been updated, the algorithm updates $G$ by performing the edge
deletions from $\mathcal{B}$.

We note that we could just as well enumerate all of the
$(k-2)$-cliques a single time, and then for each $(k-2)$-clique we
discover, check whether it forms a $k$-clique with each edge in the
batch. A practical optimization of this idea may store edges in a
batch incident to their corresponding endpoints, and so vertices in
the discovered $(k-2)$-clique would only need to check updates
incident to the vertices in this clique. The asymptotic
complexity of both ideas---joining cliques with edges, instead of
edges with cliques, and pruning edges from the batch to consider---is
the same in the worst case.

\myparagraph{Correctness and Bounds}
If a $k$-clique in the graph is not incident to any edges in the
batch, then its count is unaffected (since we only perform
modifications to the count for cliques containing edges in
$\mathcal{B}$). For cliques incident to edges in $\mathcal{B}$, we
consider two cases. If the clique $C$ is deleted after applying
$\mathcal{B}$, observe that by decomposing $C$ into a $(k-2)$-clique and
the lexicographically-first marked edge $e$ in $C$, $C$ will be found
and counted by $e$. The argument that a newly inserted clique, $C$,
will be found is similar. Lastly, cliques consisting of both edge
insertions and deletions in $\mathcal{B}$ will be correctly ignored by
the check on Line~\ref{line:checknewclique}. In other words, we check in
parallel whether any enumerated $k$-clique $C \cup \left\{u, v\right\}$ contains
both an edge deletion and an edge insertion (by checking in the hash table
representing $\batchset$);
if so, the $k$-clique composed of $C \cup \left\{u,
v\right\}$ is not counted.
This argument proves the
following theorem:
\begin{theorem}\label{thm:arboricity_dynamic_count_correct}
  Algorithm~\ref{alg:arboricity_dynamic_count} correctly maintains the
  number of $k$-cliques in the graph.
\end{theorem}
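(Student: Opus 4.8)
The plan is to track exactly which $k$-cliques change status across the batch and to verify that the algorithm applies precisely one $\pm 1$ update per changed clique. Write $G_0$ for the graph before the batch and $G_1$ for the graph after it is fully applied. After nullifying updates are removed, the surviving batch edges split into a set $I$ of pure insertions and a set $D$ of pure deletions, with $I$ and $D$ disjoint, $I \cap E(G_0) = \emptyset$, and $D \subseteq E(G_0)$. The central observation I would record first is the order of operations: the algorithm applies the insertions $I$ \emph{before} the enumeration loop and performs the deletions $D$ only \emph{after} it, so throughout enumeration the working graph $G$ has edge set $E(G) = E(G_0) \cup I = E(G_1) \cup D$. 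Since $N$ begins as the count of $k$-cliques in $G_0$, it suffices to prove that the net update equals $(\#\text{cliques in } G_1) - (\#\text{cliques in } G_0)$.

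Second, I would classify every $k$-clique (viewed as a vertex set) by how many of its edges lie in $I$ and in $D$. A clique is in $G_1$ but not $G_0$ --- call it \emph{newly inserted} --- exactly when it has at least one edge in $I$ and no edge in $D$; it is in $G_0$ but not $G_1$ --- \emph{newly deleted} --- exactly when it has at least one edge in $D$ and no edge in $I$; a clique with edges in both $I$ and $D$ lies in neither $G_0$ nor $G_1$ and contributes $0$; and a clique with no batch edge is present in both or neither and is likewise unchanged. Hence the required net update is $(\#\text{newly inserted}) - (\#\text{newly deleted})$, and the goal reduces to showing the algorithm adds $+1$ for each newly inserted clique, $-1$ for each newly deleted clique, and $0$ for every other clique.

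Third, I would establish the decisive \emph{exactly-once} property. Both a newly inserted clique (all its edges lie in $G_1 \subseteq G$) and a newly deleted clique (all its edges lie in $G_0 \subseteq G$) are full $k$-cliques in the working graph $G$, so each is discovered during enumeration --- this is precisely why the deletions must be deferred. For such a clique $K$, let $e^\star = (u,v)$ be its lexicographically-first batch edge (well defined, since $K$ contains at least one edge of $I \cup D$). When the outer loop processes $e^\star$, the unique $(k-2)$-clique $C = K \setminus \{u,v\}$ is enumerated, $C \cup \{u,v\} = K$ passes the newly-inserted/deleted test, and the lex-first check succeeds, so $K$ is counted once. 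For every other batch edge $e \neq e^\star$ of $K$ the lex-first check fails, and within the processing of $e^\star$ only the single $(k-2)$-clique $K \setminus \{u,v\}$ reconstructs $K$; thus $K$ receives exactly one update, of the correct sign.

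I expect the main obstacle to be this uniqueness argument: a single clique can contain several batch edges and is therefore \emph{reached} through several $(e, C)$ pairs, so double counting is the real danger, and the entire correctness rests on showing the lexicographically-first-edge filter selects exactly one of these pairs. To finish, I would note that mixed cliques, though also enumerated as cliques of $G$, are rejected by the check on Line~\ref{line:checknewclique} and contribute $0$ (matching their zero net effect), that cliques with no batch edge are never counted since every counted clique contains the batch edge $e$, and that the $\Delta \geq m$ branch recomputes from scratch and is trivially correct. Combining these cases gives the claimed net update and hence the theorem.
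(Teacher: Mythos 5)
Your proof is correct and follows essentially the same route as the paper's: classify each $k$-clique by its batch edges (untouched, newly inserted, newly deleted, or mixed), charge each changed clique to its lexicographically-first batch edge so it is counted exactly once, and discard mixed cliques via the check on Line~\ref{line:checknewclique}. The only difference is rigor, not approach: you make explicit two facts the paper leaves implicit---that deferring deletions until after enumeration guarantees every changed clique is still a clique of the working graph, and the precise exactly-once accounting---which strengthens rather than alters the paper's argument.
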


\begin{theorem}\label{thm:arboricity_dynamic_count_bound}
  Given a collection of $\Delta$ updates, there is a batch-dynamic
  $k$-clique counting algorithm that updates the $k$-clique counts
  running in $O(\Delta(m+\Delta)\alpha^{k-4})$ expected work and
  $O(\log^{k-2} n)$ depth w.h.p., using $O(m + \batch)$
  space.
\end{theorem}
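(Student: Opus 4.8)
The correctness is already given by Theorem~\ref{thm:arboricity_dynamic_count_correct}, so the plan is only to bound the work, depth, and space of Algorithm~\ref{alg:arboricity_dynamic_count}, handling its two branches separately. For the branch $\Delta \geq m$, I would simply charge the cost to a single call of Theorem~\ref{thm:static_parallel_enumerate} on the updated graph, which has at most $m+\Delta$ edges: this is $O((m+\Delta)\alpha^{k-2})$ expected work, $O(\log^{k-2}n)$ depth w.h.p., and $O(m+\Delta)$ space. Since the arboricity satisfies $\alpha = O(\sqrt{m+\Delta})$, in this regime $\alpha^2 = O(m+\Delta) = O(\Delta)$, and hence the work is $O(\Delta\,\alpha^{k-2}) = O(\Delta\cdot\alpha^2\cdot\alpha^{k-4}) = O(\Delta(m+\Delta)\alpha^{k-4})$, matching the claim; the depth and space already lie within the stated bounds.

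The substantive case is $\Delta < m$. First I would dispatch the cheap preprocessing---removing nullifying updates, inserting the batch's edge insertions into $G$, and constructing the hash table $\mathcal{H}$---each of which costs $O(\Delta)$ work, $O(\log^*\Delta)$ depth w.h.p., and $O(\Delta)$ space using the primitives of Section~\ref{sec:prelims}. The dominant term comes from the enumerate-and-check step. Applying Theorem~\ref{thm:static_parallel_enumerate} with parameter $k-2$, enumerating all $(k-2)$-cliques of the updated graph takes $O((m+\Delta)\alpha^{k-4})$ expected work and $O(\log^{k-4}n)$ depth w.h.p.; since each clique is visited at least once, the number of $(k-2)$-cliques is also $O((m+\Delta)\alpha^{k-4})$. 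For each such clique $C$ and each of the $\Delta$ batch edges $e=(u,v)$, testing whether $C\cup\{u,v\}$ is a $k$-clique and whether $e$ is the lexicographically-first batch edge inside it requires only $O(k)=O(1)$ lookups into $G$ and $\mathcal{H}$. Summing over all (clique, edge) pairs gives $O\!\left(\Delta(m+\Delta)\alpha^{k-4}\right)$ expected work, which dominates both the preprocessing and the enumeration and matches the target work bound.

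For the depth, the enumeration contributes $O(\log^{k-4}n)$ w.h.p., while the two nested parallel-for loops (over batch edges and over enumerated cliques) plus the atomic count updates add only $O(\log n)$ forking overhead; thus the overall depth is $O(\log^{k-4}n + \log n) = O(\log^{k-2}n)$ w.h.p., the last step being a loose upper bound valid for $\log n \geq 1$. The step I expect to require the most care is the space bound of $O(m+\Delta)$: read literally, the pseudocode runs $\Delta$ enumerations concurrently, each using $\Theta(m)$ working space by Theorem~\ref{thm:static_parallel_enumerate}, which would inflate the space to $\Theta(\Delta m)$. To avoid this I would adopt the equivalent formulation described after the algorithm---enumerate the $(k-2)$-cliques a single time and, inside the enumeration callback, check each discovered clique against all $\Delta$ batch edges, rather than re-enumerating once per edge. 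This reorders exactly the same set of (clique, edge) checks, so it leaves the work and depth unchanged, while keeping the enumeration's working space at $O(m)$, the graph at $O(m+\Delta)$, and the batch together with $\mathcal{H}$ at $O(\Delta)$, for a total of $O(m+\Delta)$. Finally, deleting the batch's edge deletions from $G$ costs $O(\Delta)$ work and $O(\log^*\Delta)$ depth, within all bounds.
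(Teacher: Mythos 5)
Your proposal is correct and follows essentially the same route as the paper's proof: the same case split on $\Delta \geq m$ versus $\Delta < m$, the same use of $\alpha^2 = O(m+\Delta)$ to absorb the large-batch case, and the same ``enumerate $(k-2)$-cliques via Theorem~\ref{thm:static_parallel_enumerate}, check each against the batch in $O(1)$ work, and multiply by $\Delta$'' accounting for the main case. The one place you go beyond the paper is the space bound: the paper's proof never addresses space at all, even though, as you observe, a literal reading of Algorithm~\ref{alg:arboricity_dynamic_count} runs $\Delta$ concurrent enumerations, each needing $\Theta(m)$ working space by Theorem~\ref{thm:static_parallel_enumerate}, which would give $\Theta(\Delta m)$ rather than the claimed $O(m+\Delta)$. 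Your fix --- performing a single enumeration and checking each discovered $(k-2)$-clique against all $\Delta$ batch edges inside the callback --- is exactly the equivalent reformulation the paper mentions only in passing (as a remark after the algorithm, not in the proof), and it is the right way to make the stated space bound actually hold while preserving the work and depth analysis. So your write-up is not only consistent with the paper's argument but patches a genuine omission in it.
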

\begin{proof}
We analyze Algorithm~\ref{alg:arboricity_dynamic_count}. First,
updating the graph, assuming that the edges incident to each vertex are
represented sparsely using a parallel hash table, requires $O(\Delta)$
work and $O(\log^{*} n)$ depth w.h.p.

If $\Delta \geq m$, the algorithm calls the static $k$-clique counting
algorithm, which takes $O((m + \Delta)\alpha^{k-2})$ expected work.
Since $m = O(\Delta)$ and $\alpha^{2} = O(m + \Delta)$, the work of
calling the static algorithm is upper-bounded by $O(\Delta(m+\Delta)
\alpha^{k-4})$ as required.  Finally, the depth bound is
$O(\log^{k-2} n)$ w.h.p.\ as required.

Otherwise, $\Delta < m$. Then, the algorithm first inserts and marks
the batch in the graph.
It also stores the edges in the batch in a parallel hash
table. Creating the parallel hash table takes $O(\Delta)$ work and
$O(\log^{*} n)$ depth w.h.p., which are both subsumed by the overall
work and depth for the relevant setting of $k > 2$. For each update,
we list all $(k-2)$-cliques using the algorithm from
Theorem~\ref{thm:static_parallel_enumerate}.  This step can be done
in $O((m+\Delta)\alpha^{k-4})$ expected work and $O(\log^{k-4} n)$
depth w.h.p. If the
$(k-2)$-clique $C$ forms a $k$-clique with $e$, then the cost of checking
whether the clique is newly inserted or newly deleted using
$\mathcal{H}$ costs $O(k)$ work, which is a constant, and $O(1)$
depth. The cost of checking whether $e$ is the lexicographically first
edge in $\mathcal{B}$ is also constant.
Multiplying the cost of
enumeration by the number of edges in the batch completes the proof.
\end{proof}

Our batch-dynamic algorithm outperforms re-computation using the
static parallel $k$-clique counting algorithm for $\Delta = o(\alpha^{2})$.

It is an interesting open question whether our dependence
on $m$ could be entirely removed from the update bound.  Existing work
has provided efficient sequential dynamic algorithms maintaining the
$k$-clique count in $\tilde{O}(\alpha^{k-2})$ work per update
using dynamic low out-degree orientations~\cite{Dvorak2013}. It would
be interesting to understand whether such an algorithm can be
work-efficiently parallelized  in the parallel batch-dynamic setting, which would allow the dynamic algorithm to match
the work of static  parallel recomputation up to logarithmic factors.

\newcommand{\counttriangles}[1]{\mathtt{count\_updated\_low\_degree\_triangles}(#1)}

\section{Dynamic $k$-Clique via Fast Matrix Multiplication}\label{sec:mm}

In this section, we present our final result which is
a parallel \batchdynamic{} algorithm for
counting $k$-cliques based on fast matrix multiplication in general graphs
(which may be dense). For bounded arboricity graphs, we can also
count cliques in
$O(\Delta(m + \Delta)\alpha^{k-4})$ expected work and
$O(\log^{k-2} n)$ depth w.h.p., using $O(m + \Delta)$ space.
Due to the similarity of this result to the static parallel $k$-clique
counting algorithm given in~\cite{shi2020parallel}, we do not present the
details of the proof of this result here but instead refer the
interested reader to Appendix~\ref{sec:arboricityclique}.

Using
parallel matrix multiplication (discussed in Section~\ref{sec:pmm}), we achieve a better work
bound (in terms of $m$) for large values of $k$ than our bound of
$O(\Delta(\Delta+m)\alpha^{k-4})$ obtained from the simple algorithm presented in
Section~\ref{sec:arboricityclique}.
To the best of our knowledge, our algorithm (when made sequential) also achieves the
best runtime for any sequential dynamic $k$-clique counting algorithm on dense
graphs for large $k$ when using the best currently known matrix multiplication algorithm~\cite{Williams12,LeGall14}.
For values of $k > 9$, our MM based algorithm achieves $o(m^{k/2 - 1})$ amortized time compared to the arboricity-based
algorithm of~\cite{Dvorak2013} that dynamically counts cliques in $\tilde{O}(\alpha^{k-2})$ amortized time
where $\alpha$ is the arboricity of the graph (or $\tilde{O}\left(m^{k/2-1}\right)$ amortized time when
$\alpha = \Omega\left(\sqrt{m}\right)$) or the trivial
$O\left(m^{k/2-1}\right)$ algorithm of choosing all $k/2 - 1$ combinations of edges containing
neighbors of the incident vertices of the inserted edge.

Our dynamic algorithm modifies the algorithm of~\cite{AYZ97} for
counting triangles based on fast matrix multiplication and combines it
with a dynamic version of the static $k$-clique counting algorithm
of~\cite{EG04} to count the number of $k$-cliques under edge updates
in batches of size
$\Delta$. Sections~\ref{sec:kmod3}--\ref{sec:matrix-analysis} proves
the following theorem for the case when $k \bmod 3 =
0$. Section~\ref{sec:all-k-alg} describes the changes needed for the
case when $k \bmod 3 \neq 0$.

\begin{theorem}\label{thm:mm-main}
There exists a parallel \batchdynamic{} algorithm for counting the
number of $k$-cliques, where $k\bmod 3=0$, that takes
$O\left(\min\left(\Delta m^{\frac{(2k - 3)\op}{3(1+\op)}}, (m +
\batch)^{\frac{2k\op}{3(1+\op)}}\right)\right)$ amortized work and $O(\log (m + \batch))$
depth w.h.p., in $O\left((m + \batch)^{\frac{2k\op}{3(1+\op)}}\right)$ space, given a parallel
matrix multiplication algorithm with exponent $\op$.
\end{theorem}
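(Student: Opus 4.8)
The plan is to reduce $k$-clique counting in $G$ (with $k=3\ell$) to triangle counting in an auxiliary graph $G'$, and then maintain the triangle count in $G'$ dynamically via a batch-dynamic adaptation of the Alon--Yuster--Zwick (AYZ) matrix-multiplication scheme. First I would pin down the reduction: the vertices of $G'$ are the $\ell$-cliques of $G$, and two of them are adjacent exactly when the union of their $\ell$-cliques is a $2\ell$-clique of $G$. A triangle of $G'$ is then a triple of $\ell$-cliques that pairwise span $2\ell$-cliques and jointly span a $k$-clique, so triangles of $G'$ are in many-to-one correspondence with $k$-cliques of $G$; I would verify that each $k$-clique yields exactly ${k \choose \ell}{2\ell \choose \ell}$ triangles, so dividing the triangle count by this constant recovers the $k$-clique count. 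I would also show how a batch of $\batch$ edge updates in $G$ induces a batch of edge insertions/deletions in $G'$: a $G'$-edge (i.e.\ a $2\ell$-clique) changes status precisely when one of its edges lies in the $G$-batch, and each $G$-update spawns the induced $G'$-updates obtained by completing it to a $2\ell$-clique, enumerated using the static clique-enumeration primitive of Theorem~\ref{thm:static_parallel_enumerate}.

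The core of the argument is maintaining the triangle count of $G'$ under these induced updates without recomputing from scratch. Following AYZ, I would threshold the vertices of $G'$ at degree $\approx M^{t\ell}$ (with $M=2m+1$), calling them low- or high-degree. Triangles of $G'$ with at least one low-degree vertex are found by enumerating that vertex's $O(M^{t\ell})$ neighbors, and crucially only those triangles containing at least one $G'$-edge from the current batch need be (re)counted, so this cost scales with $\batch$ rather than with the whole graph. Triangles among three high-degree vertices are counted by maintaining the high-high adjacency matrix $A$ of $G'$ and reading off entries of $A^3$; since the number of high-degree vertices is bounded by a counting argument (their $G'$-degrees sum to twice the number of $2\ell$-cliques), this is a parallel matrix multiplication of dimension $O(M^{\ell(1-t)})$, contributing the $\op$-dependent term. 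The value of $t$ is fixed in Lemma~\ref{lem:t-value} to balance the low-degree enumeration work against the matrix-multiplication work, which is precisely what produces the exponent $\frac{2k\op}{3(1+\op)}$; taking the minimum with a full static recomputation for large batches yields the stated $\min(\cdot,\cdot)$ work bound, the first (incremental) term governing small batches where the contribution of each induced update is processed directly rather than by a full recomputation of $A^3$.

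I expect the main obstacle to be the second challenge: an update \emph{between two high-degree vertices} of $G'$ may create or destroy a triangle whose third vertex is low-degree, and the AYZ split handles such triangles under the low-degree machinery, so a naive dynamization misses them entirely. The resolution is Lemma~\ref{lem:one-low-high}, which asserts that whenever a high-high update creates or destroys such a triangle, the \emph{same underlying $k$-clique} of $G$ is also witnessed by at least one induced $G'$-update incident to the low-degree vertex; hence every affected low-degree triangle is detected by redirecting the count to that low-incident update. I would prove this lemma by tracing the $k$-clique back to its $3\ell$ vertices in $G$ and arguing that the corresponding triple of $\ell$-cliques in $G'$ necessarily contains a batch-edge touching the low-degree node. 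The accompanying care, mirroring the triangle algorithm, is to avoid over- and double-counting: each triangle is charged to a unique update (e.g.\ the lexicographically-first batch edge in its clique) and the total is divided by the multiplicity ${k \choose \ell}{2\ell \choose \ell}$.

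Finally, for depth and amortization I would (i) establish the $O(\log(m+\batch))$ depth, whose non-trivial ingredient is a logarithmic-depth parallelization of the tensor-based fast matrix multiplication underlying $\op$ (developed in Section~\ref{sec:pmm}), with the clique-enumeration and hash-table steps contributing only $O(\log(m+\batch))$ depth; and (ii) amortize the cost of rebuilding data structures and reclassifying vertices between the low- and high-degree regimes by triggering a rebalance only after polynomially many updates have accumulated on a vertex, exactly as in the triangle algorithm of Theorem~\ref{thm:linear-space-update}. The space bound $O\!\left((m+\batch)^{\frac{2k\op}{3(1+\op)}}\right)$ is then dominated by the parallel matrix multiplication on the high-high submatrix of $G'$ at the balanced dimension.
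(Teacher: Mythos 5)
Your proposal follows the paper's own route essentially step for step: the same auxiliary graph $G'$ (vertices are $\ell$-cliques, edges are $2\ell$-cliques, each $k$-clique contributing exactly ${k \choose k/3}{2k/3 \choose k/3}$ triangles), the same AYZ-style degree split with an adjacency matrix $A$ for the high--high part and neighbor enumeration for updates touching low-degree vertices, the same key lemma (Lemma~\ref{lem:one-low-high}) to recover triangles created by high--high updates whose third vertex is low-degree, the same lexicographic charging to avoid double counting, the same two-regime $\min$ structure (incremental algorithm for small batches, reinitialization for large ones), and the same logarithmic-depth tensor-based matrix multiplication for the depth bound and the space bound.

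However, one concrete step as you describe it would fail to meet the stated depth bound. You generate the induced $G'$-updates by completing each $G$-update to a $2\ell$-clique ``using the static clique-enumeration primitive of Theorem~\ref{thm:static_parallel_enumerate}.'' That primitive enumerates $j$-cliques in $O(\log^{j-2} n)$ depth w.h.p., so completing an updated edge requires enumerating $(2\ell-2)$-cliques in the common neighborhood, i.e., depth $O\left(\log^{2k/3-4} n\right)$ w.h.p., which exceeds the claimed $O(\log(m+\batch))$ for every relevant $k$ (and its work guarantee is only in expectation, whereas the theorem states an amortized bound). The paper instead enumerates candidate extensions by brute force in parallel---the $O(m^{\ell-1})$ candidate tuples per update are checked independently against the hash tables---which costs the same $O(\batch m^{\ell - 1})$ work but only $O(1)$ depth plus $O(\log^* m)$ w.h.p.\ for hash-table operations, leaving the overall depth dominated by the matrix multiplication. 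A smaller slip: you attribute the exponent $\frac{2k\op}{3(1+\op)}$ to the balance in Lemma~\ref{lem:t-value}; that balance (per-update low-degree work $m^{(1+t)\ell-1}$ against MM work $m^{(1-t)\ell\op}$) actually produces the first term's exponent $\frac{(2k-3)\op}{3(1+\op)}$, while $\frac{2k\op}{3(1+\op)}$ arises from the separate static balance $m^{(1+t')\ell} = m^{(1-t')\ell\op}$ with threshold $t' = \frac{\op-1}{\op+1}$ used for reinitialization---two different thresholds, one per term of the $\min$.
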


Using the best currently known matrix multiplication algorithms with exponent $\op = 2.373$,
we obtain the following work and space bounds.

\begin{corollary}\label{cor:strassen-ws}
There exists a parallel \batchdynamic{} algorithm for counting the
number of $k$-cliques, where $k\bmod 3=0$, which takes
$O\left(\min(\batch m^{0.469k - 0.704}, (m +
\batch)^{0.469k})\right)$ work and $O(\log (m + \batch))$ depth w.h.p., in
$O\left((m + \batch)^{0.469k}\right)$ space by Corollary~\ref{cor:matrix-exp}.
\end{corollary}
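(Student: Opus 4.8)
The plan is to treat Corollary~\ref{cor:strassen-ws} as a direct instantiation of Theorem~\ref{thm:mm-main} at the concrete value $\op = 2.373$. The first step is to invoke Corollary~\ref{cor:matrix-exp}, which certifies that a parallel matrix multiplication algorithm with exponent $\op = 2.373$ exists \emph{and} runs in logarithmic depth. This is the crucial hypothesis: Theorem~\ref{thm:mm-main} is stated conditionally on the availability of a parallel MM routine with exponent $\op$, and its depth guarantee relies on that routine being low-depth, so without the logarithmic-depth guarantee the $O(\log(m+\batch))$ depth bound would not transfer. With this certification in hand, the work, space, and depth bounds of Corollary~\ref{cor:strassen-ws} follow purely by substitution.

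Concretely, I would compute the single ratio $\frac{\op}{3(1+\op)}$ once and reuse it. Since $1+\op = 3.373$ and hence $3(1+\op) = 10.119$, we have $\frac{\op}{3(1+\op)} = \frac{2.373}{10.119} \approx 0.2345$. The exponent on $m$ in the first term of the minimum is then $\frac{(2k-3)\op}{3(1+\op)} = (2k-3)\cdot 0.2345 \approx 0.469k - 0.704$, matching $\batch m^{0.469k - 0.704}$. The exponent governing the second term of the minimum, and also the space bound, is $\frac{2k\op}{3(1+\op)} = 2k \cdot 0.2345 \approx 0.469k$, matching both $(m+\batch)^{0.469k}$ in the work term and the stated space $O\!\left((m+\batch)^{0.469k}\right)$. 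The depth bound $O(\log(m+\batch))$ is identical in form to that of Theorem~\ref{thm:mm-main} and carries over verbatim once the low-depth MM routine of Corollary~\ref{cor:matrix-exp} is plugged in.

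The one point that warrants care — and the closest thing to an obstacle in an otherwise mechanical argument — is the rounding of the decimal exponents: one must check that $0.469k$ and $0.469k - 0.704$ faithfully and conservatively approximate the exact exponents $\frac{2k\op}{3(1+\op)}$ and $\frac{(2k-3)\op}{3(1+\op)}$, retaining enough precision that the claimed bounds remain valid upper bounds on the work and space (and, if necessary, rounding each coefficient in the safe direction). All of the genuinely nontrivial content lives upstream: in Theorem~\ref{thm:mm-main}, and in particular in the logarithmic-depth parallel matrix multiplication guarantee of Corollary~\ref{cor:matrix-exp}, which is what makes $\op = 2.373$ a legitimate \emph{parallel} exponent with polylogarithmic depth. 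Given those two results, Corollary~\ref{cor:strassen-ws} follows immediately by setting $\op = 2.373$ and evaluating the exponents.
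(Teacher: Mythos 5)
Your proposal is correct and matches the paper's own (implicit) argument: the paper likewise presents Corollary~\ref{cor:strassen-ws} as a direct substitution of $\op = 2.373$ from Corollary~\ref{cor:matrix-exp} into the bounds of Theorem~\ref{thm:mm-main}, and your arithmetic ($\frac{\op}{3(1+\op)} \approx 0.2345$, giving exponents $0.469k$ and $0.469k - 0.704$) checks out. Your observation that the logarithmic-depth guarantee of the parallel MM routine is what allows the depth bound to transfer is exactly the role Corollary~\ref{cor:matrix-exp} plays in the paper.
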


Specifically, when amortized over the total number of edge updates
$\batch$, we obtain an amortized work bound of $O(m^{0.469k - 0.704})$ per edge update which is asymptotically better than
the combinatorial bound of $O\left(m^{k/2 - 1}\right)$ per update for
$k > 9$. To the best of our knowledge, this is also the best known worst-case
bound for dense graphs in the sequential setting.

Observe that our update algorithm only needs to handle batches of size $0 <
\batch \leq m^{\op/(1+\op)}$.  For batches
which have size $\batch > m^{\op/(1+\op)}$, we can reinitialize our
data structures in $O((m + \batch)^{0.469k})$ work ($O\left(m^{0.469k - 0.704}\right)$
amortized work per update in the batch), $O(\log \batch)$
depth, and $O((m + \batch)^{0.469k})$ space using our initialization algorithm described in
Lemma~\ref{lem:mmpreprocessing} and the fast parallel matrix multiplication of Corollary~\ref{cor:matrix-exp}, which is faster than using
the update algorithm (in general, we can use any fast matrix
multiplication algorithm that has low depth, but the cutoff for when
to reinitialize would be different). The analysis of the
reinitialization procedure (similar to the static case presented by Alon,
Yuster, and Zwick~\cite{AYZ97}) is provided in Section~\ref{sec:matrix-analysis}.  Thus, in the following sections,
we only describe our dynamic update procedures for batches of size $0
< \batch \leq m^{\op/(1+\op)}$.

\ifCameraReady
\subsection{Our Algorithm}\label{sec:kmod3}~
\fi
\ifFull
\subsection{Our Algorithm}\label{sec:kmod3}
\fi

In what follows, we assume that $k \bmod 3 = 0$ (please refer to
Section~\ref{sec:all-k-alg} for $k \bmod 3 \neq 0$). We use a batch-dynamic
triangle counting algorithm as a subroutine for our batch-dynamic $k$-clique
algorithm. Our algorithm for maintaining triangles is a batch-dynamic
version of the triangle counting algorithm by Alon, Yuster, and Zwick
(AYZ)~\cite{AYZ97}. However, our dynamic algorithm cannot directly be
used for the case of $k = 3$ (and only applies for cases $k > 3$)
due to the following challenge which we resolve in
Section~\ref{sec:alg-overview}. Furthermore, our analysis also assumes $k > 6$
for greater simplicity and
since for smaller $k$, our algorithm
from Section~\ref{sec:arboricityclique} is also faster.

\myparagraph{Adapting the Static Algorithm} We face a major challenge when
adapting the algorithm of Alon, Yuster, and Zwick~\cite{AYZ97} for our setting as
well as for the sequential setting. Because the AYZ algorithm is meant to count cliques
in the static setting, it is fine to consider two different types of triangles
and count the triangles of each type separately. The two different types of triangles
considered are triangles which contain at least one low-degree vertex and triangles
which contain only high-degree vertices. In the static case, we can find all
low-degree vertices, but in the dynamic case, we cannot afford to look at all
low-degree vertices. If we only look at low-degree vertices incident to edge
updates, then the following case may occur: an edge update between two high-degree
nodes forms a new triangle incident to a low-degree node. In such a case,
only looking at the vertices adjacent to this edge update will not find this triangle.
We resolve this issue for $k > 3$ via Lemma~\ref{lem:one-low-high} in Section~\ref{sec:alg-overview}.

\myparagraph{Definitions and Data Structures}
Given a graph $G$, we construct an auxiliary graph $G'$ consisting of
vertices where each vertex represents a clique of size $\ell = k/3$ in
$G$.\footnote{We use a hash table $\mathcal{Q}$ that stores each
  vertex in $G'$ as an index to a set of vertices in $G$ and also
  stores each set of vertices composing an $\ell$-clique in $G$
  (lexicographically sort the vertices and turn into a string) as an
  index to a vertex in $G'$.}
An edge $(u, v)$ between two vertices in $G'$ exists if and only if the cliques
represented by $u$ and $v$ form a clique of size $2\ell$ in $G$. Our
algorithm maintains a dynamic total triangle count $C$ on $G'$.
Let $M=2m+1$ and let a \defn{low-degree} vertex in $G'$ be a vertex
with degree less than $M^{t \ell}/2$ (for some $0<t<1$ to be determined
later) and a \defn{high-degree} vertex in $G'$ be a vertex with degree
greater than $3M^{t\ell}/2$. The vertices with degree in the range
$[M^{t\ell}/2, 3M^{t\ell}/2]$ can be classified as either low-degree
or high-degree. In addition to the total triangle count, we maintain a
count, $C_{\low}$, of all triangles involving a low-degree vertex.
Using the algorithm of AYZ~\cite{AYZ97}, we assume we have a two-level hash table, $\low$,
representing the neighbors of low-degree vertices in $G'$ (a table
mapping a low-degree vertex to another hash table containing its
incident edges).  We also maintain the adjacency matrix $A$ of
high-degree vertices in $G'$ used in AYZ as a two-level hash table for
easy insertion and deletion of additional high-degree vertices.
Finally, we maintain another hash table $\degarray$ which dynamically
maintains the degrees of the vertices.

An simplified version of the algorithm is given in Algorithm~\ref{alg:mmcliquesimple}.

\begin{mdframedalg}{Simplified matrix multiplication $k$-clique counting algorithm.}\label{alg:mmcliquesimple}
    \begin{algorithmic}[1]
        \Function{Count-Cliques}{$\batchset$}
            \State Update graph $G'$ with $\batchset$ by inserting new $\ell$- and $2\ell$-cliques.
            \State Find batch of insertions into $G'$, $\batchset'_I$, and batch of deletions, $\batchset'_D$.
            \State Determine the final degrees of every vertex in $G'$ after performing updates $\batchset'_I$
            and $\batchset'_D$.
            \ParFor{$\ins(u, v) \in \batchset'_I, \del(u, v) \in \batchset'_D$}\footnotemark
            \If{either $u$ or $v$ is low-degree: $d(u) \leq \delta$ or $d(v) \leq \delta$}
            \State Enumerate all triangles containing $(u, v)$. Let this set be $T$.
            \State By Lemma~\ref{lem:one-low-high}, find all possible triangles representing the same triangle
            $t \in T$.
            \State Correct for duplicate counting of triangles.
            \Else
            \State Update $A$ (adjacency list for high-degree vertices).
            \EndIf
            \EndParFor
            \State Compute $A^3$. The diagonal provides the triangle counts for all triangles containing only high-degree
            vertices.
            \State Sum the counts of all triangles.
            \State Correct for duplicate counting of cliques.
        \EndFunction
    \end{algorithmic}
\end{mdframedalg}
    \footnotetext{Some care must be taken
                to ensure that rebalancing does not incur too much work. The details of how to deal with
            rebalancing are given in the full implementation, Algorithm~\ref{alg:mmclique}.}

\ifCameraReady
\subsection{Overview}\label{sec:alg-overview}~
\fi
\ifFull
\subsection{Overview}\label{sec:alg-overview}
\fi

Our algorithm proceeds as follows.  Each edge in an update in the
batch (edges in $G$) can either create at most $O(m^{k/3 - 1})$ new
$(2k/3)$-cliques or disrupt $O(m^{k/3 - 1})$ existing $(2k/3)$-cliques
in $G$.
We treat each of these newly created or destroyed cliques as an edge
insertion or deletion in $G'$.
Since we preprocess the updates to $G$ such that there are no duplicate or nullifying updates, a destroyed clique
cannot be created again or vice versa. This means that the set of updates to $G'$ will also contain no
nullifying updates.

Importantly, the AYZ algorithm does not take into account
edge insertions and deletions between two high-degree vertices that create or destroy
triangles containing at least one low-degree vertex.\footnote{Note that this is fine for the static case but not for the dynamic case.}
Thus, we must prove the following lemma for any edge insertion/deletion in $G$
that results in an edge insertion in $G'$ between two high-degree vertices
which creates or destroys a triangle containing a low-degree vertex.
This lemma is crucial for our algorithm, since it ensures that a
triangle formed by two high-degree vertices and a low-degree vertex
will be discovered by enumerating all triangles formed or deleted by an edge update
incident to the low-degree vertex, and its current edges. Furthermore,
this lemma is the reason why our algorithm does not work for $k = 3$ cliques.

\begin{lemma}\label{lem:one-low-high}
Given a graph $G=(V, E)$, the corresponding $G' = (V', E')$, and for $k > 3$,
suppose an edge insertion (resp.\ deletion) between two high-degree vertices in
$G'$ creates a new triangle, $(u_H, w_H, x_L)$, in $G'$ which contains a low-degree vertex $x_L$.
Let $R(y)$ denote the set of vertices in $V$ represented by a vertex $y \in V'$.
Then, there exists a new edge insertion (resp.\ deletion) in $G'$ that
is incident to $x_L$ and creates a new triangle $(u', w', x_L)$
such that $R(u') \cup R(w')  = R(u_H) \cup R(w_H)$.
\end{lemma}

\begin{proof}
We prove this lemma for edge insertions in $G$. The proof
can be easily modified to account for the case of edge deletions in
$G$. Suppose an edge insertion $(y, z)$ in $G$ leads to an edge
insertion in $G'$ between the two high-degree vertices $u_H$ and $w_H$
that creates the new triangle $(u_H, w_H, x_L)$. The creation of the
new triangle signifies that a new clique was created in $G$ consisting
of vertices $R(u_H) \cup R(w_H) \cup R(x_L)$. Then, the edge insertion
$(y, z)$ created a new $2k/3$-clique in $G$ consisting of the vertices
in $R(u_H) \cup R(w_H)$.  Since the edge $(y, z)$ between $y, z \in V$
did not exist previously but now exists, ${2k/3 - 2 \choose k/3 - 2}$
new cliques were created using the set of vertices in $R(u_H) \cup
R(w_H)$.  Each of these new cliques corresponds to a new vertex in
$G'$.  Suppose $u'$ is one such new vertex representing vertex set
$R(u') \subseteq R(u_H) \cup R(w_H)$ and $w'$ represents vertex set
$R(w') = \left(R(u_H) \cup R(w_H)\right) \setminus R(u')$.  Then, new
edges are inserted between $u'$ and $w'$ and between $u'$ and $x_L$
(the edge $(w', x_L)$ might be a newly inserted edge or
it is already present in the graph)
since all triangles representing the clique of vertices $(u_H, w_H,
x_L)$ must be present in $G'$.  Thus, the new triangle $(u', w', x_L)$
is created in $G'$.
\end{proof}

We now describe our dynamic clique counting algorithm that combines
the AYZ algorithm~\cite{AYZ97} with the clique counting algorithm of~\cite{EG04}.
Given the batch of edge insertions/deletions into $G$, we first
compute the duplicate and nullifying updates and remove them.  Then,
for a set of insertions/deletions into $G'$, we form two batches, one
containing the edge insertions and one containing the edge deletions.
Given the batch of updates to $G'$, we now formulate a dynamic version
of the AYZ algorithm~\cite{AYZ97} on the updates to $G'$. For the
batch of updates, we first look at the updates pertaining to the
low-degree vertices.  For every update $(u, v)$ that contains at least
one low-degree vertex (without loss of generality, let $v$ be a
low-degree vertex), we search all of $v$'s $O\left(3M^{t\ell}/2\right)$ neighbors and
check whether a triangle is formed (resp.\ deleted). For each triangle
formed (resp.\ deleted), we update the total triangle count of the
graph $G'$. For high-degree vertices, we update our adjacency matrix
$A$ containing vertices with high-degree. To compute the triangles
containing high-degree vertices, we need only compute $A^3$ (the diagonal will
then provide us with the triangle counts).  Lastly,
one clique results in many different copies of triangles. We must
obtain the correct clique count by dividing the number of triangles by
the number of ways we can partition  the vertices in a $k$-clique into triples of
subcliques of size $k/3$.

\subsection{Detailed Parallel Batch-Dynamic Matrix Multiplication Based Algorithm}\label{sec:matrix-full-impl}
The analysis we perform in Section~\ref{sec:matrix-analysis} on the efficiency of
our algorithm is with respect to the detailed implementation.
We provide the detailed description and implementation of our
algorithm below in Algorithm~\ref{fig:detailed-matrix}.

\begin{breakablealgorithm}{}\label{alg:mmclique}
\caption{Detailed matrix multiplication based parallel batch-dynamic $k$-clique counting algorithm.}\label{fig:detailed-matrix}
\begin{enumerate}[label=(\textbf{\arabic*}),topsep=0pt,itemsep=0pt,parsep=0pt,leftmargin=20pt]
  \item Given a batch $\batchset$ of non-nullifying edge
    updates,\footnote{Recall that we can always remove nullifying edge
      updates as given in Section~\ref{sec:update-alg}.} first update
    the graph $G'$. If the update is an insertion, $\ins(u, v)$, add all
    new $\ell$-cliques created by it into $G'$. If the update is a
    deletion, $\del(u, v)$, mark all $\ell$-cliques destroyed by it
    in $G'$.\footnote{We check in our hash table $\mathcal{Q}$
      whether each newly created (deleted) $\ell$-clique is
      already represented (non-existent) in the graph $G'$. If not, we insert the new
      clique and/or remove an old clique from $\mathcal{Q}$.}
    For each update, $\ins(u, v)$ or $\del(u, v)$, determine all $2\ell$-cliques that
    include it. This will determine the set of edge insertions/deletions into $G'$. Let all edge updates
    that destroy $2\ell$-cliques be a batch $\batchset'_{D}$ of edge
    deletions in $G'$. Then, let all $2\ell$-cliques formed by edge updates be a batch of edge insertions
    $\batchset'_{I}$ into $G'$. Note that edge insertions in the batch could
    be edges for newly created vertices; for each such newly created vertex, we
    also add the vertex into $G'$ and its associated data structures.
    \label{matrix:determineedgeinsertions}

    \item Determine the final degree of each vertex after all insertions in $\batchset'_{I}$
    and all deletions in $\batchset'_{D}$. (We do not perform the updates yet--only compute the final degrees.)
    For all vertices, $X$, which become low-degree after the set of all updates (and were originally high-degree),
    we create a batch of updates $\batchset'_{I, L}$ consisting of old edges (not update edges) that are adjacent
    to vertices in $X$ and were not deleted by the batches of updates. For all vertices, $Y$, which become
    high-degree after the set of updates (and were originally low-degree), we create a batch of updates $\batchset'_{D, H}$
    consisting of old edges adjacent to vertices in $Y$ that were not deleted after the batches of updates.
    \footnote{The batch of updates $\batchset'_{I, L}$ is used to rebalance the data structures when vertices
    need to be removed from $A$ after becoming low-degree. Because the edges adjacent
    to these vertices need to be inserted into the structures maintaining low-degree vertices,
    $\batchset'_{I, L}$, then, can be thought of
    as a set of edge insertions to update low-degree data structures.
    Similarly, vertices which become high-degree need to be
    deleted from low-degree structures, and hence, $\batchset'_{D, H}$ can be thought of
as a set of edge deletions from low-degree structures.} \label{matrix:determinefinaldegree}
    \item Let the edges in $\batchset'_{D} \cup \batchset'_{D, H}$ be the batch of edge deletions to $G'$.
    For each of the edges in $\batchset'_{D} \cup \batchset'_{D, H}$,
      we first count the number of triangles it
      is a part of that contain at least one low-degree vertex. We call this the set of deleted triangles.
      Let this number of deleted triangles be $T_D$ (initially set $T_D = 0$).\label{matrix:computetd}

    \begin{enumerate}
            \item To count the number of triangles that contain at least one low-degree vertex, we first check for each edge whether one of its endpoints is
            low-degree. Let this set of edge deletions be $D'_L \subseteq \batchset'_{D} \cup \batchset'_{D, H}$.
            \item For every edge $(u', v') \in D'_L$,
              without loss of generality let $u'$ be the lexicographically\footnote{The specific lexicographical
            order for the vertices in $G'$ is fixed but
            can be arbitrary.} first low-degree vertex.
          For every edge $(u', w')$ incident to $u'$, check whether
          $(u', v')$ forms a triangle with $(u', w')$.

            \item For every $(u', v', w')$ triangle deleted (where $(u', v', w')$ is sorted lexicographically), call\\
             $t \gets \counttriangles{(u', v', w'), (u', v')}$, and atomically update $T_D \gets T_D + t$.
    \end{enumerate}

    \item Update $C_{\low} \leftarrow C_{\low} -
      T_D$.\label{matrix:updatelowcountone}

    \item Update the data structures using the batches of edges insertions and deletions, $\batchset'_D$ and $\batchset'_I$:\label{matrix:updatestructs}
    	\begin{enumerate}
    	\item Using $\batchset'_{D}$, delete the relevant edges in $\low$ (containing neighbors of low-degree vertices)
     	 and then change the relevant values in $A$ to $0$. We also
      	update $\degarray$ with the new degrees of the vertices for
      	which an adjacent edge was deleted.
      	\label{matrix:updatedegreeone}

      	\item For the batch of edge insertions into $G'$,
      	$\batchset'_{I}$, we first insert the relevant edges into
      	$\low$. Then, we change the relevant entries in $A$ from $0$ to
      	$1$. Finally, we update $\degarray$ with the new degrees of the
      	vertices following the edge insertions.
      	\label{matrix:updateinsertions}

    	\item Remove all vertices which are no longer high-degree
    	(i.e.\ their degree is now less than $M^{t\ell}/2$) from
    	$A$. Create entries in $\low$ for all edges adjacent to each
    	vertex that was removed from $A$.
    	\label{matrix:rebalancelowtohigh}

    	\item Remove the edges of all vertices which are no longer
      low-degree (i.e.\ their degree is now greater than
      $3M^{t\ell}/2$) from $\low$ and create new entries in $A$ with
      the new high-degree vertices. Set the relevant entries in $A$
      corresponding to edges adjacent to the new high-degree vertices
      to $1$.\label{matrix:rebalancehigh}
	\end{enumerate}

      \item Let the edges in $\batchset'_{I} \cup \batchset'_{I, L}$ be the batch of edge insertions to $G'$. For each of the edges in $\batchset'_{I} \cup \batchset'_{I, L}$,
        we first count the
        number of triangles it is a part of that contain at least one low-degree vertex. We call this the set of inserted triangles. Let this value be $T_I$ ($T_I = 0$ initially).
        \label{matrix:countinsertions}

		\begin{enumerate}
			\item To count the number of triangles that contain at least one low-degree vertex, we first check for each edge whether one of its endpoints is
			low-degree. Let this set of edge insertions be $I'_L \subseteq \batchset'_{I} \cup \batchset'_{I, L}$.
			\item For every edge $(u', v') \in I'_L$,
                          without loss of generality let $u'$ be the lexicographically first low-degree vertex.
      For every edge $(u', w')$ of $u'$, check whether $(u', v')$
        forms a triangle with $(u', w')$.
			\item For every newly inserted triangle $(u', v', w')$ (where $(u', v', w')$ is sorted lexicographically), call\\
			 $t = \counttriangles{(u', v', w'), (u', v')}$, and atomically update $T_I \leftarrow T_I + t$.
		\end{enumerate}
    \item Update $C_{\low} \leftarrow C_{\low} +
      T_I$.\label{matrix:updatelowcounttwo}

    \item We perform parallel matrix multiplication after all entries
      in $A$ have been modified to calculate $S = A^3$. Then, $C_{\high} = \frac{1}{2}\sum_{i \in n} S_{i, i}$.
\label{matrix:computeacubed}

    \item Update $C \leftarrow C_{\low} + C_{\high}$.
\label{matrix:updatecount}
    \item Compute the number of $k$-cliques by dividing $C$ by ${k \choose k/3}{2k/3 \choose k/3}$.
\label{matrix:computetriangles}

    \item If $m$ falls outside the range $[M/4,M]$, then reinitialize
      the degree thresholds and data
      structures.\label{matrix:reinitialize}
\end{enumerate}
\end{breakablealgorithm}

Algorithm~\ref{alg:mmclique} uses a subroutine defined below in Algorithm~\ref{alg:subroutine}.

\begin{mdframedalg}{Subroutine used in our detailed matrix multiplication $k$-clique counting algorithm
that counts the number of unique triangles containing an edge.}
\label{alg:subroutine}
\begin{enumerate}[label=(\textbf{\arabic*}),topsep=0pt,itemsep=0pt,parsep=0pt,leftmargin=20pt]
  \item Let $u', v', w' \in V'$ represent the sets of vertices $U',
    X', W' \subseteq V$, respectively.

  \item Enumerate all possible triangles that represent the clique
    containing vertices $U' \cup X' \cup W'$.\label{count:enum}

   \item Sort the vertices of each triangle lexicographically to
     obtain tuples of vertices representing the triangles.  Let
     $\id(u', v')$ be the ID of edge $(u', v')$.\footnote{There are
       many possible ways to assign IDs to edges--for example, the ID
       of an edge could be the concatenation of the IDs of the
       vertices composing the edge.}

  \item For each enumerated tuple $(x', y', z')$, create a label
    containing the tuple representing the triangle concatenated with
    all labels (sorted lexicographically) of edges that are updates in
    the triangle.  Thus, each label can have $4$ to $6$ entries
    consisting of the three vertices of a triangle tuple and at most $3$
    edge labels.  For example, suppose that $(x', y')$ is the only
    edge that is an updated edge in triangle $(x', y', z')$. Then, the
    label representing this triangle is $(x', y', z', \id(x', y'))$
    where the ID of the edge is given by $\id(x', y')$. The IDs of all
    deleted or inserted edges are appended to the end of the label in
    the order $\id(x', y'), \id(y', z'), \id(z', x')$.

  \item Sort all labels lexicographically.

  \item Without loss of generality, let $L = (x', y', z', \id(x',
    y'))$ be the lexicographically-first of these triangle labels
    which contains at least one edge deletion (resp.\ edge insertion)
    of an edge that is incident to at least one low-degree vertex.

  \item If $(u', v', w')$ corresponds to the lexicographically-first
    label $L$ \emph{and} $\id(u', v')$ is the first edge ID in the
    label that contains a low-degree vertex, then $(u', v')$ performs
    the following steps:

    \begin{enumerate}
            \item Count the number of unique triangles (using the
              labels, one can count the unique triangles) containing
              at least one edge deletion (resp.\ insertion) and at
              least one low-degree vertex as $T_D$ (resp.\ $T_I$). We
              count using the generated labels for the triangles
              enumerated in step~\ref{count:enum} of this procedure.
              \item Return $T_D$ (resp.\ $T_I$).
    \end{enumerate}

  \item If $(u', v', w')$ is not equal to $L$ \emph{or} $\id(u', v')$
  is not the first edge ID that contains a low-degree vertex in the label, return $0$.
\end{enumerate}
\end{mdframedalg}

\ifCameraReady
\subsection{Analysis}\label{sec:matrix-analysis}~
\fi
\ifFull
\subsection{Analysis}\label{sec:matrix-analysis}
\fi

In Theorem~\ref{lem:correctness-matrix}, we prove that the procedure
correctly returns the exact number of $k$-cliques in $G$. The proof is
similar to AYZ except that each $\ell$-clique can appear multiple
times in $G'$ so we need to normalize by the constant stated in step~\ref{matrix:computetriangles} of Algorithm~\ref{alg:mmclique}.

\begin{restatable}{theorem}{matrixcorrectness}\label{lem:correctness-matrix}
Algorithm~\ref{alg:mmclique} correctly computes the exact number of
cliques in a graph $G = (V, E)$ when $k \bmod 3 = 0$.
\end{restatable}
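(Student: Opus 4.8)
The plan is to prove correctness by induction over the sequence of batches: assuming that before the current batch the structures $\low$, $A$, $\degarray$, the low-degree count $C_{\low}$, and the global count $C$ all reflect the current $G'$, I would show each step restores these invariants afterwards. The argument then splits into the static triangle--clique correspondence, the all-high count via $A^3$, the incrementally maintained low-degree count, and their combination. For the correspondence I would first argue that every triangle $(a,b,c)$ of $G'$ is a genuine $k$-clique of $G$: since each pairwise union such as $R(a)\cup R(b)$ is by construction a $2\ell$-clique on $2\ell$ distinct vertices while $|R(a)|=|R(b)|=\ell$, the three $\ell$-cliques are pairwise disjoint, so their union has $3\ell=k$ vertices, and because each pairwise union is a clique all ${k \choose 2}$ edges are present. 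Conversely each $k$-clique decomposes into an ordered triple of $\ell$-cliques in exactly ${k \choose k/3}{2k/3 \choose k/3}$ ways, the factor removed in step~\ref{matrix:computetriangles}; the point requiring care is to bookkeep this multiplicity consistently against the factor $\tfrac12$ in $C_{\high}=\tfrac12\sum_i S_{i,i}$ and against the multiplicity with which Algorithm~\ref{alg:subroutine} contributes to $C_{\low}$, so that $C$ is the same fixed multiple of the true triangle count for both triangle types before dividing.

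The all-high case is comparatively direct. Every triangle of $G'$ either contains a low-degree vertex or is entirely high-degree, and these cases are disjoint and exhaustive. I would show that after step~\ref{matrix:updatestructs} the matrix $A$ is exactly the adjacency matrix of the high-degree subgraph of the updated $G'$: substeps~\ref{matrix:updatedegreeone}--\ref{matrix:updateinsertions} apply $\batchset'_D$ and $\batchset'_I$, and substeps~\ref{matrix:rebalancelowtohigh}--\ref{matrix:rebalancehigh} add or remove the rows and columns of vertices that cross the degree thresholds. Since $A^3$ is recomputed from scratch in step~\ref{matrix:computeacubed}, correctness of $C_{\high}$ reduces entirely to correctness of $A$, and the trace then counts precisely the all-high triangles (up to the fixed multiplicity).

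The low-degree count is the heart of the argument and the main obstacle. I must show $T_D$ equals the number of destroyed triangles containing a low-degree vertex and $T_I$ the number of created ones, so that $C_{\low}\gets C_{\low}-T_D+T_I$ is restored. The delicate direction is completeness: a created low-degree-containing triangle may arise from a $G'$-edge whose two endpoints are both high-degree, which the enumeration over low-degree-incident edges in step~\ref{matrix:countinsertions} never visits directly. This is exactly where I would invoke Lemma~\ref{lem:one-low-high}: it guarantees that for any such clique there is \emph{also} an inserted $G'$-edge incident to the low-degree vertex producing a triangle representing the same clique; together with the direct case (the creating edge touches a low-degree vertex) this shows every relevant clique triggers at least one call of Algorithm~\ref{alg:subroutine}. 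Uniqueness then follows from the lexicographically-first label $L$ selected inside Algorithm~\ref{alg:subroutine}: among all triggering (edge, triangle) pairs for a fixed clique, exactly one call returns a nonzero value, and that call enumerates all representing triangles and counts each low-degree-containing one exactly once. I would additionally verify that each representing triangle of a newly created clique indeed carries an update edge (so that ``contains an update edge and a low-degree vertex'' coincides with ``low-degree-containing representing triangle''), and that the rebalancing batches $\batchset'_{I,L}$ and $\batchset'_{D,H}$ move exactly those persistent triangles whose low/high classification changes when a vertex crosses a threshold---adding to $C_{\low}$ the triangles through a vertex demoted to low-degree and removing those through a vertex promoted to high-degree. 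The ordering is essential: $T_D$ is computed in step~\ref{matrix:computetd} against the old structure, before step~\ref{matrix:updatestructs}, whereas $T_I$ is computed afterwards against the new one, which prevents a triangle spanning both an insertion and a deletion from being miscounted. The deletion side is symmetric and uses the deletion form of Lemma~\ref{lem:one-low-high}.

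Finally, step~\ref{matrix:updatecount} sets $C=C_{\low}+C_{\high}$; having arranged both summands to count their triangles with the same fixed per-clique multiplicity, dividing by ${k \choose k/3}{2k/3 \choose k/3}$ in step~\ref{matrix:computetriangles} yields exactly the number of $k$-cliques. The reinitialization of step~\ref{matrix:reinitialize} only resets the thresholds and rebuilds every structure from the current graph when $m$ leaves $[M/4,M]$, so it re-establishes all invariants and does not affect correctness. The expected source of difficulty throughout is the bookkeeping in the third stage---reconciling the responsibility-assignment labeling, the reclassification batches, and the insertion/deletion ordering so that every created or destroyed low-degree-containing triangle is charged once and with the correct sign.
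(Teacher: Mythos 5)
Your plan follows the paper's own proof essentially step for step: the triangle--$k$-clique correspondence with multiplicity ${k \choose k/3}{2k/3 \choose k/3}$, the trace of $A^3$ for the all-high-degree triangles, Lemma~\ref{lem:one-low-high} together with the lexicographically-first label in the counting subroutine to find and deduplicate the low-degree-containing triangles, and the final division by the per-clique multiplicity. The extra care you flag---the $\tfrac{1}{2}$ factor in $C_{\high}$, the deletion-before-insertion ordering, and the reclassification batches---is bookkeeping that the paper's proof glosses over but resolves in the same way, so your route is not substantively different.
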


\begin{proof}
We first show that all triangles in $G'$ represent a $k$-clique in
$G$. A vertex exists in $G'$ if and only if it is a $(k/3)$-clique in
$G$. Similarly, an edge exists in $G'$ if and only if it connects two vertices in
$G'$ that form a $(2k/3)$-clique in $G$. Thus, a triangle connects $3$
pairs of $3$ distinct $(k/3)$-cliques. This implies that each pair
represents a complete subgraph, which necessarily means by the
pigeonhole principle that the triangle represents a $k$-clique. Now we
show that for each unique $k$-clique in $G$, there exist exactly ${k
  \choose k/3}{2k/3 \choose k/3}$ triangles representing it in $G'$. For each
$k$-clique in $G$, there are ${k \choose k/3}$ distinct
$(k/3)$-subcliques. Each of these subcliques is represented by a
vertex in $G'$. Each distinct triple of subcliques will be a triangle
in $G'$. There are ${k \choose k/3}$ ways to choose the first
subclique, ${2k/3 \choose k/3}$ ways to choose the second subclique,
and ${k/3 \choose k/3}$ ways to choose the third subclique in the
triple. Thus, the total number of duplicate triangles is ${k \choose
  k/3}{2k/3 \choose k/3}$.

We conclude by proving that our algorithm finds the exact number of
triangles in $G'$. All triangles containing edge updates where at
least one of its endpoints is low-degree can be found by searching all
of the neighbors of the low-degree vertex.
All such neighbors will be in $\low$, thus,
searching through the entries in $\low$ is enough to
find all triangles containing at least one low-degree vertex and an edge update
to a low-degree vertex.
By Lemma~\ref{lem:one-low-high}, all triangles with
a low-degree vertex, containing a single edge update
between high-degree vertices can be found via the $\mathtt{count\_new\_low\_degree\_triangles}$
procedure. The same logic handles vertices that change status from
high-degree to low-degree, since we treat edges incident to these
vertices as new edge insertions.
Finally, the procedure ensures that
no duplicate triangles are added to the update triangle count
because the lexicographically first triangle
counts all possible triangles representing the same clique (and no others increment the count).  Table
$A$ is used to compute (via transitive closure) the number of
triangles that contain no low-degree vertices. Thus, by computing
$A^3$, we find the remaining triangles which only contain high-degree
vertices.
Finally, dividing by the total number of different triangles that are created per
unique clique gives us the precise count of the number of $k$-cliques in $G$.
\end{proof}

\myparagraph{Cost}
We now analyze the work, depth, and space of the dynamic algorithm.
Our analysis assumes that $m^{\op/(1+\op)} = O(m^{t\ell})$ so that the $O(m^{t\ell})$ terms in our analysis are only affected by a constant factor for our batch size of $\batch \leq m^{\op/(1+\op)}$.
This is true for $k > 6$ because $t\geq 1/3$ and $\ell \geq 3\op/(1+\op)$. For small $\ell$ we use the combinatorial
algorithm from Section~\ref{sec:arboricityclique}, which is also faster.

First, we compute the work and depth bound of performing preprocessing
on an initial graph $G = (V, E)$ with $m$ edges. We can also apply this preprocessing directly without running the update algorithm
whenever we receive a batch of size $\batch > m^{\op/(1+\op)}$.

For preprocessing, we use a different threshold $m^{t'\ell}$ for low-degree and high-degree vertices.
Searching for all the triangles containing at least one low-degree vertex takes $O\left(m^{(1+t')\ell}\right)$ work by a similar calculation as in Lemma~\ref{lem:compute-low-deg} and searching for triangles containing all high-degree vertices takes $O\left(m^{(1-t')\ell\op}\right)$ work by Lemma~\ref{lem:compute-from-scratch}. Thus, the optimal value $t'$ is when $m^{(1+t')\ell} = m^{(1-t')\ell\op}$, which gives $t'=\frac{\op -1}{\op + 1}$ as in~\cite{AYZ97}.

\begin{restatable}{lemma}{preprocessing}\label{lem:mmpreprocessing}
Preprocessing the graph $G = (V, E)$ with $m$ edges into $G'$,
creating the data structures $\low$, $A$, and $\degarray$, and
counting the number of $k$-cliques takes $O\left(
m^{\frac{2k\op}{3(1+\op)}}\right)$ work and $O(\log m)$ depth w.h.p.,
and $O\left(m^{\frac{2k\op}{3(1+\op)}}\right)$
space assuming a parallel matrix multiplication algorithm with
coefficient $\op$. Using the fastest parallel matrix multiplication currently known (\cite{LeGall14}, Corollary~\ref{cor:matrix-exp}),
preprocessing takes $O\left(m^{0.469k}\right)$ work and $O(\log m)$ depth w.h.p., and
$O(m^{0.469k})$ space.
\end{restatable}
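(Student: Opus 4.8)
The plan is to follow the static Alon--Yuster--Zwick template applied to the auxiliary graph $G'$, charging the two regimes (triangles of $G'$ touching a low-degree vertex versus triangles among only high-degree vertices) separately and then balancing the degree threshold. First I would record the sizes of $G'$: since the number of $s$-cliques in an $m$-edge graph is $O(m^{s/2})$, the vertices of $G'$ (the $\ell$-cliques, with $\ell = k/3$) number $O(m^{\ell/2})$ and its edges (the $2\ell$-cliques) number $O(m^{\ell})$. I would then build $G'$ together with $\low$, $A$, and $\degarray$ by enumerating all $\ell$- and $2\ell$-cliques of $G$ and deduplicating representatives through the hash table $\mathcal{Q}$; because $k$ (hence $\ell$) is a constant, this enumeration proceeds in $O(k)$ parallel rounds of neighborhood lookups and compactions, each of $O(\log^* m)$ depth w.h.p., and occupies $O(m^{\ell})$ space, which is dominated by the final bound whenever $\op \geq 1$.

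For the work, I would split along the preprocessing threshold $m^{t'\ell}$. By a calculation analogous to Lemma~\ref{lem:compute-low-deg}, enumerating every triangle of $G'$ incident to a low-degree vertex costs $O(m^{(1+t')\ell})$ work, since each such vertex has degree at most $m^{t'\ell}$ and the total degree in $G'$ is $O(m^{\ell})$. For the high-degree part, the number of high-degree vertices is at most $(2|E'|)/m^{t'\ell} = O(m^{(1-t')\ell})$, so the adjacency matrix $A$ has dimension $O(m^{(1-t')\ell})$; computing $A^3$ via Lemma~\ref{lem:compute-from-scratch} and Corollary~\ref{cor:matrix-exp} costs $O(m^{(1-t')\ell\op})$ work. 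Equating the two exponents gives $1+t' = (1-t')\op$, i.e.\ $t' = \frac{\op-1}{\op+1}$, at which point both terms equal $O\!\left(m^{\frac{2\op}{\op+1}\ell}\right) = O\!\left(m^{\frac{2k\op}{3(1+\op)}}\right)$, establishing the work bound.

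For the depth, the enumeration and data-structure construction run in $O(\log^* m)$ depth w.h.p.\ as above, and the only step of larger depth is the matrix multiplication, which has $O(\log m)$ depth by Corollary~\ref{cor:matrix-exp}; summing a constant number of such phases yields $O(\log m)$ depth w.h.p. For the space, the two dominant objects are $A$ (of size $O(m^{2(1-t')\ell}) = O(m^{\frac{4k}{3(\op+1)}})$, which is at most the work bound since $\op \geq 2$) and the working space of the matrix multiplication (bounded by its work $O(m^{\frac{2k\op}{3(1+\op)}})$); every other structure uses $O(m^{\ell})$ space, so the total is $O\!\left(m^{\frac{2k\op}{3(1+\op)}}\right)$. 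Substituting the best known parallel exponent $\op = 2.373$ from Corollary~\ref{cor:matrix-exp} gives $\frac{2\op}{3(\op+1)} \approx 0.469$, hence $O(m^{0.469k})$ work and space and $O(\log m)$ depth.

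The main obstacle is the $O(\log m)$ depth claim, which hinges on the parallel fast matrix multiplication of Corollary~\ref{cor:matrix-exp} genuinely achieving logarithmic depth (the paper's independently interesting ingredient); granting that, the remaining care is to verify that the constant-depth-per-level clique enumeration never dominates the overall depth, and that the matrix multiplication's working space does not exceed its work, so that the space and work exponents coincide.
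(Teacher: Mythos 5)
Your proposal is correct and follows essentially the same route as the paper's proof: the same size bounds on $G'$, the same low-degree/high-degree split with the preprocessing threshold balanced at $t' = \frac{\op-1}{\op+1}$ (which the paper derives in the text immediately preceding the lemma), the same invocation of Lemma~\ref{lem:compute-from-scratch} and Corollary~\ref{cor:matrix-exp} for the $A^3$ step, and the same substitution $\op = 2.373$ at the end. Your additional observations---that the size of $A$ is dominated by the work bound because $\op \geq 2$, and that the matrix multiplication's working space matches its work---are implicit in the paper's accounting and are verified correctly here.
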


\ifFull
\begin{proof}
The graph $G'$ has size $O(m^{\ell})$ by
Lemma~\ref{lem:edge-clique-bound}. We can find all $\ell$-cliques using
$O(m^{\ell/2})$ work and $O(1)$ depth and all $2\ell$-cliques using $O(m^{\ell})$ work and $O(1)$ depth.
Initializing the data structures
$\low$ and $\degarray$ with $O(m^{\ell})$ entries requires
insertions into two parallel hash tables. This takes $O(m^{\ell})$ work and $O(\log^* m)$ depth w.h.p., and
$O(m^{\ell})$ space. There are $O\left(m^{\frac{2\ell}{(1+\op)}}\right)$
high-degree vertices which means that initializing $A$, the adjacency
matrix, requires creating a $2$-level hash table with
$O\left(m^{\frac{4\ell}{(1+\op)}}\right)$ entries. This takes
$O\left(m^{\frac{4\ell}{(1+\op)}}\right)$ work and $O(\log^* m)$ depth w.h.p., and
$O\left(m^{\frac{4\ell}{(1+\op)}}\right)$ space. Computing $A^3$
requires $O\left(m^{\frac{2\ell\op}{(1+\op)}}\right)$ work, $O(\log m)$
depth, and
$O\left(m^{\frac{2\ell\op}{(1+\op)}}\right)$
space. Finally, counting all the triangles with at least one
low-degree vertex requires $O\left(m^{\frac{2\ell\op}{(1+\op)}}\right)$
work and $O(1)$ depth (by performing $O\left(m^{(1+t)\ell}\right)$
lookups in $\low$). By Corollary~\ref{cor:matrix-exp}, $\op = 2.373$, and since $\ell = k/3$,
preprocessing takes $O\left(m^{0.469k}\right)$ work, $O(\log m)$
depth, and $O(m^{0.469k})$ space.
\end{proof}
\fi

Next, we analyze the update procedure of our dynamic algorithm.
To start, we bound the number of vertices and edges in $G'$ (representing the number
of $\ell$ and $2\ell$ cliques in $G$, respectively) in terms of $m$
(the number of edges in $G$) below.

\begin{lemma}[\cite{Chiba1985}]\label{lem:edge-clique-bound}
Given a graph $G = (V, E)$ with $m$ edges, the number of $k$-cliques that $G$ can have is bounded by $O(m^{k/2})$.
\end{lemma}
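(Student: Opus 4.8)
The plan is to establish the bound by a direct counting argument, treating even and odd $k$ separately; I note in passing that it also follows in one line from Nash--Williams (arboricity $\alpha = O(\sqrt{m})$) combined with the Chiba--Nishizeki count of $O(\alpha^{k-2}m)$ $k$-cliques~\cite{Chiba1985}, but a self-contained argument is cleaner to state.

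For even $k$, I would use the fact that a copy of $K_k$ contains a perfect matching, i.e.\ $k/2$ pairwise vertex-disjoint edges, all of which lie in $G$. A set of $k/2$ pairwise-disjoint edges spans exactly $k$ vertices and therefore can be the matching of at most one $k$-clique. Hence choosing one perfect matching per clique gives an injection from $k$-cliques into the family of $(k/2)$-subsets of $E$, so the number of $k$-cliques is at most $\binom{m}{k/2} = O(m^{k/2})$ for fixed $k$. This case is essentially immediate.

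For odd $k \ge 5$ I would reduce to the even case by deleting a vertex. Writing $T_j(H)$ for the number of $j$-cliques of a graph $H$ and $m_w$ for the number of edges induced on $N(w)$, deleting one vertex from a $k$-clique leaves a $(k-1)$-clique inside $G[N(w)]$, and counting over the deleted vertex gives $T_k(G) = \tfrac{1}{k}\sum_{w} T_{k-1}(G[N(w)])$. Since $k-1$ is even, the even case yields $T_{k-1}(G[N(w)]) = O(m_w^{(k-1)/2})$; bounding $m_w^{(k-1)/2} \le m^{(k-3)/2} m_w$ (using $m_w \le m$) and summing, together with the identity $\sum_w m_w = 3\,T_3(G)$ (each triangle is counted once at each of its three vertices), gives $T_k(G) = O(m^{(k-3)/2}\,T_3(G))$.

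The step I expect to be the main obstacle is the base case $T_3(G) = O(m^{3/2})$, since the vertex-deletion recursion degenerates to a tautology at $k=3$ and must be replaced by a separate argument. Here I would bound $T_3(G) \le \tfrac13\sum_{(u,v)\in E}\min(d(u),d(v))$ and split the edges by a degree threshold of $\sqrt{m}$. Edges with a low-degree endpoint each contribute at most $\sqrt{m}$, for a total of $O(m^{3/2})$; for the edges with both endpoints high-degree, I would use that there are only $O(\sqrt{m})$ high-degree vertices (as $\sum_v d(v) = 2m$) and charge each such edge to its lower-degree endpoint in decreasing-degree order, bounding the contribution by $|H|\cdot\sum_v d(v) = O(\sqrt{m}\cdot m) = O(m^{3/2})$. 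Substituting $T_3(G) = O(m^{3/2})$ back into the odd-$k$ estimate gives $T_k(G) = O(m^{(k-3)/2}\cdot m^{3/2}) = O(m^{k/2})$, matching the even case and completing the proof. Throughout, the complete graph shows the bound is tight.
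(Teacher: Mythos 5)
Your proof is correct, and it takes a genuinely different route from the paper, which in fact offers no proof at all: the lemma is stated with a citation to Chiba and Nishizeki~\cite{Chiba1985}, where the bound falls out of the stronger arboricity-parameterized result that a graph with arboricity $\alpha$ has $O(\alpha^{k-2}m)$ $k$-cliques, combined with $\alpha = O(\sqrt{m})$ --- precisely the one-line derivation you mention at the outset and then set aside. Your self-contained replacement is sound in all three parts: the map sending each even-$k$ clique to a chosen perfect matching of its vertex set is injective (distinct cliques have distinct vertex sets, so their spanning matchings differ), giving at most $\binom{m}{k/2} = O(m^{k/2})$ cliques; the identity $T_k(G) = \frac{1}{k}\sum_{w} T_{k-1}(G[N(w)])$ together with $m_w^{(k-1)/2} \le m^{(k-3)/2}m_w$ and $\sum_w m_w = 3T_3(G)$ correctly reduces odd $k \ge 5$ to triangles; and the base case $T_3(G) = O(m^{3/2})$ is handled properly, since edges with a low-degree endpoint contribute $O(m\sqrt{m})$ in total and the high--high edges contribute at most $|H|\sum_{v \in H} d(v) = O(\sqrt{m}\cdot m)$ under your charging scheme. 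As for what each approach buys: the cited route is stronger --- $O(\alpha^{k-2}m)$ is sharper on sparse graphs, and it is this refined bound, not merely $O(m^{k/2})$, that the paper exploits elsewhere (e.g.\ in the enumeration-based algorithm of Section~\ref{sec:arboricityclique}) --- but it relies on Nash--Williams and the Chiba--Nishizeki machinery; your argument proves exactly the stated bound from first principles by double counting, and your triangle base case is essentially the Chiba--Nishizeki inequality $\sum_{(u,v)\in E}\min(d(u),d(v)) = O(\alpha m)$ specialized to $\alpha = O(\sqrt{m})$, which is where the two proofs meet.
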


\begin{lemma}\label{lem:space}
$G'$ uses $O(m^{\ell})$ space.
\end{lemma}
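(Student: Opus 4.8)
The plan is to bound $|V'|$ and $|E'|$ separately using the clique-count bound of Lemma~\ref{lem:edge-clique-bound}, and then observe that the larger of the two dominates. By construction, each vertex of $G'$ corresponds to a distinct $\ell$-clique of $G$ (where $\ell = k/3$) and each edge of $G'$ corresponds to a distinct $2\ell$-clique of $G$, so bounding the number of such cliques in $G$ immediately bounds the size of $G'$.

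Concretely, I would apply Lemma~\ref{lem:edge-clique-bound} twice. Applying it with clique size $\ell$ gives that $G$ has $O(m^{\ell/2})$ distinct $\ell$-cliques, hence $|V'| = O(m^{\ell/2})$. Applying it with clique size $2\ell$ gives that $G$ has $O(m^{2\ell/2}) = O(m^{\ell})$ distinct $2\ell$-cliques, hence $|E'| = O(m^{\ell})$. Since storing $G'$ — its vertex set, its edge/adjacency representation, and the hash table $\mathcal{Q}$ that translates between $\ell$-cliques of $G$ and vertices of $V'$ — uses space proportional to $|V'| + |E'|$, and each stored clique occupies $O(k) = O(1)$ machine words for fixed $k$, the total is $O(m^{\ell/2} + m^{\ell}) = O(m^{\ell})$, using $\ell/2 \le \ell$.

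I do not expect any genuine obstacle here: the only points to verify are that the correspondence between cliques of $G$ and the vertices/edges of $G'$ is injective (so that Lemma~\ref{lem:edge-clique-bound} applies directly as an upper bound rather than needing a more careful counting argument) and that representing each clique as a constant-length object for fixed $k$ does not inflate the asymptotics. I would also flag explicitly that this lemma bounds only the size of the graph $G'$ itself; the strictly larger space consumed by the dense high-degree adjacency matrix $A$ and its cube $A^3$ is a separate contribution accounted for in Lemma~\ref{lem:mmpreprocessing}, and is not claimed to fit within the $O(m^{\ell})$ bound of this lemma.
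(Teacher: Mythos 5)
Your proof is correct and follows essentially the same route as the paper: the paper also bounds $|V'| = O(m^{\ell/2})$ via Lemma~\ref{lem:edge-clique-bound} and concludes $|E'| = O(m^{\ell})$. The only difference is how the edge bound is obtained. The paper simply squares the vertex count ($|E'| \le |V'|^2 = O(m^{\ell})$), whereas you invoke Lemma~\ref{lem:edge-clique-bound} a second time at clique size $2\ell$. That works, but the claim you flag and then assert---that each edge of $G'$ corresponds to a \emph{distinct} $2\ell$-clique of $G$---is not literally true: a single $2\ell$-clique of $G$ splits into two disjoint $\ell$-cliques in $\frac{1}{2}\binom{2\ell}{\ell}$ ways, and each such split is a distinct edge of $G'$ (this multiplicity is exactly what the paper's correctness argument in Theorem~\ref{lem:correctness-matrix} exploits). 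So the correspondence is constant-to-one rather than injective; since $k$, and hence $\ell$, is a fixed constant, this costs only a constant factor and your $O(m^{\ell})$ bound stands. The paper's squaring argument sidesteps this subtlety entirely. Your closing remark that the dense matrix $A$ and the cost of computing $A^3$ are accounted for separately (in Lemmas~\ref{lem:mmpreprocessing}, \ref{lem:updating-structs}, and~\ref{lem:compute-from-scratch}) is consistent with how the paper scopes this lemma.
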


\begin{proof}
  Each vertex in $G'$ represents an $\ell$-clique.
By Lemma~\ref{lem:edge-clique-bound}, $G'$ has $O(m^{\ell/2})$ vertices and thus $O(m^{\ell})$ edges.
\end{proof}

Before we compute the number of triangles in $G'$, we must update $G'$ and the data structures associated with $G'$ with our batch of updates.

\begin{restatable}{lemma}{updatingstructs}\label{lem:updating-structs}
Updating $G'$ and the associated data structures $\low$ and $A$ after a batch of $\batch$ edge updates in $G$ takes $O(\batch m^{\ell - 1} + \batch m^{(2-2t)\ell - 1})$ amortized work and $O(\log^* m)$ depth w.h.p., and $O\left(m^{\ell} + m^{(2-2t)\ell}\right)$ space.
\end{restatable}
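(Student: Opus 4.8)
The plan is to split the total cost into two parts: (i) translating the $\batch$ edge updates in $G$ into edge updates on $G'$ and applying them to $G'$, $\low$, and $A$, and (ii) the rebalancing triggered when vertices of $G'$ cross the degree threshold $M^{t\ell}$. I expect the amortized analysis of (ii) to be the main obstacle; part (i) is essentially bookkeeping on hash tables.

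For part (i), I would first bound the number of $G'$-edge updates produced. Each inserted or deleted edge $(y,z)$ in $G$ creates or destroys a $2\ell$-clique exactly when $\{y,z\}$ is completed by a $(2\ell-2)$-clique in the common neighborhood $N(y)\cap N(z)$; by Lemma~\ref{lem:edge-clique-bound} applied to that neighborhood there are at most $O(m^{(2\ell-2)/2})=O(m^{\ell-1})$ of these (the $O(m^{\ell/2-1})$ newly created $\ell$-cliques are dominated). Since each $2\ell$-clique splits into two $\ell$-cliques in only $\binom{2\ell}{\ell}/2=O(1)$ ways, the whole batch pushes $O(\batch m^{\ell-1})$ edge updates into $G'$. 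I would enumerate the affected cliques in parallel using $O(\ell)=O(1)$ rounds of adjacency lookups in $G$ (each a constant-depth hash-table query) together with approximate compaction of the partial cliques, so enumeration costs $O(\batch m^{\ell-1})$ work and $O(\log^* m)$ depth w.h.p.; applying each resulting insertion/deletion to $G'$, to $\low$, and (when both endpoints are high-degree) to $A$ is an $O(1)$-work hash-table update, preserving these bounds.

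For part (ii), I would amortize rebalancing against the threshold band $[M^{t\ell}/2,3M^{t\ell}/2]$. A vertex is high-degree only above $3M^{t\ell}/2$ and low-degree only below $M^{t\ell}/2$, so its degree in $G'$ must change by $\Omega(M^{t\ell})=\Omega(m^{t\ell})$ between two consecutive reclassifications. The batch induces $O(\batch m^{\ell-1})$ edge updates on $G'$, hence a total degree change over all vertices of $O(\batch m^{\ell-1})$, so the number of reclassification events is $O(\batch m^{\ell-1}/m^{t\ell})=O(\batch m^{(1-t)\ell-1})$. Each reclassification costs $O(m^{(1-t)\ell})$: by Lemma~\ref{lem:space} the degree sum in $G'$ is $O(m^{\ell})$, so there are $O(m^{\ell}/m^{t\ell})=O(m^{(1-t)\ell})$ high-degree vertices, and installing or removing a vertex's incidences in $A$ requires scanning against all $O(m^{(1-t)\ell})$ high-degree vertices (its row/column in the $A$-structure used for the later $A^3$ computation), while moving its $O(m^{t\ell})$ incident edges between $\low$ and $A$ is cheaper. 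Multiplying gives total rebalancing work $O(\batch m^{(1-t)\ell-1}\cdot m^{(1-t)\ell})=O(\batch m^{(2-2t)\ell-1})$, and each reclassification is a parallel batch of hash-table operations of depth $O(\log^* m)$ w.h.p.

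Summing parts (i) and (ii) yields the claimed amortized work $O(\batch m^{\ell-1}+\batch m^{(2-2t)\ell-1})$ and depth $O(\log^* m)$ w.h.p. For space, $G'$ and $\low$ store $O(m^{\ell})$ cliques and edges by Lemma~\ref{lem:space}, while $A$ ranges over $O(m^{(1-t)\ell})$ high-degree vertices and so occupies $O(m^{(2-2t)\ell})$ entries, giving $O(m^{\ell}+m^{(2-2t)\ell})$ overall. The delicate points I expect to be hardest are (a) justifying that the number of reclassifications is genuinely governed by the total degree change, so no vertex can oscillate across the band without paying for it, and (b) pinning the per-event charge of $O(m^{(1-t)\ell})$ to maintenance of $A$ rather than the cheaper $O(m^{t\ell})$ edge moves; a secondary subtlety is keeping the enumeration of the constant-size cliques within $O(\log^* m)$ depth while holding its work at $O(\batch m^{\ell-1})$.
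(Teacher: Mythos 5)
Your proposal is correct and follows essentially the same route as the paper's proof: both bound the induced $G'$-updates by $O(\batch m^{\ell-1})$ via Lemma~\ref{lem:edge-clique-bound}, apply them with parallel hash-table operations in $O(\log^* m)$ depth w.h.p., and amortize the rebalancing cost against the $\Theta(m^{t\ell})$ threshold band (the paper charges each reclassification to the $\Omega(m^{t\ell})$ updates incident to that vertex, you count reclassification events globally via total degree change---equivalent arguments yielding the same $O(\batch m^{(2-2t)\ell-1})$ term). One harmless slip: your claim that moving the $O(m^{t\ell})$ incident edges is cheaper per event than the $O(m^{(1-t)\ell})$ row/column work in $A$ requires $t \le 1/2$, which can fail for the optimized $t$ (e.g., $k=9$ with $\op = 2.373$); but since those edge moves total $O(\batch m^{\ell-1})$ over all events, the claimed bound is unaffected.
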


\ifFull
\begin{proof}
In step~\ref{matrix:determineedgeinsertions} we first add and/or delete vertices in $G'$. Since each vertex in $G'$
represents a different clique of size $\ell$, one edge update in $G$
can result in $O(m^{(\ell/2) - 1})$ new vertices (or vertex deletions)
since given two vertices (the endpoints of the edge update) that must
be in the $\ell$-clique, we only need to look for all $(\ell - 2)$-cliques in $G$. For a batch of size $\batch$, the total number of
vertices added or deleted in $G'$ is $O(\batch m^{(\ell/2) -1})$.

In steps~\ref{matrix:updatedegreeone} and~\ref{matrix:updateinsertions},
updating the data structures $\low$, $A$, and $\degarray$ by
insertions/deletions into parallel hash tables requires $O(\batch m^{\ell-1})$
amortized work and $O(\log^*m)$ depth w.h.p.
Recall that
the number of edges in $G'$ is determined by the total number of
$2\ell$-cliques in $G$. One edge update can affect at most $O(m^{\ell
  - 1})$ $2\ell$-cliques in $G$, thus, given a $\batch$-batch of edge updates
in $G$, there will be $O(\batch m^{\ell - 1})$ edge updates in $G'$,
separated into a deletion batch $\batchset'_D$ and an insertion batch
$\batchset'_I$.

We now analyze the cost for steps~\ref{matrix:rebalancelowtohigh} and~\ref{matrix:rebalancehigh}.
Adding/removing a row and column from $A$ takes $O(m^{(1-t)\ell})$ amortized work. Since there are $O(m^{\ell-1})$ edge updates in $G'$ per update in $G$, the total work for resizing is $O(m^{(2-t)\ell-1})$ per edge update in $G$. The work for adding/removing a vertex from $\low$ is $O(m^{t\ell})$, and since there are $O(m^{\ell-1})$ edge updates per update in $G$, the total work is $O(m^{(1+t)\ell-1})$ per update in $G$.
We must have $\Omega(m^{t\ell})$ updates in $G'$ before a vertex changes statuses (becomes high-degree if it originally was low-degree and vice versa) and needs to update $A$ and $\low$.
Therefore, we can charge the work of updating $A$ and $\low$ against $\Omega(m^{t\ell})$ updates in $G'$.
Thus, the amortized work for updating $A$ and $\low$ given a batch of $\batch$ updates in $G$ is $O\left(\batch\left(m^{(2-2t)\ell-1}+m^{\ell-1}\right)\right)$ for steps~\ref{matrix:determineedgeinsertions} and~\ref{matrix:updatestructs}. The depth is $O(\log^*m)$ w.h.p.\ due to hash table operations.

The data structures $\low$, $\degarray$, and $A$ use a combined
$O(m^{\ell} + m^{(2-2t)\ell})$ space because there are $O(m^\ell)$
edges in the graph and $A$ contains $O(m^{(2-2t)\ell})$ entries.
\end{proof}
\fi

By Lemma~\ref{lem:updating-structs}, step~\ref{matrix:determinefinaldegree} takes $O\left(\batch m^{\ell-1}\right)$ amortized work to determine the final
degrees and  $O(\batch m^{\ell - 1} + \batch m^{(2-2t)\ell - 1})$  amortized work to compute $B'_{I, L}$ and $B'_{D, H}$. In total, step~\ref{matrix:determinefinaldegree}
takes $O(\batch m^{\ell - 1} + \batch m^{(2-2t)\ell - 1})$ amortized work, $O(\log m)$ depth (dominated by computing the final degrees), and $O(m^{\ell} + m^{(2-2t)\ell})$ space by Lemma~\ref{lem:updating-structs}.
Steps \ref{matrix:updatelowcountone},
\ref{matrix:updatelowcounttwo}, \ref{matrix:updatecount}, and
\ref{matrix:computetriangles} of the algorithm take $O(1)$ work. The
following lemmas bound the cost for the remaining steps.

Lemma~\ref{lem:compute-low-deg} below bounds the cost for steps
\ref{matrix:computetd} and \ref{matrix:countinsertions}. The proof is
based on counting the number of new edge updates necessary in $G'$.

\begin{restatable}{lemma}{mmlowdeg}\label{lem:compute-low-deg}
Computing all new $k$-cliques represented by triangles that contain at least one low-degree vertex in $G'$
takes $O(\batch m^{(t + 1)\ell - 1})$ work and $O(\log^* m)$ depth w.h.p., and $O(m^{\ell})$ space.
\end{restatable}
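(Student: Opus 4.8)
The plan is to charge the cost of steps~\ref{matrix:computetd} and~\ref{matrix:countinsertions} to the edge updates induced in $G'$ that are incident to a low-degree vertex, bounding the per-edge cost by the degree of that low-degree endpoint. First I would count the edge updates induced in $G'$. A single edge insertion or deletion in $G$ can only complete (resp.\ break) a $2\ell$-clique by supplying the remaining $2\ell-2$ vertices, and by Lemma~\ref{lem:edge-clique-bound} there are at most $O(m^{\ell-1})$ ways to do so; hence a batch of $\batch$ updates in $G$ induces batches $\batchset'_{I}$ and $\batchset'_{D}$ of $O(\batch m^{\ell-1})$ edge updates in $G'$.

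Next I would bound the per-edge work. For an edge update $(u',v')$ whose lexicographically-first low-degree endpoint is $u'$, the definition of low-degree gives $u'$ degree $O(m^{t\ell})$, so enumerating all edges $(u',w')$ incident to $u'$ in $\low$ costs $O(m^{t\ell})$ work, and each candidate triangle $(u',v',w')$ is confirmed by an $O(1)$-expected-work lookup for $(v',w')$. For each confirmed triangle the subroutine $\counttriangles{(u',v',w'),(u',v')}$ enumerates the ${k \choose k/3}{2k/3 \choose k/3}$ triangles representing the same $k$-clique, sorts their constant-length labels, and performs the lexicographic-first test; since $k$ is a constant this is $O(1)$ work and depth per triangle. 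Multiplying the $O(\batch m^{\ell-1})$ induced edge updates by the $O(m^{t\ell})$ per-edge enumeration yields the claimed $O(\batch m^{(1+t)\ell-1})$ work. Since all edges, and within each all incident neighbors, are processed in parallel, the depth is dominated by the parallel hash-table lookups into $\low$ and so is $O(\log^* m)$ w.h.p.; the counts $T_D$ and $T_I$ are accumulated by atomic-add, so triangles are never materialized and the persistent space is dominated by $\low$, which is $O(m^\ell)$ by Lemma~\ref{lem:space} (the induced update batches also fit in $O(\batch m^{\ell-1}) = O(m^\ell)$ since $\batch \le m^{\op/(1+\op)}$).

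I expect the main obstacle to be that steps~\ref{matrix:computetd} and~\ref{matrix:countinsertions} also process the rebalancing batches $\batchset'_{D, H}$ and $\batchset'_{I, L}$, which consist of \emph{old} (non-update) edges incident to vertices whose status flips during the batch, and which naively could be much larger than the genuine update batches. The fix is an amortization argument: by the hysteresis between the thresholds $M^{t\ell}/2$ and $3M^{t\ell}/2$, a vertex changes status only after $\Omega(m^{t\ell})$ updates accumulate on it, so the $O(\batch m^{\ell-1})$ induced updates flip at most $O(\batch m^{(1-t)\ell-1})$ vertices (amortized), each contributing $O(m^{t\ell})$ old edges, for $O(\batch m^{\ell-1})$ rebalancing edges in total. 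Processing these with the same $O(m^{t\ell})$-per-edge enumeration keeps the amortized work at $O(\batch m^{(1+t)\ell-1})$ and leaves the depth and space bounds unchanged.
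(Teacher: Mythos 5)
Your proposal is correct and follows essentially the same route as the paper's proof: both bound the induced updates in $G'$ by $O(\batch m^{\ell-1})$ via Lemma~\ref{lem:edge-clique-bound}, charge $O(m^{t\ell})$ enumeration work per update incident to a low-degree endpoint with $O(1)$ work per call to the deduplication subroutine (constant $k$), and handle the rebalancing batches $\batchset'_{I,L}$ and $\batchset'_{D,H}$ by the same amortization against the $\Omega(m^{t\ell})$ updates needed to flip a vertex's status. The paper packages the rebalancing charge as ``$X$ flipped vertices cost $O(Xm^{2t\ell})$ against $\Omega(Xm^{t\ell})$ updates,'' which is arithmetically identical to your ``$O(\batch m^{(1-t)\ell-1})$ flipped vertices, $O(\batch m^{\ell-1})$ rebalancing edges'' accounting.
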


\ifFull
\begin{proof}
We first bound the work necessary to perform
steps~\ref{matrix:computetd} and~\ref{matrix:countinsertions} for new
edge insertions and deletions.  Given one edge update in $G$, there
can be at most $O(m^{\ell - 1})$ edge updates necessary in $G'$ by
Lemma~\ref{lem:edge-clique-bound}. For each of these edge updates, we
consider whether each edge update in $G'$ contains a low-degree
vertex. By Lemmas~\ref{lem:one-low-high}
and~\ref{lem:correctness-matrix}, to find all updated triangles
containing at least one low-degree vertex, it is only necessary to
consider edge updates to low-degree vertices.  For every edge update
to a low-degree vertex, we search the neighbors of that low-degree
vertex to see if new triangles are formed/destroyed. Since each
low-degree vertex has degree $O(m^{t\ell})$, this results in a total
of $O(m^{(t + 1)\ell - 1})$ work per update in $G$ to perform the
search.  For each triangle found that contains the low-degree vertex,
we need to perform the additional work of computing every triangle
that contains the set of vertices represented by the triangle, sort
the labels, and determine which triangle is responsible for
incrementing the count of triangles by all ${k \choose k/3}{2k/3
  \choose k/3}$ triangles representing the same clique. This
additional work is done by calling $\counttriangles{(u', v', w'), (u',
  v')}$ on each triangle $(u', v', w')$ and each edge update $(u',
v')$.  The total amount of additional work done for each triangle that
is passed into $\mathtt{count\_updated\_low\_degree\_triangles}$ is
then $O\left(k(3e^2)^k\right)$, where the number of triangles
corresponding to the same $k$-clique is given by
$O\left((3e^2)^k\right)$ and an additional $O(k(3e^2)^k)$ work is
required to sort all the labels. Since we assume that $k$ is constant,
this results in $O(1)$ additional work per call to
$\mathtt{count\_updated\_low\_degree\_triangles}$.  The depth is
$O(\log^*m)$ w.h.p.\ due to hash table lookups.

Now we bound the work of performing steps~\ref{matrix:computetd} and~\ref{matrix:countinsertions} for
edges that are `inserted' or `deleted' due to rebalancing. Suppose there are $X$ vertices
that must be rebalanced in this way. Each of these $X$ vertices must have degree $O(m^{t\ell})$
at the time of rebalancing. Thus, the total work performed for these updates is
$O(Xm^{2t\ell})$. However, in order for a rebalancing on a vertex to happen, there must be
$\Omega(m^{t\ell})$ updates. Thus, if $X$ vertices are rebalanced, then there must be $\Omega(Xm^{t\ell})$ updates.
Hence, we can charge the work of rebalancing to the $\Omega(Xm^{t\ell})$ updates to obtain $O(m^{t\ell})$ amortized work
per update in $G'$. Then, we obtain $O(\batch m^{(t + 1)\ell - 1})$ amortized work for a $\batch$ batch updates to $G$.
Rebalancing requires $O(\log^*m)$ depth w.h.p.\ due to hash table operations and $O(m^{\ell})$ space (the total number of edges
in the graph).
\end{proof}
\fi

Lemma~\ref{lem:compute-from-scratch} bounds the cost for step
\ref{matrix:computeacubed} by using the matrix multiplication bounds
for the adjacency matrix containing high-degree vertices.

\begin{restatable}{lemma}{mmhighdeg}\label{lem:compute-from-scratch}
Computing $A^3$ using parallel matrix multiplication takes $O(m^{(1-t)\ell \op})$ work, where $\op$ is the parallel matrix multiplication constant, $O(\pmmdepth{m})$ depth, and $O(m^{\op(1-t)\ell})$ space, assuming that there exists a parallel matrix multiplication algorithm with coefficient $\op$ and using $O(\pmmdepth{n})$ depth and $O(\pmmspace{n})$ space given $n \times n$ matrices.
\end{restatable}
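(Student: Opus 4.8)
The plan is to reduce the claim to two ordinary matrix multiplications on a matrix whose dimension is controlled by the number of high-degree vertices in $G'$, so the crux is bounding that number. First I would count the high-degree vertices. Every high-degree vertex in $G'$ has degree at least $3M^{t\ell}/2 = \Om{m^{t\ell}}$ (recall $M = 2m+1$), while by Lemma~\ref{lem:space} the graph $G'$ has $\O{m^{\ell}}$ edges, so the sum of all vertex degrees in $G'$ is $\O{m^{\ell}}$. Since each high-degree vertex contributes $\Om{m^{t\ell}}$ to this sum, the number of high-degree vertices is at most $\O{m^{\ell}/m^{t\ell}} = \O{m^{(1-t)\ell}}$. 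Consequently the adjacency matrix $A$ of high-degree vertices is an $n' \times n'$ matrix with $n' = \O{m^{(1-t)\ell}}$.

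Next I would compute $A^3$ as $(A \cdot A) \cdot A$, i.e.\ two products of $n' \times n'$ matrices. Before multiplying, I would materialize $A$ as a dense array from its two-level hash-table representation (padding $n'$ up to the next power of two if the assumed algorithm requires it, which only affects dimensions by a constant factor); this costs $\O{(n')^2}$ work and $O(\log^* m)$ depth for the table reads, which is dominated by the multiplication since $\op > 2$. Invoking the assumed parallel matrix multiplication algorithm, each product takes $\O{(n')^{\op}}$ work, $O(\pmmdepth{n'})$ depth, and $O(\pmmspace{n'})$ space. Substituting $n' = \O{m^{(1-t)\ell}}$ yields $\O{m^{(1-t)\ell\op}}$ work and $\O{m^{(1-t)\ell\op}}$ space per product; composing the two products preserves these bounds up to constant factors, and the diagonal extraction of $S = A^3$ is subsumed.

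The main (and essentially only) obstacle is the high-degree counting argument above; everything else is bookkeeping. I would also verify the two routine points that make the stated bounds come out: that $\ell = k/3$ is constant, so $\pmmdepth{n'} = \pmmdepth{m^{(1-t)\ell}} = O(\pmmdepth{m})$ and the dense-materialization cost is genuinely dominated; and that the threshold used is the high-degree threshold $3M^{t\ell}/2$ (not $M^{t\ell}/2$), so that every vertex appearing in $A$ indeed has degree exceeding $3M^{t\ell}/2 = \Om{m^{t\ell}}$, making the per-vertex lower bound in the counting step valid and the bound $\O{m^{(1-t)\ell}}$ on the number of high-degree vertices immediate.
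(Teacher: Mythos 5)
Your proof is correct and takes essentially the same approach as the paper's: bound the number of high-degree vertices by $O(m^{(1-t)\ell})$ via the degree-sum argument (each contributes $\Omega(m^{t\ell})$ to the $O(m^{\ell})$ total degree of $G'$), then invoke the assumed parallel matrix multiplication algorithm on the resulting $O(m^{(1-t)\ell}) \times O(m^{(1-t)\ell})$ adjacency matrix $A$; the extra steps about dense materialization and splitting $A^3$ into two products are routine bookkeeping the paper leaves implicit. One small inaccuracy in your final verification point: vertices with degree in the ambiguous range $[M^{t\ell}/2, 3M^{t\ell}/2]$ may also be classified as high-degree and hence appear in $A$, so you can only guarantee that every vertex in $A$ has degree $\Omega(m^{t\ell})$ rather than degree exceeding $3M^{t\ell}/2$ --- but since either threshold is $\Omega(m^{t\ell})$, the counting bound and the rest of your argument are unaffected.
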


\begin{proof}
  There are $O(m^{(1-t)\ell})$ high-degree vertices because each high-degree vertex has degree $\Omega(m^{t\ell})$ and there are $O(m^{\ell})$ edges in $G'$. Since the table $A$ is an adjacency matrix on the high-degree vertices, by Corollary~\ref{cor:matrix-exp}, parallel matrix multiplication can be done in $O(m^{(1-t)\ell\op})$ work.

\end{proof}

Lemma~\ref{lem:major-rebalancing}
bounds the cost for step \ref{matrix:reinitialize}. The proof is based on amortizing the cost for reconstruction over $\Omega(m)$ updates.
\begin{restatable}{lemma}{mmrebalancing}\label{lem:major-rebalancing}
Step~\ref{matrix:reinitialize} requires $O(\Delta m^{(2-2t)\ell-1}+\Delta m^{\ell-1})$ amortized work and $O(\log^*m)$ depth w.h.p., and $O(m^{(2-2t)\ell}+ m^\ell)$ space.
\end{restatable}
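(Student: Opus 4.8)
The plan is to treat step~\ref{matrix:reinitialize} exactly as a ``major rebalance'': when the number of edges in $G$ leaves the range $[M/4,M]$, we reset $M$, the degree thresholds, and rebuild the data structures $\low$, $\degarray$, and $A$ from scratch over the current $G'$. The observation that makes the claimed bound attainable is that this step only needs to pay for \emph{rebuilding the data-structure layout}, not for recomputing the triangle counts: the count of high-degree triangles is recomputed every batch via $A^3$ in step~\ref{matrix:computeacubed} (already charged in Lemma~\ref{lem:compute-from-scratch}, with its $O(\log m)$ depth), so reinitialization only has to correctly repopulate $A$ and $\low$ so that the next invocation of step~\ref{matrix:computeacubed} is correct. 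I would therefore split the argument into a per-reinitialization cost bound and an amortization argument.

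For the per-reinitialization cost, I would reuse the size bounds already established: by Lemma~\ref{lem:space} the graph $G'$ has $O(m^{\ell})$ edges, so rebuilding $\low$ and $\degarray$ via parallel hash-table insertions costs $O(m^{\ell})$ work, and by the analysis in Lemma~\ref{lem:updating-structs} the adjacency matrix $A$ on the high-degree vertices has $O(m^{(2-2t)\ell})$ entries, so repopulating it costs $O(m^{(2-2t)\ell})$ work. All of these are parallel hash-table operations, giving $O(m^{\ell}+m^{(2-2t)\ell})$ total work, $O(\log^* m)$ depth w.h.p., and $O(m^{\ell}+m^{(2-2t)\ell})$ space. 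For the amortization, I would argue that $M=\Theta(m)$ is fixed between successive reinitializations, so the trigger condition ($m$ dropping below $M/4$ or exceeding $M$) can only be reached after $\Omega(m)$ edge updates in $G$; charging the rebuild cost against those $\Omega(m)$ updates yields $O(m^{\ell-1}+m^{(2-2t)\ell-1})$ amortized work per edge update, i.e.\ $O(\batch m^{\ell-1}+\batch m^{(2-2t)\ell-1})$ for a batch of $\batch$ updates.

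The main obstacle I anticipate is justifying that the low-degree count $C_{\low}$ does not force an expensive full recount during reinitialization. When the thresholds shift, vertices change status, and a naive recount of all low-degree triangles would cost $O(m^{(1+t)\ell})$, which amortizes to a term strictly larger than the claimed bound (at the optimal threshold $t=(\op-1)/(\op+1)$ one has $t>1/3$ precisely because $\op>2$, so $(1+t)\ell > (2-2t)\ell$). The careful part of the proof is to argue that the status changes are absorbed by the same rebalancing machinery used for minor rebalancing (the batches $\batchset'_{I,L}$ and $\batchset'_{D,H}$ of step~\ref{matrix:updatestructs}), so that the only work attributable to step~\ref{matrix:reinitialize} itself is the layout rebuild bounded above, with the triangle re-counting folded into steps~\ref{matrix:countinsertions} and~\ref{matrix:computeacubed} that already run every batch. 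Everything else is a routine substitution of the size bounds together with the standard ``change by a constant factor needs $\Omega(m)$ updates'' amortization.
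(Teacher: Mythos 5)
Your core accounting coincides exactly with the paper's own proof: reconstruct $A$ (one entry per pair of high-degree vertices) in $O(m^{2(1-t)\ell})=O(m^{(2-2t)\ell})$ work and space, rebuild $\low$ and $\degarray$ in $O(m^{\ell})$ work and space (the number of edges in $G'$), charge both against the $\Omega(m)$ updates that must occur before $m$ can leave $[M/4,M]$, and obtain $O(\log^* m)$ depth w.h.p.\ from parallel hash-table operations.

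The place where you go beyond the paper is the $C_{\low}$ issue, and this is worth noting: the paper's proof is entirely silent on how the low-degree triangle count is reconciled with the new thresholds after reinitialization, whereas you correctly observe that a naive recount of all triangles with a low-degree vertex would cost $O(m^{(1+t)\ell})$, which (since $t>1/3$ at the optimal threshold) amortizes to a term strictly larger than the bound claimed in this lemma. Your proposed resolution---attribute the layout rebuild to step~\ref{matrix:reinitialize} itself, and attribute the recounting forced by status changes to the rebalancing/counting machinery of steps~\ref{matrix:computetd} and~\ref{matrix:countinsertions} (whose per-update charge $O(m^{(t+1)\ell-1})$ from Lemma~\ref{lem:compute-low-deg} is of the same order as what the threshold-induced status changes cost when amortized over the $\Omega(m)$ triggering updates)---is a legitimate way to make the lemma's stated bound hold without weakening Theorem~\ref{thm:mm-main}. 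So your proof is the paper's proof plus a careful treatment of a point the paper glosses over; there is no gap in your argument, only one in the paper's exposition.
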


\begin{proof}
  We reconstruct $A$ from scratch, which has one entry for every
  pair of high-degree vertices, which takes
  $O(m^{2(1-t)\ell})=O(m^{(2-2t)\ell})$ work and space. However, this is
  amortized against $\Omega(m)$ updates, and so the amortized work is
  $O(m^{(2-2t)\ell-1})$ per update. The work and space for creating $\low$ can
  be bounded by $O(m^{\ell})$, the number of edges in $G'$. Amortized
  against $\Omega(m)$ updates gives $O(m^{\ell-1})$ work per
  update. The depth is $O(\log^*m)$ w.h.p.\ using parallel hash table
  operations.
\end{proof}

Given these costs, we can now compute the optimal value of $t$ in terms of $\op$ that minimizes the work. Note that here we compute for $t$ assuming $\batch = 1$ because to adaptively change our threshold requires too much work in terms of rebalancing the data structures. However, if we have a fixed batch size, $\batch$, we can further optimize our threshold $t$ to take into account the fixed batch size.

\begin{restatable}{lemma}{mmoptimalt}\label{lem:t-value}
$t = \frac{3 - k + k\op}{k + k\op}$ gives us an optimal work bound assuming $\batch = 1$.
\end{restatable}

\begin{proof}
From
Lemmas~\ref{lem:updating-structs},~\ref{lem:compute-low-deg},~\ref{lem:compute-from-scratch}, and~\ref{lem:major-rebalancing}, we have that the work is $O(\batch
m^{(t + 1)\frac{k}{3} - 1} + m^{\frac{(1-t)k\op}{3}})$
w.h.p. (the $O(\batch m^{(2-2t)l-1})$ term is dominated by the $O(\batch m^{(1+t)l-1})$ term since $\op \geq 2$ implies $t\geq 1/3$).  Assuming $\batch = 1$, balancing the two
sides of the equation yields: $$m^{\frac{(1-t)k\op}{3}} = m^{(t +
  1)\frac{k}{3} - 1}.$$ Solving for $t$ gives $$t = \frac{3 - k +
  k\op}{k + k\op}.$$
\end{proof}

Plugging in our value for $t$ from Lemma~\ref{lem:t-value}, we prove Theorem~\ref{thm:mm-main} and Corollary~\ref{cor:strassen-ws} for the cost of our algorithm when $0 < m \leq m^{\op/(1+\op)}$.

\ifCameraReady
\subsection{Accounting for $k \bmod 3 \neq 0$}\label{sec:all-k-alg}~
\fi
\ifFull
\subsection{Accounting for $k \bmod 3 \neq 0$}\label{sec:all-k-alg}
\fi

We now modify the algorithm above to account for all values $k$
following the algorithm presented in~\cite{EG04}. This requires
several changes to how we construct our graph $G'$ from a graph $G =
(V, E)$, resulting in changes to our data structures which we detail
below. We recall the notation $R(x)$ for vertex $x \in G'$ to denote
the vertices in $G$ that $x$ represents.

\subsubsection{Construction of $G'$}\label{sec:gp-const}

For  $k \bmod 3 \neq 0$, the fundamental problem we face in this case in constructing the graph $G'$ is that triangles in the graph $G'$ representing cliques of size $\floor{\frac{k}{3}}$ no longer create $k$-cliques. In fact, they now create $(k-1)$-cliques or $(k-2)$-cliques for $k \bmod 3 =1$ and $k\bmod3 = 2$, respectively. We modify the creation of $G'$ in the two following ways to account for this issue:

\paragraph{$k\bmod3 = 1$:}
In this case, we create two sets of vertices. One set, $A$, of
vertices represents all
$\left(\frac{k-1}{3}\right)$-cliques in the graph $G$.
Edges exist between $v_1, v_2 \in
A$ if and only if the vertices, $R(v_1)$ and $R(v_2)$, in the
$\left(\frac{k-1}{3}\right)$-cliques represented by $v_1$ and $v_2$
form a $\frac{2(k-1)}{3}$ clique and there are no duplicate vertices,
i.e., $R(v_1) \cap R(v_2) = \emptyset$. We create a second set of
vertices $B$ which contains vertices which represent cliques of size
$\frac{k+2}{3}$. Edges exist between $v \in A$
and $w \in B$ if and only if $R(v)$ and $R(w)$ form a $\left(\frac{2k
  +1}{3}\right)$-clique and $R(v) \cap R(w) = \emptyset$.

\paragraph{$k\bmod3 = 2$:}
In this case, we still create two sets of vertices but $A$
instead represents
$\left(\frac{k+1}{3}\right)$-cliques in the graph $G$.
Edges exist between $v_1, v_2 \in A$ if and only if $R(v_1) \cup R(v_2)$ form a
$\left(\frac{2(k+1)}{3}\right)$-clique and $R(v_1) \cap R(v_2) =
\emptyset$. We create a second set of vertices $B$ which contains
vertices which represent cliques of size $\frac{k-2}{3}$. Edges exist
between $v \in A$ and $w \in B$ if and only if
$R(v)$ and $R(w)$ form a $\left(\frac{2k -1}{3}\right)$-clique and
$R(v) \cap R(w) = \emptyset$.

We first prove the properties the new graph $G'$ has, namely the
number of vertices it contains as well as the number of edges in the
graph.

\begin{lemma}\label{lem:gp-struct}
$G'$ constructed as in Section~\ref{sec:gp-const} contains $O\left(m^{\frac{k+2}{6}}\right)$ vertices and $O\left(m^{\frac{2k+1}{6}}\right)$ edges if $k\bmod 3 = 1$. $G'$ contains $O\left(m^{\frac{k+1}{6}}\right)$ vertices and $O\left(m^{\frac{k+1}{3}}\right)$ edges if $k \bmod 3 =2$.
\end{lemma}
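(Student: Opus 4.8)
The plan is to count the vertices and edges of $G'$ by associating each of them with a clique of a fixed size in $G$ and then invoking Lemma~\ref{lem:edge-clique-bound}, which bounds the number of $j$-cliques in a graph with $m$ edges by $O(m^{j/2})$. Since $k$ is a constant, all the combinatorial multiplicities that arise will be constants depending only on $k$, so the crude clique counts will suffice.

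First I would handle the vertex count. In both cases the vertex set is $A \cup B$. For $k \bmod 3 = 1$, the set $A$ represents $\frac{k-1}{3}$-cliques and $B$ represents $\frac{k+2}{3}$-cliques, so Lemma~\ref{lem:edge-clique-bound} gives $|A| = O(m^{(k-1)/6})$ and $|B| = O(m^{(k+2)/6})$; because $\frac{k+2}{6} > \frac{k-1}{6}$, the dominant term yields $O(m^{(k+2)/6})$ vertices. The case $k \bmod 3 = 2$ is identical, with $A$ representing $\frac{k+1}{3}$-cliques and $B$ representing $\frac{k-2}{3}$-cliques, giving $|A| = O(m^{(k+1)/6})$, $|B| = O(m^{(k-2)/6})$, and a total of $O(m^{(k+1)/6})$.

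Next I would bound the edges by splitting them into $A$--$A$ edges and $A$--$B$ edges and keeping the larger. By construction each edge of $G'$ is witnessed by a clique in $G$: for $k \bmod 3 = 1$ an $A$--$A$ edge corresponds to a $\frac{2(k-1)}{3}$-clique (partitioned into two disjoint $\frac{k-1}{3}$-halves) and an $A$--$B$ edge to a $\frac{2k+1}{3}$-clique (split into a $\frac{k-1}{3}$-part and a $\frac{k+2}{3}$-part). The key observation is that a single witnessing clique accounts for only a constant number of $G'$-edges, since the number of ways to partition or to select the designated subset is $\binom{2(k-1)/3}{(k-1)/3}$ or $\binom{(2k+1)/3}{(k-1)/3}$, each a constant because $k$ is constant. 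Hence there are $O(m^{(k-1)/3})$ $A$--$A$ edges and $O(m^{(2k+1)/6})$ $A$--$B$ edges; comparing exponents, $\frac{2k+1}{6} > \frac{2(k-1)}{6} = \frac{k-1}{3}$, so the total is $O(m^{(2k+1)/6})$. For $k \bmod 3 = 2$ the analogous witnesses are $\frac{2(k+1)}{3}$-cliques and $\frac{2k-1}{3}$-cliques, giving $O(m^{(k+1)/3})$ and $O(m^{(2k-1)/6})$ respectively, and since $\frac{k+1}{3} = \frac{2k+2}{6} > \frac{2k-1}{6}$, the total is $O(m^{(k+1)/3})$, as claimed.

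The only real subtlety—a mild one—is justifying that each witnessing clique contributes only a constant number of edges to $G'$, so that the clique-counting bound is not inflated by a super-constant factor; this is exactly where the assumption that $k$ is constant is used, as it bounds every partition and selection multiplicity by a constant. Once that is in place, the remainder is simply comparing the exponents of two monomials in $m$ and retaining the dominant one.
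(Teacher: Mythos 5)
Your proof is correct and follows essentially the same route as the paper: bound vertices and edges of $G'$ by the number of cliques of the corresponding sizes in $G$ via Lemma~\ref{lem:edge-clique-bound}, then keep the dominant exponent. You are in fact somewhat more careful than the paper's own proof, which only cites the dominant clique size for the edge bound and leaves implicit both the $A$--$A$ versus $A$--$B$ case split and the constant-multiplicity argument (each witnessing clique yielding only $O_k(1)$ edges of $G'$) that you spell out.
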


\begin{proof}
When $k\bmod 3 = 1$, the number of vertices is upper bounded
(asymptotically) by the number of $\left(\frac{k+2}{3}\right)$-cliques
in the graph. By Lemma~\ref{lem:edge-clique-bound}, the number of
vertices is then bounded by $O\left(m^{\frac{k+2}{6}}\right)$. The
number of edges is bounded by the number of $\left(\frac{2k
  +1}{3}\right)$-cliques in the graph which is
$O\left(m^{\frac{2k+1}{6}}\right)$. Similarly, when $k\bmod 3 =2$, by
Lemma~\ref{lem:edge-clique-bound}, the number of vertices and edges
are bounded by $O\left(m^{\frac{k+1}{6}}\right)$ and
$O\left(m^{\frac{k+1}{3}}\right)$, respectively.
\end{proof}

\subsubsection{Data Structure and Algorithm Changes}

The major data structure change is to redefine the high-degree and
low-degree vertices in terms of the number of edges in the graph. This
means that low-degree is defined as having a degree less than
$\frac{M^{t\left(\frac{2k+1}{6}\right)}}{2}$ and high-degree as
greater than $\frac{3M^{t\left(\frac{2k+1}{6}\right)}}{2}$ for the
$k\bmod 3 = 1$ case; similarly we define low-degree to be less than
$\frac{M^{t\left(\frac{k+1}{3}\right)}}{2}$ and high-degree to be
greater than $\frac{3M^{t\left(\frac{k+1}{3}\right)}}{2}$ for the
$k\bmod 3 = 2$ case.

Another key difference between this case and the case when $k$ is
divisible by $3$ is that the number of duplicate cliques is different for these two cases. For the
$k\bmod 3=1$ case, each $k$-clique in $G$ will be represented by ${k \choose
  (k+2)/3}{(2k-2)/3 \choose (k-1)/3}$ triangles found by the
algorithm. For the $k\bmod 3 = 2$ case, each $k$-clique in $G$ will be
represented by ${k \choose (k-2)/3}{(2k+2)/3 \choose (k+1)/3}$
triangles. Thus, at the
end of our algorithm, we must divide the count of the triangles by
their respective number of duplicates.

The rest of the algorithm remains the same as before, except that we
solve for different values of $t$ depending on the case. Since the
proofs for obtaining the following results are nearly identical to the
ones for $k \bmod 3 = 0$, we do not restate the proofs and only give
our results.

\begin{lemma}\label{lem:t-mod1}
For the case when $k \bmod 3 = 1$, there exists $O\left(m^{\frac{2k+1}{6}}\right)$ edges in the graph and solving for the
optimal value of $t$ (assuming $\batch = 1$) gives $t = \frac{2k\op - 2k + \op + 5}{2k\op + 2k + \op + 1}$. For the case when $k \bmod 3 = 2$,
there exists $O\left(m^{\frac{k+1}{3}}\right)$ edges in the graph and solving for the optimal value of $t$ gives $t = \frac{k\op - k + \op + 2}{k\op + k + \op + 1}$.
\end{lemma}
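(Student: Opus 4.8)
The plan is to mirror the analysis that produced Lemma~\ref{lem:t-value}, reusing the entire $k \bmod 3 = 0$ bookkeeping with only the parameters of $G'$ changed. The key observation is that every cost in that argument is governed by a single exponent --- call it $E$, defined so that $G'$ has $O(m^{E})$ edges --- and that the two work terms being balanced (low-degree triangle enumeration and the $A^3$ computation) depend on $E$ and $t$ in exactly the same functional form irrespective of $k \bmod 3$. By Lemma~\ref{lem:gp-struct}, $E = \frac{2k+1}{6}$ when $k \bmod 3 = 1$ and $E = \frac{k+1}{3}$ when $k \bmod 3 = 2$ (as opposed to $E = \frac{k}{3}$ in the divisible case). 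Note that the threshold exponent in each case is exactly $tE$, so the whole derivation becomes a substitution into one formula.

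First I would re-establish the per-update costs in terms of $E$ and the threshold exponent $tE$. Since the degree threshold is $M^{tE}/2$, a low-degree vertex in $G'$ has degree $O(m^{tE})$, and a single edge update in $G$ induces $O(m^{E-1})$ edge updates in $G'$ (each such edge corresponds to one of the largest cliques realized among the two edge types, and the number through the fixed updated edge is bounded by Lemma~\ref{lem:edge-clique-bound}; one checks $E-1 = \frac{2k-5}{6}$ and $\frac{2k-4}{6}$ in the two cases). Combining these as in Lemma~\ref{lem:compute-low-deg} gives low-degree work $O(\batch m^{(t+1)E - 1})$. For the matrix step, the number of high-degree vertices is (total edge mass)/(minimum high degree) $= O(m^{(1-t)E})$, so $A^3$ costs $O(m^{(1-t)E\op})$ exactly as in Lemma~\ref{lem:compute-from-scratch}. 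I would also note that the structure-update and rebalancing steps (the analogues of Lemmas~\ref{lem:updating-structs} and~\ref{lem:major-rebalancing}) are dominated by these two terms, so optimizing $t$ against them alone suffices.

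Next I would balance the two dominant terms by setting $(1-t)E\op = (t+1)E - 1$ and solving, giving the closed form $t = \frac{E(\op-1)+1}{E(1+\op)}$. Substituting $E = \frac{2k+1}{6}$ and clearing the factor of $6$ yields $t = \frac{2k\op - 2k + \op + 5}{2k\op + 2k + \op + 1}$ for $k \bmod 3 = 1$, and substituting $E = \frac{k+1}{3}$ and clearing the factor of $3$ yields $t = \frac{k\op - k + \op + 2}{k\op + k + \op + 1}$ for $k \bmod 3 = 2$; as a sanity check, $E = \frac{k}{3}$ recovers Lemma~\ref{lem:t-value}.

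The main obstacle I anticipate is not the algebra but justifying that the single exponent $E$ genuinely captures all the relevant costs in the two-class construction of Section~\ref{sec:gp-const}, where $G'$ carries vertex classes $A$ and $B$ and two kinds of edges (within $A$, and between $A$ and $B$) corresponding to cliques of different sizes. I would need to confirm that the edge count, the per-$G$-update fan-out into $G'$, and the high-degree-vertex count are each dominated by the contribution of the \emph{largest} clique size --- the $A$--$B$ edges when $k \bmod 3 = 1$ and the $A$--$A$ edges when $k \bmod 3 = 2$ --- so that the smaller contribution does not perturb any exponent. Once this domination is verified, the $k \bmod 3 = 0$ argument carries over verbatim and the balancing computation is routine.
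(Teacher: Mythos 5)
Your proposal is correct and is essentially the paper's own argument: the paper explicitly omits the proof of this lemma on the grounds that it is ``nearly identical'' to the $k \bmod 3 = 0$ analysis, and your parameterization by the edge exponent $E$ (with the balancing equation $(1-t)E\op = (t+1)E - 1$, i.e.\ $t = \frac{E(\op-1)+1}{E(1+\op)}$) carries that analysis out, with the substitutions $E = \frac{2k+1}{6}$ and $E = \frac{k+1}{3}$ reproducing exactly the two stated formulas. Your flagged verification that the larger clique size dominates the edge count, fan-out, and high-degree count (the $A$--$B$ edges for $k \bmod 3 = 1$, the $A$--$A$ edges for $k \bmod 3 = 2$) is indeed the only point requiring care, and it checks out against Lemma~\ref{lem:gp-struct} and Lemma~\ref{lem:edge-clique-bound}.
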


Using our values for $t$, we can obtain our final theorem, Theorem~\ref{thm:all-k}, for the work and depth bounds for these two cases.

\begin{theorem}\label{thm:all-k}
Our fast matrix multiplication based $k$-clique algorithm takes\\ $O\left(\min\left(\batch m^{\frac{2(k - 1)\op}{3(\op + 1)}}, (\batch+m)^{\frac{(2 k + 1)\op}{3 (\op + 1)}}\right)\right)$ work and $O(\log(m+\batch))$ depth w.h.p., and
$O\left((\batch+m)^{\frac{(2 k + 1) \op}{3 (\op + 1)}}\right)$ space assuming a parallel matrix multiplication algorithm with coefficient $\op$ when $k \bmod 3 =  1$,  and $O\left(\min\left(\batch m^{\frac{(2k - 1)\op}{3(\op + 1)}}, (\batch+m)^{\frac{2(k + 1)\op}{3(\op + 1)}}\right)\right)$ work and $O(\log(m+\batch))$ depth w.h.p., and $O\left((\batch+m)^{\frac{2(k + 1)\op}{3(\op + 1)}}\right)$ space when $k \bmod 3 =  2$.
\end{theorem}

\begin{corollary}\label{cor:strassen-all-k}
Using Corollary~\ref{cor:matrix-exp} with $\op = 2.373$, we obtain a parallel fast matrix multiplication $k$-clique algorithm that takes $O\left(\min\left(\batch m^{0.469k - 0.469}, (\batch+m)^{0.469k + 0.235}\right)\right)$ work and $O(\log m)$ depth w.h.p., and $O\left((\batch+m)^{0.469k + 0.235}\right)$ space when $k \bmod 3 = 1$, and $O\left(\min\left(\batch m^{0.469k - 0.235}, (\batch+m)^{0.469k +
0.469}\right)\right)$ work and $O(\log m)$ depth w.h.p., and $O\left((\batch+m)^{0.469k +
0.469}\right)$ space when $k\bmod 3 = 2$.
\end{corollary}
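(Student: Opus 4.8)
The plan is to obtain Corollary~\ref{cor:strassen-all-k} by direct substitution of the best known parallel matrix-multiplication exponent $\op = 2.373$ (from Corollary~\ref{cor:matrix-exp}) into the asymptotic bounds of Theorem~\ref{thm:all-k}. No new algorithmic ideas are needed; the content is purely arithmetic, so I would set up the two recurring base coefficients once and reuse them for both residue cases.

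First I would compute the two ratios that appear repeatedly in Theorem~\ref{thm:all-k}. With $\op = 2.373$ we have $\op + 1 = 3.373$ and $3(\op+1) = 10.119$, so
\[
\frac{2\op}{3(\op+1)} = \frac{4.746}{10.119} \approx 0.469, \qquad \frac{\op}{3(\op+1)} = \frac{2.373}{10.119} \approx 0.2345.
\]
Every exponent appearing in Theorem~\ref{thm:all-k} is a linear function of $k$ times one of these two ratios, so the corollary follows simply by distributing.

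For the case $k \bmod 3 = 1$, the first work exponent $\frac{2(k-1)\op}{3(\op+1)}$ becomes $0.469(k-1) = 0.469k - 0.469$, while the second work exponent and the space exponent $\frac{(2k+1)\op}{3(\op+1)}$ both become $0.2345(2k+1) = 0.469k + 0.2345$, which I round up to $0.469k + 0.235$. For the case $k \bmod 3 = 2$, the first work exponent $\frac{(2k-1)\op}{3(\op+1)}$ becomes $0.2345(2k-1) = 0.469k - 0.235$, and the second work exponent together with the space exponent $\frac{2(k+1)\op}{3(\op+1)}$ become $0.469(k+1) = 0.469k + 0.469$. Substituting these exponents into the $\min$ expressions and the space bound of Theorem~\ref{thm:all-k} yields exactly the claimed quantities.

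It remains to simplify the depth from $O(\log(m+\batch))$ to $O(\log m)$. Recall from the discussion preceding Section~\ref{sec:kmod3} that the update procedure only handles batches of size $\batch \le m^{\op/(1+\op)} = O(m)$, with larger batches triggering reinitialization; hence $m + \batch = O(m)$ throughout and $\log(m+\batch) = O(\log m)$. The only point requiring any care is confirming that rounding $0.2345$ up to $0.235$ (and that $0.469$ is itself a rounded value) keeps the stated exponents valid as upper bounds inside the $O(\cdot)$ notation, which is immediate since inflating an exponent only weakens the bound. I expect this bookkeeping, rather than any genuine difficulty, to be the sole thing to check.
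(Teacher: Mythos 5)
Your proposal is correct and is essentially identical to what the paper does: the paper offers no separate proof of Corollary~\ref{cor:strassen-all-k}, treating it as immediate substitution of $\op = 2.373$ from Corollary~\ref{cor:matrix-exp} into the exponents of Theorem~\ref{thm:all-k}, exactly as you compute (including the observation that $\batch \le m^{\op/(1+\op)}$ for the update procedure, so $\log(m+\batch) = O(\log m)$). One pedantic caveat: your claim that the decimal rounding only weakens the bounds is not literally true for the negated term $0.469k - 0.235$ (since $0.235 > 0.2345$, this slightly \emph{strengthens} the stated exponent), but this is the same informal rounding convention the paper itself adopts, so it is not a gap relative to the paper's own treatment.
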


\subsection{Parallel Fast Matrix Multiplication}\label{sec:pmm}

In this section, we show that tensor-based matrix multiplication
algorithms (including Strassen's algorithm) can be parallelized in
$O(\log n)$ depth and $O(n^{\omega})$ work. Such techniques are used
for algorithms that achieve the best currently known matrix
multiplication exponents~\cite{Williams12,LeGall14}.  We assume, as is
common in models such as the arithmetic circuit model, that field
operations can be performed in constant work. We refer readers
interested in learning more about current techniques in fast matrix
multiplication to~\cite{Blaser13,alman19}.

Before we prove our main parallel result in this section, we first
define the \emph{matrix multiplication tensor} as used in previous
literature.

\begin{definition}[Matrix Multiplication Tensor (see, e.g., \cite{alman19})]\label{def:mm-tensor}
For positive integers $a, b, c$, the matrix multiplication tensor $\langle a, b, c \rangle$
is a tensor over $\left\{x_{ij}\right\}_{i \in [a], j \in [b]}, \left\{y_{jk}\right\}_{j \in [b], k \in [c]},
\left\{z_{ki}\right\}_{k \in [c], i \in [a]}$, where
\begin{align*}
\langle a, b, c \rangle = \sum_{i=1}^a \sum_{j=1}^b \sum_{k=1}^c x_{ij} y_{jk} z_{ki}.
\end{align*}
\end{definition}

The matrix multiplication tensor can be seen as a generating function for $A \times B$ multiplication where
the coefficients of the $z_{ki}$ terms are exactly the $(i, k)$ entries in the matrix product $A \times B$ where
$A = \begin{pmatrix}
x_{11} &\dots & x_{1b}\\
\dots&\dots&\dots \\
x_{a1} & \dots & x_{ab}
\end{pmatrix}$ and $B = \begin{pmatrix}
y_{11} &\dots & y_{1c}\\
\dots&\dots&\dots \\
y_{b1} & \dots & y_{bc}
\end{pmatrix}$.

Current matrix multiplications algorithms use this fact to obtain the best known exponents.
The proof of the following lemma closely follows the proof of Proposition 4.1 given in~\cite{alman19}.

\begin{lemma}\label{lem:matrix-mult-tensor-parallel}
Let $R\left(\langle q, q, q\rangle\right) \leq r$ (over a field $\mathbb{F}$) be the rank of the matrix multiplication tensor $\langle q, q, q\rangle$.
Assuming that field operations take $O(1)$ work, then,
there exists a parallel matrix multiplication algorithm that performs $A \times B$ matrix multiplication (where $A, B \in \mathbb{F}^{n \times n}$)
over $\mathbb{F}$ using $O\left(n^{\log_q(r)}\right)$ work and $O((\log r + \log q)\log_q n)$ depth using $O\left(n^{\log_q(r)}\right)$ space.
\end{lemma}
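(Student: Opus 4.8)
The plan is to realize the recursive bilinear algorithm induced by a rank-$r$ decomposition of $\langle q,q,q\rangle$, but to evaluate each recursion level with parallel reductions, so that every level contributes only $O(\log q + \log r)$ depth while the recursion itself has only $\log_q n$ levels.

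First I would extract the combinatorial content of the hypothesis $R(\langle q,q,q\rangle)\le r$. By Definition~\ref{def:mm-tensor}, a rank-$\le r$ decomposition yields fixed field coefficients $\{\alpha_{s,ij}\}$, $\{\beta_{s,jk}\}$, $\{\gamma_{s,ki}\}$ for $s\in[r]$ such that, for any matrices $A,B$ partitioned into a $q\times q$ grid of equal-sized blocks $A_{ij},B_{jk}$, the output blocks satisfy $C_{ik}=\sum_{s=1}^r \gamma_{s,ki}\,\widetilde A_s\,\widetilde B_s$, where $\widetilde A_s=\sum_{i,j}\alpha_{s,ij}A_{ij}$ and $\widetilde B_s=\sum_{j,k}\beta_{s,jk}B_{jk}$. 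This is exactly the statement that a $(q\times q)$-by-$(q\times q)$ block product can be computed from only $r$ block products of linear combinations of the blocks. Padding $n$ up to the next power of $q$ (a constant-factor blowup), I would recurse: form the $2r$ block combinations $\widetilde A_s,\widetilde B_s$, compute the $r$ products $\widetilde A_s\widetilde B_s$ recursively and in parallel, then assemble the $q^2$ output blocks.

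Next I would set up the three recurrences. Forming each $\widetilde A_s$ or $\widetilde B_s$ sums at most $q^2$ scaled blocks of size $(n/q)\times(n/q)$; the scaling is $O(1)$ depth and the summation is a parallel reduction of depth $O(\log q)$, with all $2r$ combinations run simultaneously. Each output block is a reduction of at most $r$ products, contributing depth $O(\log r)$. Hence $D(n)=D(n/q)+O(\log q+\log r)$, which over the $\log_q n$ levels solves to $O((\log q+\log r)\log_q n)$, matching the claim. For work, each level performs $O(r n^2)$ scalar operations to build the combinations and assemble the output (for fixed $q,r$ this is $O(n^2)$), so $W(n)=rW(n/q)+O(n^2)$; since $R(\langle q,q,q\rangle)\ge q^2$ (indeed strictly larger for $q\ge 2$) we have $\log_q r\ge 2$ and the leaves dominate, giving $W(n)=O(n^{\log_q r})$. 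Because the $r$ subproblems are solved at once, their intermediate matrices are all live, so space obeys $S(n)=rS(n/q)+O(n^2)=O(n^{\log_q r})$.

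The main obstacle, and the crux of the ``first proof of logarithmic depth'' claim, is arguing that the per-level overhead is independent of $n$. This rests on the observation that the linear-combination pattern is a \emph{fixed}, constant-size object: the coefficient arrays depend only on $q$ and $r$, not on $n$, so each combination and each output assembly is a reduction over a constant number of operands and evaluates in depth logarithmic in that constant count, namely $O(\log q)$ or $O(\log r)$. One must also dispose of the corner case $\log_q r = 2$, where the work recurrence would acquire an extra logarithmic factor; this is avoided because $r>q^2$ for every $q\ge 2$, which secures the stated $O(n^{\log_q r})$ work and space with no polylogarithmic slack.
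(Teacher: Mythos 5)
Your proposal is correct and follows essentially the same route as the paper's proof: the same recursive bilinear algorithm derived from the rank-$r$ decomposition, with padding to a power of $q$, per-level parallel reductions giving the depth recurrence $D(n)=D(n/q)+O(\log q+\log r)$, and the work/space recurrences $W(n)=rW(n/q)+O(n^2)$ and $S(n)=rS(n/q)+O(n^2)$ solved to $O\left(n^{\log_q r}\right)$. Your explicit handling of the corner case $\log_q r=2$ (ruled out because $R(\langle q,q,q\rangle)>q^2$ for $q\ge 2$, so the leaves of the recursion dominate) is a detail the paper's proof passes over silently, and it is a correct and worthwhile refinement rather than a different approach.
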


\begin{proof}
By definition of rank, since $R \left(\langle q, q, q\rangle\right) \leq r$,
\begin{align*}
\langle q, q, q\rangle = \sum_{\ell = 1}^r \left(\sum_{i, j \in [q]}
a_{ij\ell}x_{ij}\right)\left(\sum_{j, k \in [q]}
b_{jk\ell}y_{jk}\right)\left(\sum_{k, i \in [q]} c_{ki\ell}z_{ki}\right)
\end{align*}
for some coefficients $a_{ij\ell}, b_{jk\ell}, c_{ki\ell} \in \mathbb{F}$. Computing this matrix multiplication tensor requires at most $O\left(rq^2\right)$ field operations.

Using this information, we perform parallel matrix multiplication via the following recursive algorithm. We assume that $n$ is a power of $q$; otherwise,
we can pad $A$ and $B$ with $0$'s until such a condition is satisfied--this would increase the dimensions by at most a factor of $q$.

Partition the padded matrices $A$ and $B$ into $q \times q$ block matrices where each block has size $n/q \times n/q$.
This algorithm performs, in parallel, the following linear combinations for each $\ell$,

\begin{align*}
A_{\ell}' = \sum_{i, j \in [q]} a_{ij \ell} A_{ij} \\
B_{\ell}' = \sum_{j, k \in [q]} b_{jk \ell} B_{jk}
\end{align*}
where $A_{ij}$ and $B_{jk}$ are the $n/q \times n/q$ blocks in $A$ and $B$, respectively. Such operations
require $O(rq^2)$ operations to perform; however, all such multiplication operations can be done in parallel, and the
summation of the results can be done in $O(\log q)$ depth, resulting in $O(\log q)$ depth.

Then, for each $\ell \in [r]$, we compute $C_{\ell}' = A_{\ell}' \times B_{\ell}'$ by performing parallel $n/q \times n/q$
matrix multiplication recursively on $A_{\ell}'$ and $B_{\ell}'$ where the base case is $q \times q$ matrix multiplication.
All field operations in the same level of the recursion can be performed in parallel. There are $O(\log_q n)$ levels of recursion.
Each level of recursion computes a number of field operations in parallel in $O(\log q)$ depth as in the top level.

Finally, after obtaining the results $C_{\ell}'$ of the recursive calls, we compute

\begin{align*}
C_{ki} = \sum_{\ell \in [r]} c_{ki\ell} C_{\ell, ki}'
\end{align*}
for all $k, i \in [q]$ where $C_{\ell, ki}'$ are the results we obtain from our recursive calls.
The blocks $C_{ki}$ for all $k, i \in [q]$ are the results of our matrix multiplication $A \times B$.

This final step can compute in parallel the blocks $C_{ki}$ for all $k, i \in [q]$ in $O(\log r)$ depth
(assuming that we have the results $C_{\ell, ki}'$) since the multiplication operations can be done in parallel
and the summation of the elements in the resulting matrices can be done in $O(\log r)$ depth.

Thus, the depth required for this algorithm is $O((\log r + \log q)\log_q n)$.

To compute the work and space usage, we compute the total number of field operations performed,
which is $O(n^2)$ per level of the recursion. For each level of recursion, there are $r$
calls per subproblem of the recursion. Since we assume that each field operation is $O(1)$ work,
this results in total work given by

\begin{align*}
W(n) = r\cdot W(n/q) + O(n^2).
\end{align*}

Solving the recurrence gives $W(n) = O\left(n^{\log_qr}\right)$ work for the entire algorithm. The space usage is also $O\left(n^{\log_qr}\right)$.
\end{proof}

Using Lemma~\ref{lem:matrix-mult-tensor-parallel}, we obtain the following parallel matrix multiplication bounds:

\begin{corollary}\label{cor:matrix-exp}
There exists a parallel matrix multiplication algorithm based on~\cite{Williams12,LeGall14} that multiplies two $n \times n$ matrices
with $O\left(n^{2.373}\right)$ work and $O(\log n)$ depth, using $O\left(n^{2.373}\right)$ space.
\end{corollary}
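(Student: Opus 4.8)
The plan is to obtain Corollary~\ref{cor:matrix-exp} by instantiating Lemma~\ref{lem:matrix-mult-tensor-parallel} with a \emph{fixed} pair of constants $q$ and $r$ extracted from the Williams--Le Gall bound on the matrix multiplication exponent, and then observing that once $q$ and $r$ are held constant, the general work/depth/space formulas of the lemma collapse to exactly the claimed $O(n^{2.373})$ work and space and $O(\log n)$ depth. The real content is already in the lemma; the corollary is the specialization to the best known exponent.

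Concretely, I would recall the characterization of the matrix multiplication exponent as an infimum over tensor ranks, $\omega = \inf_q \log_q R(\langle q, q, q\rangle)$, which is justified by the submultiplicativity $R(\langle q^k, q^k, q^k\rangle) \le R(\langle q, q, q\rangle)^k$ under tensor products; this submultiplicativity is exactly what lets the lemma recurse with exponent $\log_q r$. The results of~\cite{Williams12,LeGall14} give $\omega < 2.3728639 < 2.373$. Fixing any target $\tau$ with $\omega < \tau < 2.373$, the definition of $\omega$ as an infimum supplies, for all sufficiently large $q$, a bound $R(\langle q, q, q\rangle) \le C\,q^{\tau}$ for some constant $C$, whence $\log_q R(\langle q, q, q\rangle) \le \tau + (\log C)/(\log q) \to \tau$ as $q \to \infty$. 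Thus there is a single fixed integer $q_0$ with corresponding rank $r_0 := R(\langle q_0, q_0, q_0\rangle)$ satisfying $\log_{q_0} r_0 < 2.373$. Applying Lemma~\ref{lem:matrix-mult-tensor-parallel} verbatim with this constant pair $(q_0, r_0)$ yields work and space $O(n^{\log_{q_0} r_0}) = O(n^{2.373})$; the depth $O((\log r_0 + \log q_0)\log_{q_0} n)$ simplifies because $\log r_0$, $\log q_0$, and $1/\log q_0$ are all absolute constants, leaving $O(\log n)$.

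The main obstacle is the step that produces the fixed constant $q_0$: the record bounds on $\omega$ in~\cite{Williams12,LeGall14} are not established by exhibiting a single small-rank tensor $\langle q, q, q\rangle$, but rather by the laser method applied to high tensor powers of the Coppersmith--Winograd tensor, bounding the \emph{border} rank asymptotically. To feed these results into the lemma, which requires an honest bound on ordinary rank $R(\langle q, q, q\rangle)$, one must pass from border rank to rank via Bini's theorem / Schönhage's $\tau$-theorem. This classical equivalence is precisely what legitimizes writing $\omega$ as an infimum over ordinary ranks and hence guarantees the existence of the fixed $q_0$ used above. Once this rank/border-rank equivalence is invoked, no further numerical work is needed, and all hidden constants depend only on $q_0$ and $r_0$.
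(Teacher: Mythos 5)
Your proposal is correct and takes essentially the same route as the paper: the paper obtains Corollary~\ref{cor:matrix-exp} precisely by instantiating Lemma~\ref{lem:matrix-mult-tensor-parallel} with the constant tensor parameters implicit in the Williams--Le Gall bound $\omega < 2.373$, so that work and space become $O(n^{\log_{q} r}) = O(n^{2.373})$ and the depth $O((\log r + \log q)\log_q n)$ collapses to $O(\log n)$ for fixed $q, r$. Your extra care in extracting a concrete fixed pair $(q_0, r_0)$ with $\log_{q_0} r_0 < 2.373$ — in particular the passage from border rank to ordinary rank via Bini/Sch\"onhage, which the paper leaves entirely implicit — is a welcome tightening of the same argument rather than a different approach.
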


\fi
\ifCameraReady
    \newcommand{\counttriangles}[1]{\mathtt{count\_updated\_low\_degree\_triangles}(#1)}

\section{Dynamic $k$-Clique via Fast Matrix Multiplication}\label{sec:mm}
In this section, we present our
 parallel \batchdynamic{} algorithm for
counting $k$-cliques based on fast matrix multiplication in general graphs
(which may be dense).
Our algorithm is inspired
by the static triangle counting algorithm of Alon, Yuster, and Zwick
(AYZ)~\cite{AYZ97} and the static $k$-clique counting algorithm
of Eisenbrand and Grandoni~\cite{EG04} that uses matrix multiplication-based triangle counting.  We present a new
dynamic algorithm that obtains better bounds than the simple algorithm
based on static lower-clique enumeration in
Section~\ref{sec:arboricityclique} for larger values of $k$.

We define the \defn{parallel matrix multiplication exponent} to be the smallest
exponent $\omega_p$ such that there exists a parallel matrix multiplication
algorithm that multiplies two $n \times n$ matrices with
$O\left(n^{\omega_p}\right)$ work and $O(\log n)$ depth, using
$O\left(n^{\omega_p}\right)$ space. We show that $\omega_p = 2.373$ in the full version of the paper~\cite{fullversion}.
Assuming a parallel matrix multiplication exponent of
$\omega_p$, our algorithm handles batches of $\Delta$ edge
insertions/deletions using $O\left(\min\left(\Delta m^{\frac{(2k -
    3)\op}{3(1+\op)}}, (m +
\batch)^{\frac{2k\op}{3(1+\op)}}\right)\right)$ work and $O(\log m)$
depth w.h.p., and $O\left((m + \batch)^{\frac{2k\op}{3(1+\op)}}\right)$
space where $m$ is the number of edges in the graph
after applying the batch of updates.
To the best of our knowledge, the sequential (batch-dynamic)
version of our algorithm also provides the best bounds for dynamic
$k$-clique counting in the sequential model for dense graphs for large
constant values of $k$
(assuming we use the best currently known matrix
multiplication algorithm)~\cite{Dvorak2013}.

More formally, we obtain the following results:

\begin{theorem}\label{thm:all-k}
Our fast matrix multiplication based $k$-clique algorithm takes\\ $O\left(\min\left(\batch m^{\frac{2(k - 1)\op}{3(\op + 1)}}, (\batch+m)^{\frac{(2 k + 1)\op}{3 (\op + 1)}}\right)\right)$ work and $O(\log(m+\batch))$ depth w.h.p., and
$O\left((\batch+m)^{\frac{(2 k + 1) \op}{3 (\op + 1)}}\right)$ space assuming a parallel matrix multiplication algorithm with coefficient $\op$ when $k \bmod 3 =  1$,  and $O\left(\min\left(\batch m^{\frac{(2k - 1)\op}{3(\op + 1)}}, (\batch+m)^{\frac{2(k + 1)\op}{3(\op + 1)}}\right)\right)$ work and $O(\log(m+\batch))$ depth w.h.p., and $O\left((\batch+m)^{\frac{2(k + 1)\op}{3(\op + 1)}}\right)$ space when $k \bmod 3 =  2$.
\end{theorem}

\begin{corollary}\label{cor:strassen-all-k}
Provided the best known parallel matrix multiplication exponent $\op = 2.373$, we obtain a parallel fast matrix multiplication $k$-clique algorithm that takes $O\left(\min\left(\batch m^{0.469k - 0.469}, (\batch+m)^{0.469k + 0.235}\right)\right)$ work and $O(\log m)$ depth w.h.p., and $O\left((\batch+m)^{0.469k + 0.235}\right)$ space when $k \bmod 3 = 1$, and $O\left(\min\left(\batch m^{0.469k - 0.235}, (\batch+m)^{0.469k +
0.469}\right)\right)$ work and $O(\log m)$ depth w.h.p., and $O\left((\batch+m)^{0.469k +
0.469}\right)$ space when $k\bmod 3 = 2$.
\end{corollary}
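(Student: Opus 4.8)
The plan is to establish Theorem~\ref{thm:all-k}, from which Corollary~\ref{cor:strassen-all-k} follows immediately by substituting the parallel matrix multiplication exponent $\op = 2.373$ of Corollary~\ref{cor:matrix-exp}. I would prove Theorem~\ref{thm:all-k} by reducing it to the $k \bmod 3 = 0$ analysis of Theorem~\ref{thm:mm-main}, treating the edge-count exponent of $G'$ as a single parameter $E$: by Lemma~\ref{lem:gp-struct}, $G'$ has $O(m^{E})$ edges with $E = (2k+1)/6$ when $k \bmod 3 = 1$ and $E = (k+1)/3$ when $k \bmod 3 = 2$. The first thing to verify is that the two-vertex-type construction of Section~\ref{sec:gp-const} is sound: a triangle of $G'$ uses vertex sets whose union has exactly $k$ distinct vertices of $G$ (e.g.\ $\frac{k-1}{3} + \frac{k-1}{3} + \frac{k+2}{3} = k$), so triangles of $G'$ correspond to $k$-cliques of $G$ up to the multiplicity constant recorded after Lemma~\ref{lem:gp-struct}, and the low/high-degree threshold $M^{tE}/2$ is defined with the same exponent $E$ that controls the edge count. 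Given this, every step of the $k \bmod 3 = 0$ analysis (Lemmas~\ref{lem:updating-structs}--\ref{lem:major-rebalancing}) goes through with $\ell$ replaced by $E$.

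Concretely, I would re-derive the two dominant work terms. A single edge update in $G$ induces $O(m^{E-1})$ edge updates in $G'$, and for each update incident to a low-degree vertex we scan its $O(m^{tE})$ neighbors, giving low-degree enumeration work $O(\batch\, m^{(1+t)E-1})$, the analog of Lemma~\ref{lem:compute-low-deg}. Since $G'$ has $O(m^{E})$ edges and each high-degree vertex has degree $\Omega(m^{tE})$, there are $O(m^{(1-t)E})$ high-degree vertices, so computing $A^3$ via Corollary~\ref{cor:matrix-exp} costs $O(m^{(1-t)E\op})$ work and $O(\log m)$ depth, the analog of Lemma~\ref{lem:compute-from-scratch}. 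Balancing $(1+t)E - 1 = (1-t)E\op$ yields exactly the optimal $t$ of Lemma~\ref{lem:t-mod1}, and substituting it back gives a per-update exponent of $\frac{(2E-1)\op}{1+\op}$; plugging in $E = (2k+1)/6$ and $E = (k+1)/3$ reduces this to $\frac{2(k-1)\op}{3(\op+1)}$ and $\frac{(2k-1)\op}{3(\op+1)}$, matching the first term of each stated $\min$. The second $\min$ term and the space bound come from the reinitialization/preprocessing step, which (as in Lemma~\ref{lem:mmpreprocessing}) uses the preprocessing threshold $t' = \frac{\op-1}{\op+1}$ and costs $O((m+\batch)^{2E\op/(1+\op)})$ work and space; the two work terms coincide at the cutoff $\batch = m^{\op/(1+\op)}$, so we take the update bound below the cutoff and the reinitialization bound above it, and evaluating $2E\op/(1+\op)$ at the two values of $E$ gives the stated $(m+\batch)$ exponents.

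For the depth, the only contributions are $O(\log^{*} m)$ from hash-table operations w.h.p.\ and $O(\log m)$ from the matrix multiplication (Corollary~\ref{cor:matrix-exp}), so the overall depth is $O(\log(m+\batch))$ w.h.p.\ as claimed. I expect the main obstacle to be not the arithmetic but re-establishing correctness in the two-type construction: the analog of Lemma~\ref{lem:one-low-high} must be re-checked for triangles whose composition is forced to be two $A$-vertices and one $B$-vertex, so that an edge update between two high-degree vertices that creates or destroys a triangle containing a low-degree vertex is still witnessed by an update incident to that low-degree vertex, and the duplicate-counting normalization must be replaced by ${k \choose (k+2)/3}{(2k-2)/3 \choose (k-1)/3}$ (resp.\ ${k \choose (k-2)/3}{(2k+2)/3 \choose (k+1)/3}$). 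These affect only the normalization constant and not the asymptotics. With Theorem~\ref{thm:all-k} in hand, Corollary~\ref{cor:strassen-all-k} follows by setting $\op = 2.373$ and simplifying each exponent (e.g.\ $\frac{2(k-1)(2.373)}{3(3.373)} = 0.469k - 0.469$).
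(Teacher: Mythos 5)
Your proposal is correct and follows essentially the same route as the paper: the paper obtains Corollary~\ref{cor:strassen-all-k} by substituting $\op = 2.373$ into Theorem~\ref{thm:all-k}, and proves that theorem exactly as you do---via the two-type construction of $G'$ (Section~\ref{sec:gp-const}), the vertex/edge counts of Lemma~\ref{lem:gp-struct}, the rebalanced thresholds, the modified duplicate-counting constants, and the optimal $t$ of Lemma~\ref{lem:t-mod1}, declaring the remaining steps ``nearly identical'' to the $k \bmod 3 = 0$ analysis. Your single-parameter treatment (edge exponent $E$) simply makes explicit the substitution the paper leaves implicit, and your derived quantities (the balancing equation, the $t$ values, the per-update exponent $\frac{(2E-1)\op}{1+\op}$, and the coincidence of the two $\min$ terms at $\batch = m^{\op/(1+\op)}$) all check out against the stated bounds.
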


\myparagraph{High-Level Approach and Techniques} For a given graph $G
= (V, E)$, we create an auxiliary graph $G' = (V', E')$ with vertices
and edges representing cliques of various sizes in $G$. For a given
$k$-clique problem, vertices in $V'$ represent cliques of size $k/3$
in $G$ and edges $(u, v)$ between vertices $u, v \in V'$ represent
cliques of size $2k/3$ in $G$. Thus, a triangle in $G'$ represents a
$k$-clique in $G$. Specifically, there exist exactly ${k \choose
  k/3}{2k/3 \choose k/3}$ different triangles in $G'$ for each clique
in $G$.

Given a batch of edge insertions and deletions to $G$, we create a set
of edge insertions and deletions to $G'$. An edge is inserted in $G'$
when a new $2k/3$-clique is created in $G$ and an edge is deleted in
$G'$ when a $2k/3$-clique is destroyed in $G$.  Suppose, for now,
that we have a dynamic algorithm for processing the edge
insertions/deletions into $G'$.  Counting the number of triangles in
$G'$ after processing all edge insertions/deletions and
dividing by ${k \choose k/3}{2k/3 \choose k}$ provides us with the
exact number of cliques in $G$.

There are several challenges that we must deal with when
formulating our dynamic triangle counting algorithm for counting the
triangles in $G'$:
\begin{enumerate}[label=(\textbf{\arabic*}),topsep=1pt,itemsep=0pt,parsep=0pt,leftmargin=15pt]
\item We cannot simply count all the triangles in $G'$ after
  inserting/deleting the new edges as this does not perform better
  than a trivial static algorithm.
\item Any trivial dynamization of the AYZ algorithm will not be able
  to detect all new triangles in $G'$. Specifically, because the AYZ
  algorithm counts all triangles containing a low-degree vertex
  separately from all triangles containing only high-degree vertices,
  if an edge update only occurs between high-degree vertices, a
  trivial dynamization of the algorithm will not be able to detect any
  triangle that the two high-degree endpoints make with low-degree
  vertices.
  \item We must ensure that batches of updates can be efficiently processed in parallel without overcounting.
\end{enumerate}

To solve the first challenge, we dynamically count low-degree
and high-degree vertices in different ways. Let $\ell=k/3$ and
$M = 2m + 1$. For some value of $0<t<1$, we define \defn{low-degree}
vertices to be vertices that have degree less than $M^{t\ell}/2$ and
\defn{high-degree} vertices to have degree greater than
$3M^{t\ell}/2$.  Vertices with degrees in the range $[M^{t\ell}/2,
  3M^{t\ell}/2]$ can be classified as either low-degree or
high-degree.  We analyze  the specific value to use for $t$ in
the full version of our paper~\cite{fullversion}.
We perform rebalancing of the data structures
as needed as they handle more updates. For low-degree vertices, we
only count the triangles that include at least one newly
inserted/deleted edge, at least one of whose endpoints is low-degree.
This means that we do not need to count any pre-existing triangles
that contain at least one low-degree vertex.
For the high-degree
vertices, because there is an upper bound on the maximum number of
such vertices in the graph, we update an adjacency matrix $A$
containing edges only between high-degree vertices.  At the end of all
of the edge updates, computing $A^3$ gives us a count of all of the
triangles that contain three high-degree vertices.

This procedure immediately then leads to our second challenge. To
solve this second challenge, we make the observation (stated in
Lemma~\ref{lem:one-low-high} below, and proven in the full version of our
paper~\cite{fullversion}) that if there exists an edge update between
two high-degree vertices that creates or destroys a triangle that
contains a low-degree vertex in $G'$, then there \emph{must} exist at
least one new edge insertion/deletion \emph{that creates or destroys a
  triangle representing the same clique} to that low-degree vertex in
the same batch of updates to $G'$.  Thus, we can use one of those edge
insertions/deletions to determine the new clique that was created and,
through this method, find all triangles containing at least one
low-degree vertex and at least one new edge update. Some care must be
observed in implementing this procedure in order to not increase the
runtime or space usage; such details can be found in the full version
of our paper~\cite{fullversion}.

\begin{lemma}\label{lem:one-low-high}
Given a graph $G=(V, E)$, the corresponding $G' = (V', E')$, and for $k > 3$,
suppose an edge insertion (resp.\ deletion) between two high-degree vertices in
$G'$ creates a new triangle, $(u_H, w_H, x_L)$, in $G'$ which contains a low-degree vertex $x_L$.
Let $R(y)$ denote the set of vertices in $V$ represented by a vertex $y \in V'$.
Then, there exists a new edge insertion (resp.\ deletion) in $G'$ that
is incident to $x_L$ and creates a new triangle $(u', w', x_L)$
such that $R(u') \cup R(w')  = R(u_H) \cup R(w_H)$.
\end{lemma}

\myparagraph{Incorporating Batching and Parallelism} When
dealing with a batch of updates containing both edge insertions and
deletions, we must be careful when vertices switch from being
high-degree to being low-degree and vice versa.

If we intersperse the
edge insertions with the edge deletions, then there is the possibility that
a vertex switches between low and high-degree multiple times in a
single batch.  Thus, we batch all edge deletions together and perform
these updates first before handling the edge insertions.  After
processing the batch of edge deletions, we must subsequently move any
high-degree vertices that become low-degree to their correct data
structures. After dealing with the edge insertions, we must similarly
move any low-degree vertices that become high-degree to the correct
data structures.  Finally, for triangles that contain more than one
edge update, we must account for potential double counting by
different updates happening in parallel.  Such challenges are
described and dealt with in detail in the full version of our
paper~\cite{fullversion}.
A high-level description of the
algorithm is given in Algorithm~\ref{alg:mmcliquesimple}.

\begin{algorithm}[!t]
  \caption{Simplified parallel matrix multiplication $k$-clique counting algorithm.}\label{alg:mmcliquesimple}

    \begin{algorithmic}[1]
        \Function{Count-Cliques}{$\batchset$}
            \State Update graph $G'$ with $\batchset$ by inserting new $\ell$- and $2\ell$-cliques.
            \State Find the batch of insertions ($\batchset'_I$) and batch of deletions ($\batchset'_D$)
            \Statex \ \ \ \ \ \ into $G'$.
            \State Determine the final degrees of every vertex in $G'$ after
            \Statex \ \ \ \ \ \ performing updates $\batchset'_I$
            and $\batchset'_D$.
            \State $\delta \leftarrow \text{threshold for low-degree vs. high-degree}$.\\
            \Comment{The precise value of $\delta$ is defined in the
            full version of our
            \Statex \ \ \ \ \ \ paper~\cite{fullversion}.}

            \ParFor{$\ins(u, v) \in \batchset'_I, \del(u, v) \in \batchset'_D$}

            \If{either $u$ or $v$ is low-degree (degree $\leq \delta$)

            }

            \State Enumerate all triangles containing $(u, v)$. Let this
            \Statex \ \ \ \ \ \ \ \ \ \ \ \ \ \ \ \ \ \ set be $T$.
            \State By Lemma~\ref{lem:one-low-high}, find all possible
            triangles
            \Statex \ \ \ \ \ \ \ \ \ \ \ \ \ \ \ \ \ \ representing the same triangle
            $t \in T$.
            \State Correct for duplicate counting of triangles.
            \Else
            \State Update $A$ (adjacency matrix for high-degree
            \Statex \ \ \ \ \ \ \ \ \ \ \ \ \ \ \ \ \ \ vertices).
            \EndIf
            \EndParFor
            \State Compute $A^3$. The diagonal provides the triangle counts for 
            \Statex \ \ \ \ \ \ all triangles containing only high-degree
            vertices.
            \State Sum the counts of all triangles.
            \State Correct for duplicate counting of cliques.
        \EndFunction
    \end{algorithmic}
\end{algorithm}

\fi
\section{Experimental Results}\label{sec:exps}

\myparagraph{Experimental Setup} Our experiments are performed on a 72-core
Dell PowerEdge R930 (with two-way hyper-threading) with $4\times 2.4\mbox{GHz}$
Intel 18-core E7-8867 v4 Xeon processors (with a 4800MHz bus and 45MB L3 cache)
and 1\mbox{TB} of main memory. Our programs use a work-stealing
scheduler that we implemented~\cite{BlAnDh20}. The scheduler is implemented
similarly to Cilk for parallelism. Our programs are compiled using
\texttt{g++} (version 7.3.0) with the \texttt{-O3} flag.

\myparagraph{Graph Data}
Table~\ref{table:sizes} lists the graphs that we use.
\defn{com-Orkut} is an undirected graph of the Orkut social
network~\cite{leskovec2014snap}.
\defn{Twitter} is a directed graph of the Twitter network~\cite{kwak2010twitter}.  We
symmetrize the Twitter graph for our experiments.
For some of our experiments which ingest a stream of edge updates, we
sample edges from an rMAT generator~\cite{chakrabarti2004r} with
$a=0.5, b=c=0.1, d=0.1$ to perform the updates. The
update stream can have duplicate edges, and
Table~\ref{table:rmatduplicate} reports the number of unique edges
found in prefixes of various sizes of the rMAT stream that we
generate. The unique edges in the full stream represents the rMAT
graph described in Table~\ref{table:sizes}.

\begin{table}[!t]
\centering
\footnotesize
\begin{tabular}[!t]{l|r|r}
\toprule
{Graph Dataset} & Num. Vertices & Num. Edges \\
\midrule
{Orkut        }    & 3,072,627        &234,370,166   \\
{Twitter        }    & 41,652,231       &2,405,026,092 \\
{rMAT}                    & 16,384           &121,362,232 \\
\end{tabular}
\caption{Graph inputs, including number of vertices and edges.}
\label{table:sizes}
\end{table}

\begin{table}[!t]\footnotesize
\footnotesize
\centering
\begin{tabular}{c|c||c|c}
  \toprule
$m$ & unique edges & $m$ & unique edges \\ 
\midrule
$2\times 10^{6}$ & 1,569,454  & $4\times 10^{8}$  & 55,395,676\\
 $2\times 10^{7}$ & 9,689,644 &    $8\times 10^{8}$ &  74,698,492 \\
  $1\times 10^{8}$  & 27,089,362 &    $3.2\times 10^{9}$ & 121,362,232\\
   $2\times 10^{8}$ & 39,510,764\\
   
\end{tabular}
\caption{Number of unique edges in the first $m$ edges from the \emph{rMAT} generator. }
\label{table:rmatduplicate}
\end{table}

\ifCameraReady
\subsection{Our Implementation}\label{sec:ourimpl}~
\fi
\ifFull
\subsection{Our Implementation}\label{sec:ourimpl}
\fi

\myparagraph{Parallel Primitives}
We implemented a multicore CPU version of our algorithm using the
Graph Based Benchmark Suite (GBBS)~\cite{dhulipala2018theoretically},
which includes a number of useful parallel primitives, including
high-performance parallel sorting, and primitives such as
prefix sum, reduce, and
filter~\cite{JaJa92}. In what follows, a \defn{filter} takes an array $A$
and a predicate function $f$, and returns a new array containing $a
\in A$ for which $f(a)$ is true, in the same order that they appear in
$A$.  Our implementations use the atomic compare-and-swap and
atomic-add instructions available on modern CPUs.

\myparagraph{Implementation}
For $\cT$, we used the concurrent linear probing hash table by Shun
and Blelloch~\cite{shun2014phase}.  For each of the data structures
$\hh$, $\hl$, $\lh$, and $\lowl$, we created an array of size $n$,
storing (possibly null) pointers to hash tables~\cite{shun2014phase}.
For an edge $(u,v)$ in one of the data structures, the value $v$ will
be stored in the hash table pointed to by the $u$'th slot in the
array.  We also tried using hash tables for both levels, but found it
to be slower in practice.
 For deletions, we used the
folklore \emph{tombstone} method. In this method, when an element is
deleted, we mark the slot in the table as a tombstone, which is a
special value. When inserting, we can insert into a tombstone, but we
have to first check until seeing an empty slot to make sure that we are not
inserting a duplicate key.  In the preprocessing phase of the
algorithm, instead of using approximate compaction, we used
filter. To find the last update for duplicate updates, we use
a parallel sample sort~\cite{ShunBlellochFinemanEtAl2012} to sort the
edges first by both endpoints, and then by timestamp. Then we use filter to remove duplicate updates.  When we initialize
the dynamic data structures, a vertex is considered high-degree if it
has degree greater than $2t_1$ and low-degree otherwise.

During minor rebalancing, a vertex only changes its status if its
degree drops below $t_1$ or increases above $t_2$ due to the batch update.
In major rebalancing, we merge our dynamic data structure and the
updated edges into a compressed sparse row (CSR) format graph and use
the static parallel triangle counting algorithm by  Shun and
Tangwongsan~\cite{ShunT2015} to
recompute the triangle count.  We then build a new dynamic data
structure from the CSR graph. We also implement several natural
optimizations which improve performance. To reduce the overhead of
using hash tables, we use an array to store the neighbors of vertices
with degree less than a certain threshold (we used $128$ in our
experiments).  Moreover, we only keep a single entry for
$(u,v)$ and $(v,u)$ in the wedges table $\cT$.

{
\setlength{\tabcolsep}{2pt}
\begin{table}[!t]
\footnotesize
\centering
\begin{tabular}[!t]{p{0.2\columnwidth}|l|c|c|c|c|c}
\toprule
 & & \multicolumn{5}{c}{\textbf{Batch Size}} \\
\textbf{Algorithm} & \textbf{Graph} & \textbf{$2\times 10^{3}$} & \textbf{$2\times 10^{4}$} & \textbf{$2\times 10^{5}$} & \textbf{$2 \times 10^{6}$} & $m$ \\
\specialrule{.2em}{.1em}{.1em}
\multirow{3}{0.2\columnwidth}{Ours (INS)} & Orkut   & 1.90e-3 & 4.76e-3 & 0.0235 & 0.168 & -- \\
& Twitter & \textbf{2.11e-3} & \textbf{7.10e-3} & \textbf{0.0430} & \textbf{0.366} & -- \\
& rMAT    & \textbf{6.42e-4} & \textbf{2.09e-3} & \textbf{8.62e-3} & 0.0618 & -- \\
\hline
\multirow{3}{0.2\columnwidth}{Makkar et al. (INS) \cite{Makkar2017}} & Orkut  & \textbf{9.76e-4} & \textbf{2.69e-3} & \textbf{0.0143} & \textbf{0.0830} & -- \\
& Twitter & time-out & 0.0644&0.437 &3.88 & -- \\
& rMAT    & 1.98e-3&6.90e-3 &0.012 & \textbf{0.0335} & -- \\
\specialrule{.2em}{.1em}{.1em}
\multirow{3}{0.2\columnwidth}{Ours (DEL)} & Orkut  & 1.80e-3&4.37e-3 &0.0189 & 0.124  & -- \\
& Twitter & \textbf{2.14e-3} & \textbf{7.76e-3} & \textbf{0.0486} & \textbf{0.385} & -- \\
& rMAT    & 6.48e-4&2.23e-3 &9.21e-3 &0.0723 & -- \\
\hline
\multirow{3}{0.2\columnwidth}{Makkar et al. (DEL) \cite{Makkar2017}} & Orkut  & \textbf{4.63e-4} & \textbf{1.46e-3} & \textbf{8.12e-3} & \textbf{0.0499} & -- \\
& Twitter & time-out & 0.0597&0.401 & 3.64& -- \\
& rMAT    & \textbf{4.47e-4} & \textbf{1.81e-3} & \textbf{5.12e-3} & \textbf{0.027} & -- \\
\specialrule{.2em}{.1em}{.1em}
\multirow{3}{*}{Static~\cite{ShunT2015}} & Orkut  & -- & -- & -- & -- & 1.027 \\
& Twitter & --  & -- & -- & -- & 32.1 \\
& rMAT    & --  & -- & -- & -- & 14.7 

\end{tabular}
\caption{
Running times (seconds) for our parallel \batchdynamic{} triangle counting
algorithm and Makkar et al.~\cite{Makkar2017}'s algorithm on 72 cores with hyper-threading.
We apply the edges in each graph as batches of edge
insertions (INS) or deletions (DEL) of varying sizes, ranging from $2 \times 10^{3}$ to $2
\times 10^{6}$, and report the average time for each batch size. The update time of Makkar et al. algorithm for Twitter batch size $2 \times 10^{3}$ is missing because the expriment timed out.
We also report the update time for the state-of-the-art static
triangle counting algorithm of Shun and Tangwongsan~\cite{ShunT2015},
which processes a single batch of size $m$. Note that for the Twitter
and Orkut datasets, all of the edges are unique.
However, for the rMAT dataset, batches can have duplicate
edges. For each batch size of each dataset, we list the fastest time in bold.   }
\label{table:ourtimes}
\end{table}
}

\myparagraph{Experiments}
Table~\ref{table:ourtimes} report the parallel running times on varying insertion and deletion batch sizes for our
implementation of our new parallel \batchdynamic{} triangle counting
algorithm designed. For the two graphs based on static
graph inputs (Orkut and Twitter), we generate updates for the
algorithm by representing the edges of the graph as an array, and
randomly permuting them. 
The algorithm is then run using batches of
the specified size. 
For insertions, we start with an empty graph and apply batches from the beginning to the end of the permuted array. For deletions, we start with the full graph and apply batches from the end to the beginning of the permuted array.
The table also reports the running time for the
GBBS implementation of the state-of-the-art static triangle counting
algorithm of Shun and Tangwongsan~\cite{ShunT2015, dhulipala2018theoretically}.

Across varying batch sizes, our algorithm achieves throughputs between 1.05--16.2 million
edges per second for the Orkut graph, 0.935--5.46 million edges per
second for the Twitter graph, and 3.08--32.4 million edges per second
for the rMAT graph. We obtain much higher throughput for the rMAT
graph due to the large number of duplicate edges found in this graph
stream, as illustrated in Table~\ref{table:rmatduplicate}.
We observe that in all cases, the average  time for processing a batch is smaller than
the running time of the static algorithm. The maximum speedup of our
algorithm over the static algorithm is $22709\times$ for the rMAT
graph with a deletion batch of size $2 \times 10^{3}$, but in general our
algorithm achieves good speedups across the entire range of batches
that we evaluate.

Lastly, Figure~\ref{fig:speed-up} shows the parallel speedup of our
algorithm with varying thread-count on the Orkut and Twitter graph, for a fixed
batch size of $2 \times 10^{6}$. Our algorithm achieves a maximum of
$74.73 \times$ speedup using 72 cores with hyper-threading for this experiment.

\begin{figure}[!t]
  \centering
      \includegraphics[width=\columnwidth]{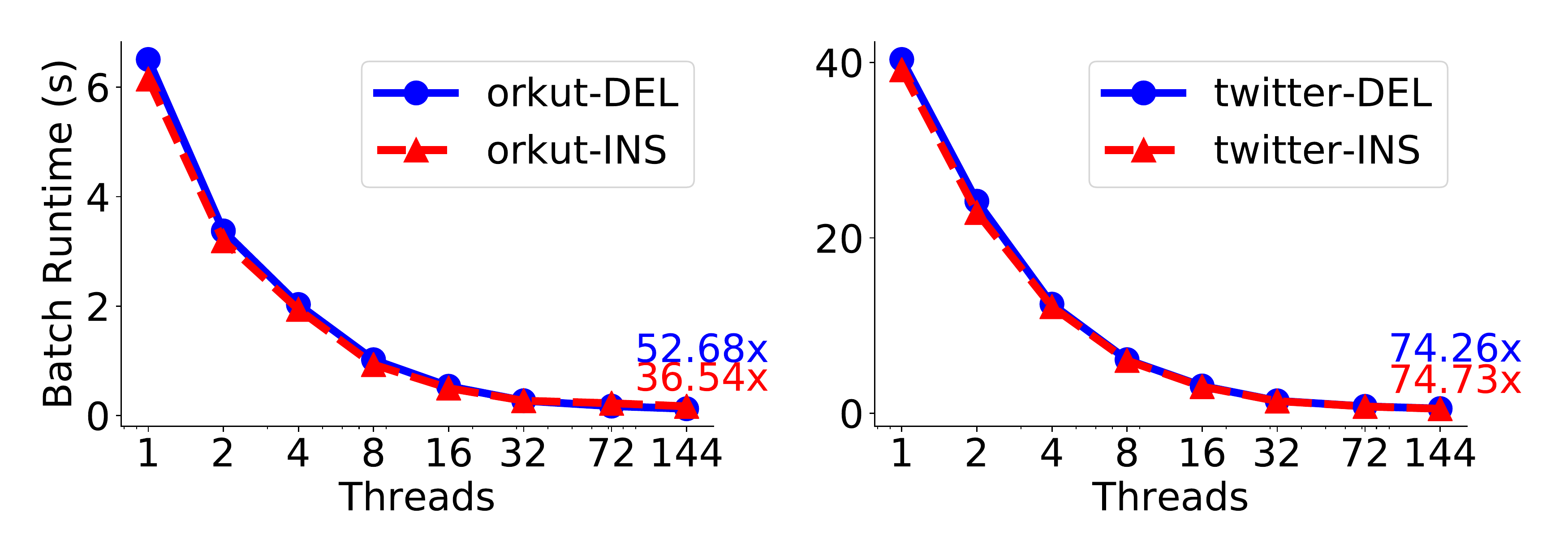}
  \caption{Running times of our parallel batch-dynamic triangle counting algorithm
  with respect to thread count (the $x$-axis is in log-scale) on the Orkut (average time across all batches) and Twitter (running time for the 6th batch) graph
  for both insertion (red dashed line)  and deletion (blue solid line).
  ``144'' indicates 72 cores with hyper-threading. The experiment is
  run with a batch size of $2\times 10^{6}$. The parallel speedup on 144 threads over a single thread is displayed.
  }\label{fig:speed-up}
\end{figure}

\ifCameraReady
\subsection{Comparison with Existing Algorithms}\label{sec:makkar}~
\fi
\ifFull
\subsection{Comparison with Existing Algorithms}\label{sec:makkar}
\fi

\myparagraph{Comparison with Ediger et al} We compared our
implementation with a shared-memory implementation of the Ediger et
al.\ algorithm~\cite{Ediger2010}, which is implemented as part of the
STINGER dynamic graph processing system~\cite{ediger2012stinger}.
Unfortunately, we found that their implementation is much slower than
ours due to bottlenecks in the update time for the underlying dynamic
graph data structure. We note that recent work on streaming graph
processing observed similar results for using
STINGER~\cite{dhulipala2019low}. To obtain a fair comparison, we chose
to focus on implementing a more recent GPU \batchdynamic{}
triangle counting algorithm ourselves, which we discuss next.

\myparagraph{Comparison with Makkar et al}
The Makkar et al.\ algorithm~\cite{Makkar2017} is a state-of-the-art
parallel \batchdynamic{} triangle counting implementation designed for
GPUs. To the best of our knowledge, there is no multicore
implementation of this algorithm, and so in this paper we implement an
optimized multicore version of their algorithm. The algorithm works as follows.
First, their algorithm separates the batch of updates into
batches for insertions and deletions. Then, for each batch of updates,
it creates an \emph{update graph}, $\hat{G}$, for each batch consisting
of only the updates within each batch. Then, it merges the updates
from each batch with the original edges in the graph to create an
updated graph for each of the batches, $G'$. Note that this graph
contains both the edges previously in the graph, as well as the new edges.

The merging process to construct $G'$ first sorts the batch to obtain
sorted lists of neighbors to add/delete from the adjacency lists of
vertices in the graph. Then, the algorithm performs a simple linear-work
procedure to merge each existing adjacency list with the sorted updates.
In particular, doing $t$ edge updates on a vertex with degree $d$ takes $O(d+t)$ work.
Finally, the algorithm counts the triangles by
intersecting the adjacency lists of the endpoints of each edge in the
batch.
For each edge $(u, v)$, they intersect $G'(u)$ with $G'(v)$, $G'(u)$
with $\hat{G}(v)$, and $\hat{G}(u)$ with $\hat{G}(v)$. The count of the
number of triangles can be obtained from the number of intersections
obtained from each of these cases using a simple inclusion-exclusion
formula. They provide a further optimization by only intersecting
\emph{truncated} adjacency lists in some of the cases where a
truncated adjacency list is one where the list only contains
vertices with IDs less than the ID of the vertex that the adjacency list
belongs to. Their algorithm has a worst case work bound of $O(n^2)$.

\myparagraph{Implementation}
We developed a new multicore implementation of the Makkar et al.\
algorithm
using the same parallel primitives and framework described earlier for
the implementation of our algorithm.
We implemented several optimizations
that improved performance. First, we handle vertices with degree
lower than $16$ by storing their incident edges in a special array of
size $16n$, and only allocate memory for vertices with larger degree.
Second, we note that their algorithm does not specify how to handle
redundant insertions that are already present in the graph. We
remove these edge updates by modifying the merge algorithm that
constructs $G'$ from $G$. Specifically, during the merge, if we
identify that a given edge is already present in $G$, we mark it in
the sorted sequence of batch updates that we are merging in.  Removing
these marked updates to construct $\hat{G}$ without redundant updates
is done by using a parallel filter.

\myparagraph{Performance Comparison}
Table~\ref{table:ourtimes} shows the running times of the Makkar et
al.\ algorithm on batches of insertions and deletions of different
sizes. The data points for the Twitter graph are also plotted in
Figure~\ref{fig:twitter}. We observe that the Makkar et al.\ algorithm
is faster than our algorithm on the Orkut graph, especially for large
batches. On the other hand, for the Twitter graph, our algorithm is
consistently faster for both insertions and deletions across all batch
sizes. This is because there are no vertices with very high degree in the Orkut graph, and so the Makkar et al.\ algorithm does less work in merging adjacency lists with updates, while the Twitter graph has vertices with extremely high degree, which are costly to merge.
Both algorithms are significantly faster than simply
applying the  static triangle counting algorithm for the range of batch sizes
that we considered.

\begin{figure}[!t]
  \centering
      \includegraphics[width=\columnwidth]{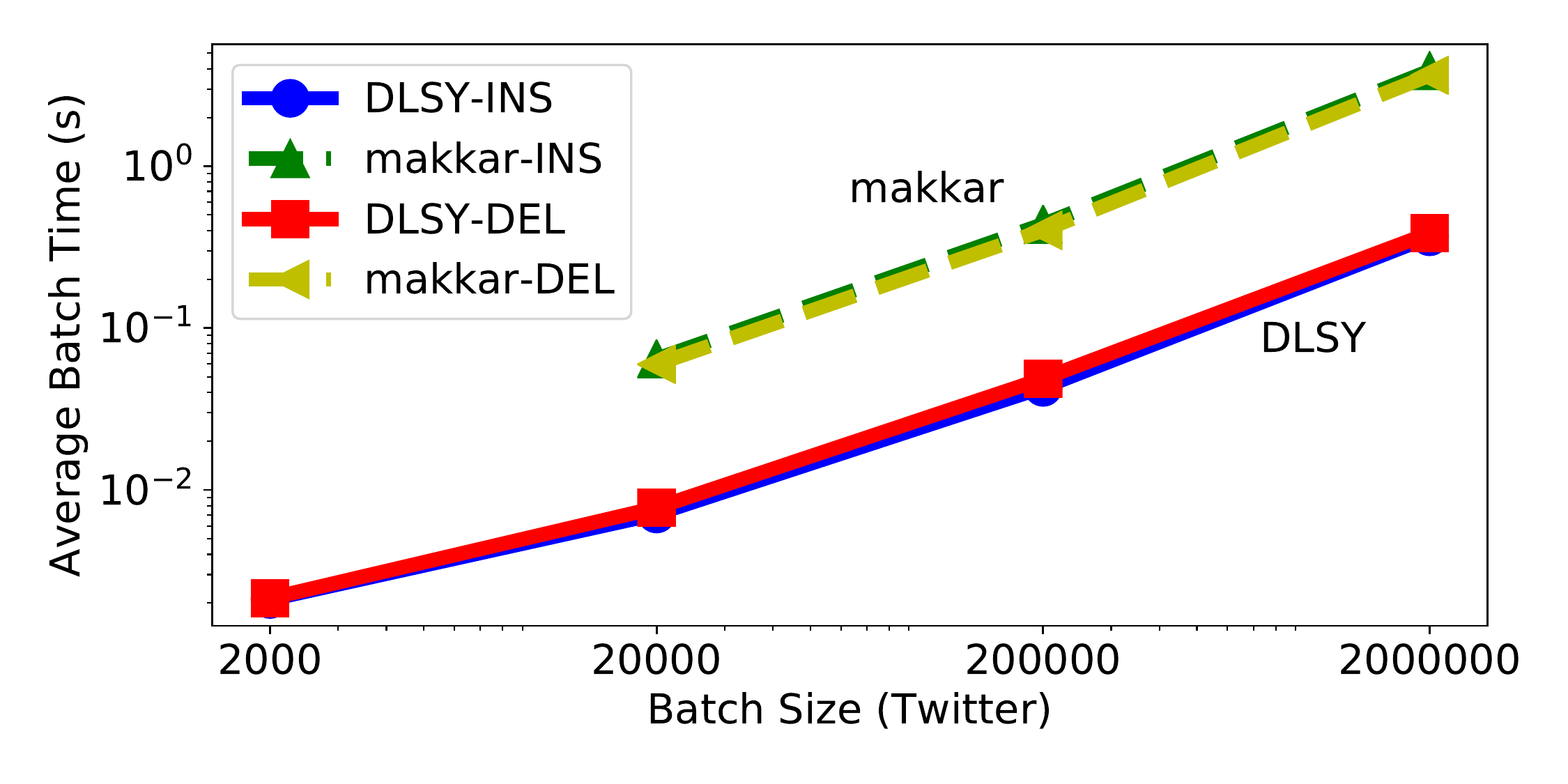}
  \caption{
This figure plots the average insertion and deletion
round times for each batch size (log-log scale) on Twitter using 72 cores with hyper-threading. The plot is in log-log scale. The lines for our algorithm are solid
(blue for insertion and red for deletion) while the lines for Makkar et al.
algorithm are dashed (green for insertion and yellow for
deletion). The update time of Makkar et al. algorithm for Twitter batch size $2
\times 10^{3}$ is missing because the experiment timed out (due to cumulative
runtime being too large).
  }\label{fig:twitter}
\end{figure}

Next, we evaluate the performance of insertion batches in our
algorithm and the Makkar et al.~algorithm on the synthetic rMAT graph
with 3.2 billion generated edges (which have duplicates). This
synthetic experiment allows us to study how both algorithms perform as
the graph becomes more dense. We evaluate the performance for
different insertion batch sizes. The experiment uses prefixes of the
rMAT graph (the number of unique edges per prefix is shown in
Table~\ref{table:rmatduplicate}) to control the density of the graph.
The vertex set in this experiment is fixed, and thus a
larger number of unique edges corresponds to a denser graph.

Figure~\ref{fig:makkar} plots the running time of both
implementations for varying batch sizes as a function
of the graph density. We observe that for small batch sizes, the
performance of the Makkar et al.\ algorithm degrades significantly as
the graph grows more dense and contains more high-degree vertices.
On
the other hand, our algorithm's performance generally does not degrade
as the graph grows denser, across all batch sizes.
We also significantly
outperform the Makkar et al.\ algorithm for small batch sizes.
Specifically, we obtain a maximum speedup of $3.31 \times$ for a batch
of size $2 \times 10^{4}$.
This is because the overhead
of updating of high-degree vertices in the Makkar et al.\ algorithm becomes
relatively higher, as work proportional to the vertex degree must be done regardless of the number of new incident edges.

\begin{figure}
  \centering
      \includegraphics[width=\columnwidth]{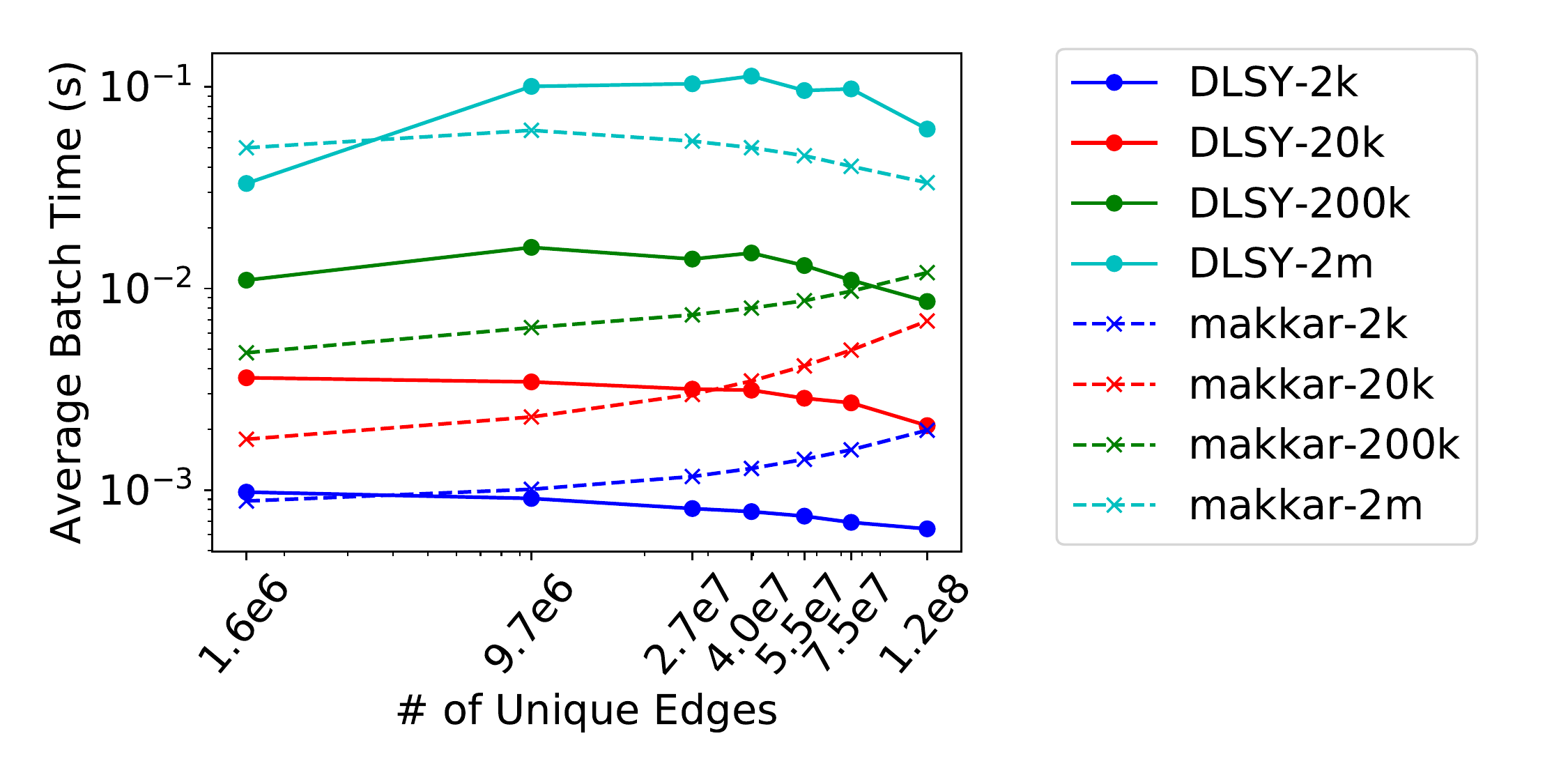}
  \caption{Comparison of the performance of our implementation (DLSY, solid line) and  Makkar et al.\ algorithm~\cite{Makkar2017} (makkar, dotted line) for batches of insertions.
The figure shows the average batch time for different batch sizes on the rMAT graph with varying prefixes of the generated edge stream to control density. The number of unique edges in the prefix is shown on the $x$-axis. The number of vertices is fixed at 16,384.
The dark blue, red, green, and light blue lines are for batches of size  $2\times 10^{3}$,  $2\times 10^{4}$,  $2\times 10^{5}$,
and  $2\times 10^{6}$, respectively.
  We see that our new algorithm is faster for small batches and on denser graphs.
  }\label{fig:makkar}
\end{figure}

\section{Conclusion}

In this paper, we have given new dynamic algorithms for the $k$-clique
problem.  We study this fundamental problem in the 
\batchdynamic{} setting, which is better suited for parallel hardware
that is widely available today, and enables dynamic algorithms to
scale to high-rate data streams.  We have presented a work-efficient
parallel \batchdynamic{} triangle counting algorithm.
We also gave a simple, enumeration-based algorithm for maintaining the
$k$-clique count.  In addition, we have presented a novel parallel
\batchdynamic{} $k$-clique counting algorithm based on fast matrix
multiplication, which is asymptotically faster than existing dynamic
approaches on dense graphs.
Finally, we provide a multicore implementation of
 our parallel
 \batchdynamic{} triangle counting algorithm and compare it with state-of-the-art implementations that have weaker theoretical guarantees, showing that
 our  algorithm is competitive in practice.

\ifFull
\fi

\myparagraph{Acknowledgements}
We thank Josh Alman, Nicole Wein, and Virginia Vassilevska Williams
for helpful discussions on various aspects of our paper.  We also
thank anonymous reviewers for their helpful suggestions.  This
research was supported by DOE Early Career Award \#DE-SC0018947, NSF
CAREER Award \#CCF-1845763, Google Faculty Research Award, DARPA SDH
Award \#HR0011-18-3-0007, and Applications Driving Architectures (ADA)
Research Center, a JUMP Center co-sponsored by SRC and DARPA.

\bibliographystyle{alpha}
\bibliography{ref}

\ifFull
\appendix
\section{Sequential Fully Dynamic Triangle Counting of~\cite{KNNOZ19}}\label{app:triangle}

Here, we present the sequential fully dynamic triangle counting
algorithm of Kara et al.~\cite{KNNOZ19} that operates in $O(m)$ space,
$O(\sqrt{m})$ amortized work per edge update, and $O(m^{3/2})$ work
for preprocessing. This algorithm returns the exact count of the
number of triangles in an undirected graph under both edge insertions
and deletions.
Kara et al.~\cite{KNNOZ19}
present their algorithm for directed $3$-cycles using relational database terminology (where each edge in the triangle may be drawn from a different relation),
but we simplify their algorithm for the case of undirected
 graphs. Kara et al.~\cite{KNNOZ19} prove the following theorem.

\begin{theorem}[Fully Dynamic Triangle Counting~\cite{KNNOZ19}]\label{thm:knnoz19}
There exists a sequential algorithm to count the number of triangles
in an undirected graph $G = (V, E)$ using $O(m^{3/2})$ preprocessing
work that can handle an edge update in $O(\sqrt{m})$
amortized work and $O(m)$ space.
\end{theorem}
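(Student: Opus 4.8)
The plan is to establish correctness together with the three resource bounds (preprocessing work, amortized update work, and space) for the degree-thresholded structure described in Section~\ref{sec:seq}. Fix $M=2m+1$, $t_1=\sqrt{M}/2$, $t_2=3\sqrt{M}/2$, and maintain the four edge-stores $\hh,\hl,\lh,\lowl$, the wedge-store $\cT$, and the degree array $\degarray$. The key structural fact I would record first is that there are only $O(\sqrt{m})$ high-degree vertices: each has degree at least $t_2=\Omega(\sqrt{m})$ while $\sum_v \deg(v)=2m$. This bounds the number of high-high pairs, and hence the number of nonzero entries of $\cT$, by $O(m)$; combined with the $O(m)$ edges held across the edge-stores this gives the $O(m)$ space bound.

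For correctness of an update $\ins(u,v)$ (deletions are symmetric), I would argue that every triangle $(u,v,w)$ containing the updated edge falls into exactly one of three disjoint, exhaustive cases, each handled by the structure in $O(\sqrt{m})$ work: (i) if $u$ or $v$ is low-degree, enumerate the $O(\sqrt{m})$ neighbors of the low-degree endpoint and test the third edge in $O(1)$; (ii) if $u,v,w$ are all high-degree, enumerate the $O(\sqrt{m})$ high-degree neighbors of $u$ via $\hh$ and test $(v,w)$; (iii) if $u,v$ are high-degree but $w$ is low-degree, read $\cT(u,v)$ in $O(1)$. Since a single edge is updated, each affected triangle contains that edge exactly once, so the count is incremented (decremented) directly with no double counting. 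Repairing $\cT$ after the edge change is needed only when exactly one endpoint is low and the other high, and is done by scanning the $O(\sqrt{m})$ neighbors of the low endpoint, again $O(\sqrt{m})$. For preprocessing, the static count is computable in $O(\arb m)=O(m^{3/2})$ work, and I would initialize $\cT$ by charging, for each low-degree $w$, the square of its number of high-degree neighbors; bounding one factor by the global count $O(\sqrt{m})$ of high-degree vertices and summing the other factor against $\sum_w \deg(w)=O(m)$ yields $O(m^{3/2})$ in total.

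The main obstacle is the amortized analysis of rebalancing, so that is where I would concentrate the argument. For \emph{minor} rebalancing, when a vertex $v$ crosses $t_2$ (or drops below $t_1$) it must change status: its $O(\sqrt{m})$ incident edges migrate between edge-stores, and, more expensively, $\cT$ must be updated for the new (or removed) high-high pairs involving $v$ by scanning each low-degree neighbor of $v$ together with that neighbor's $O(\sqrt{m})$ high-degree neighbors, for $O(m)$ worst-case cost. The crucial observation is that the gap $t_2-t_1=\Theta(\sqrt{m})$ forces $\Omega(\sqrt{m})$ updates incident to $v$ between consecutive status changes, so the $O(m)$ reclassification cost amortizes to $O(\sqrt{m})$ per update. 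For \emph{major} rebalancing, once the edge count leaves the window $[M/4,M]$ the thresholds are reset and all structures rebuilt from scratch in $O(m^{3/2})$ work; since this requires $\Omega(m)$ updates, it too amortizes to $O(\sqrt{m})$.

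Combining the per-update counting cost, the $\cT$-maintenance cost, and both amortized rebalancing terms gives the claimed $O(\sqrt{m})$ amortized work per update, which together with the $O(m^{3/2})$ preprocessing bound and the $O(m)$ space bound completes the proof. The two subtleties I expect to need the most care are (a) verifying that the three triangle cases above are genuinely exhaustive and disjoint so that no triangle incident to the updated edge is missed or counted twice, and (b) making the charging argument for minor rebalancing rigorous, in particular that every status flip is separated from the previous one by $\Omega(\sqrt{m})$ incident updates even as $M$, and hence the thresholds, drift slowly between major rebalances.
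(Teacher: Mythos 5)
Your proposal is correct and follows essentially the same route as the paper's own appendix treatment of the Kara et al.\ algorithm: the same degree-thresholded edge-stores $\hh,\hl,\lh,\lowl$, the same wedge table $\cT$ and degree array, the same case analysis for counting (your three cases are just a clean regrouping of the paper's four), the same $O(m^{3/2})$ initialization charging for $\cT$, and the same amortization of minor rebalancing against the $t_2-t_1=\Theta(\sqrt{m})$ gap and of major rebalancing against $\Omega(m)$ updates. The only subtlety you flag that needs comment, point (b), is simpler than you fear: the thresholds $t_1,t_2$ are \emph{fixed} between major rebalances (they are reset only when the edge count leaves $[M/4,M]$, at which point all structures and classifications are rebuilt from scratch), so each status flip of a vertex is indeed separated from the previous one by $\Omega(\sqrt{m})$ incident updates with no drift to account for.
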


We now explain the fully dynamic triangle counting algorithm
of~\cite{KNNOZ19} in greater detail.

Given a graph $G= (V, E)$ with $n = |V|$ vertices and $m = |E|$ edges,
we initialize the following variables: $M = 2m+1$, $t_1 = \sqrt{M}/2$,
and $t_2 = 3\sqrt{M}/2$.  We define a vertex to be \defn{low-degree}
if its degree is at most $t_1$ and \defn{high-degree} if its degree
is at least $t_2$. Vertices with degree in between $t_1$ and $t_2$ can
be classified either way.  Let $C$ be the current count of the number
of triangles in the graph. We compute the initial count of the number
of triangles in the input graph $G$ using a static triangle counting
algorithm~\cite{IR77} in $O(m^{3/2})$ work and $O(m)$ space. Thus, we
immediately have a preprocessing work of $O(m^{3/2})$.

We create four data structures $\hh$, $\hl$, $\lh$, and $\lowl$.
$\hh$ stores all of the edges $(u, v)$ where both $u$ and $v$ are
high-degree, $\hl$ stores edges $(u,v)$, where $u$ is high-degree and
$v$ is low-degree, $\lh$ stores the edges $(u, v)$ where $u$ is
low-degree and $v$ is high-degree, and $\lowl$ stores edges where both
$u$ and $v$ are low-degree. With our data structures, the
following operations are supported:

\begin{enumerate}
\item Given a vertex $v$, determine whether it is low-degree or high-degree in $O(1)$ work.
\item Given an edge $(u, v)$, check if it is in $\hh$, $\hl$, $\lh$, or $\lowl$ in $O(1)$ work.
\item Given a vertex $v$, return all neighbors of $v$ in $\hh$, $\hl$, $\lh$, and $\lowl$ in $O(\deg(v))$ work.
\item Given an edge $(v, w)$ to insert or delete, update $\hh$, $\hl$, $\lh$, or $\lowl$ in $O(1)$ work.
\end{enumerate}

We can implement $\hh$, $\hl$, $\lh$, and $\lowl$ to support these
operations by using a two-level hash table for each of these
structures and an additional array $\degarray$. $\degarray$ is a
dynamic hash table containing a key for each vertex that has non-zero
degree and stores the degree of the vertex as the value. The data
structures support insertions and deletions in $O(1)$ work.
$\degarray$ can be initialized in $O(m)$ work by scanning over all
vertices and computing their degree.  $\hh$, $\hl$, $\lh$, and $\lowl$
can be initialized in $O(m)$ work by scanning over all edges and
inserting them into the right table based on the degrees of their
endpoints.

We maintain one additional data structure $\cT$ that counts the number
of wedges $(u, w, v)$, where $u$ and $v$ are high-degree vertices and
$w$ is a low-degree vertex. $\cT$ has the property that given an edge
insertion or deletion $(u, v)$ where both $u$ and $v$ are
high-degree vertices, it returns the number of such wedges $(u, w, v)$
where $w$ is low-degree that $u$ and $v$ are part of in $O(1)$
work. We can implement this via a hash table indexed by pairs of
high-degree vertices that stores the number of wedges for each
pair.
$\cT$ can be initialized in
$O(m^{3/2})$ work by iterating over all edges $(u, w)$ in $\hl$ and
then for each $w$, iterating over all edges $(w, v)$ in $\lh$ to
determine whether $v$ is high-degree, and if so then increment
$T(u, v)$ by $1$.  There are $O(m)$ edges $(u,w)$ in $\hl$, and for
each $w$ there are at most $O(\sqrt{m})$ edges $(w,v)$ in $\lh$ since $w$ is
low-degree. Each lookup and increment takes $O(1)$ work, giving an
overall work of $O(m^{3/2})$.

\ifCameraReady
\subsection{Update Procedure~\cite{KNNOZ19}}\label{app:triangle-updates}~
\fi
\ifFull
\subsection{Update Procedure~\cite{KNNOZ19}}\label{app:triangle-updates}
\fi

The procedure for handling single edge updates in the sequential
setting given by~\cite{KNNOZ19} as follows:

For an edge insertion (resp.\ deletion) $(u, v)$, we first find the
degree of $u$ and $v$ in $\degarray$ and then look up the edge in
their respective tables $\hh$, $\hl$, $\lh$, or $\lowl$.  If the edge
already exists (resp.\ does not exist) in the table, nothing else is
done. Otherwise, we need to find all tuples $(u, w, v)$ such that $(v,
u)$ and $(u, w)$ already exist in the graph because for each such
tuple, a new triangle will be formed (resp.\ an existing triangle will be
deleted). We first update the triangle count, and then we update the data
structures. For updating the triangle count $C$, there are $4$ different
cases for such tuples, and so we check each of the following cases:

\begin{enumerate}
    \item \textbf{$(u, w)$ is in $\hh$ and $(w, v)$ is in $\high y$ where $y \in \{\high, \low\}$}: We extract all high-degree neighbors of $u$ in $\hh$. Given that the degree of all high-degree vertices is $\Omega(\sqrt{m})$, there are at most $O(\sqrt{m})$ such vertices. For each of these neighbors, we can check in $O(1)$ work for each $w$ whether $(w, v)$ exists in $\high y$. This takes $O(\sqrt{m})$ work.
    \item \textbf{$(u, w)$ is in $\hl$ and $(w,v)$ is in $\lh$ where $y \in \{\high, \low\}$}: Since both $u$ and $v$ are high-degree in this case, we perform an $O(1)$ work lookup in $\cT$ for the count of the number of wedges $(u, w, v)$ in this case.
    \item \textbf{$(u, w)$ is in $\lh$ and $(w, v)$ is in $\high y$ where $y \in \{\high, \low\}$}: Scan through the neighbors of $u$ in $\lh$. For each neighbors of $u$, check whether $(w, v)$ exists in $\high y$. This takes $O(\sqrt{m})$ work since $u$ has low-degree.
    \item \textbf{$(u, w)$ is in $\lowl$ and $(w, v)$ is in $\low y$ where $y \in \{\low, \high\}$}: Again, scan through the neighbors of $u$ in $\lh$. For each neighbors of $u$, check whether $(w, v)$ exists in $\low y$. This takes $O(\sqrt{m})$ work since $u$ has low-degree.
\end{enumerate}

After updating the triangle count, we proceed with updating the data
structures with the edge insertion (resp.\ deletion).

We first update $\cT$ given an edge insertion (resp.\ deletion) $(u, v)$ as follows:

\begin{enumerate}
\item If $u$ is high-degree and $v$ is low-degree, then we find all of $v$'s neighbors in $\lh$ and for each such neighbor $x$, we increment (resp.\ decrement) the entry $\cT(u, x)$ by $1$. It takes $O(\sqrt{m})$ work to perform this update since $v$ is low-degree.
\item If $u$ is low-degree and $v$ is high-degree, then we scan through all vertices in $\hl$ and for each vertex $x$ in $\hl$ that has $u$ as a neighbor, we increment (resp.\ decrement) $\cT(x, v)$ by $1$. This takes $O(\sqrt{m})$ work since there are at most $O(\sqrt{m})$ high-degree vertices.
\end{enumerate}

In addition to the updates to $\cT$, we also insert (resp. delete) $(u, v)$ into $\hh$, $\hl$, $\lh$, and $\lowl$ depending on the degrees of $u$ and $v$, and update $\degarray$.
For a given edge $(u, v)$
insertion (resp.\ deletion), we first determine whether $u$ and $v$
are low-degree or high-degree by looking in $\degarray$ for $u$ and
$v$ in $O(1)$ work. $\hh$, $\hl$, $\lh$, and $\lowl$ are constructed
as hash tables keyed by first the first vertex in the edge tuple and
then the second vertex in the edge tuple with pointers to second-level
hash tables storing the neighbors of that particular vertex. If $u$ is
high-degree, then the edge is inserted (resp. deleted) into $\hh$ or
$\hl$ (depending on whether $v$ is low or high-degree) using $u$ as
the key and adding $v$ to the second level hash table. Similarly, if
$u$ is low-degree, $(u, v)$ is inserted (resp. deleted) into $\lh$ or
$\lowl$. Furthermore, $(v, u)$ is also inserted into its respective
table depending on whether $v$ is low or high-degree. The entries for
$u$ and $v$ in $\degarray$ are then incremented (resp. decremented) in
$\degarray$. The updates to these data structures take $O(1)$ work.

We also have to deal with the cases where the degree classification of
vertices have changed or the number of edges has changed by too much
that the values of $M$, $t_1$, and $t_2$ need to be updated.  This is
described in the next section.

\ifCameraReady
\subsection{Rebalancing~\cite{KNNOZ19}}\label{sec:rebalancing}~
\fi
\ifFull
\subsection{Rebalancing~\cite{KNNOZ19}}\label{sec:rebalancing}
\fi

We now describe the rebalancing procedure given in~\cite{KNNOZ19} when
a low-degree vertex becomes a high-degree vertex (or vice versa) and
when too many updates have been applied (and all the data structures
must be changed according to the new values of $M$, $t_1$, and $t_2$).

\myparagraph{Minor rebalancing} This type of rebalancing occurs if a vertex which was previously high-degree has its degree fall below $t_1$ or if a vertex that was previously low-degree has its degree increase above $t_2$. In the first case, we move the vertex and all its edges from $\hh$ to $\hl$, and from $\lh$ to $\lowl$. In the second case, we move the vertex and all its edges from $\hl$ to $\hh$, and from $\lowl$ to $\lh$. Since our data structures support additions and deletions of an edge in $O(1)$ work, and since the degree of $v$ is $\Theta(\sqrt{m})$ at this point, we perform $\Theta(\sqrt{m})$ updates. We showed in Section~\ref{app:triangle-updates} that updates take $O(\sqrt{m})$ work so we take $O(m)$ work overall for a minor rebalancing. However, $\Omega(\sqrt{m})$ updates must have occurred on this vertex before we have to perform minor rebalancing since $t_2-t_1 = \Theta(\sqrt{m})$,
and so we can amortize this cost over the $\Omega(\sqrt{m})$ updates,
resulting in $O(\sqrt{m})$ amortized work per update.

\myparagraph{Major rebalancing} A major rebalancing occurs when $m$, the number of edges in the graph, falls outside the range $[M/4, M]$. We simply reinitialize the data structures as in the original algorithm. Major rebalancing can only occur after $\Omega(M)$ updates, and so we can afford to re-initialize our data structure and recompute the triangle count from scratch using an $O(m^{3/2})$ work triangle counting algorithm. The amortized work of major rebalancing over $\Omega(m)$ updates is then $O(\sqrt{m})$.

\fi

\end{document}